 \newcommand{\lyxaddress}[1]{
   \par {\raggedright #1
   \vspace{1.4em}
   \noindent\par}
 }
\newtheorem{thm}{Theorem}
\newtheorem{prop}[thm]{Proposition}
\newtheorem{lem}[thm]{Lemma}
\newtheorem{cor}[thm]{Corollary}
\newtheorem{athm}{Theorem}[section]
\newtheorem{aprop}[athm]{Proposition}
\newtheorem{alem}[athm]{Lemma}
\theoremstyle{remark}
\newtheorem*{rem*}{Remark}
\newtheorem*{rems*}{Remarks}
\theoremstyle{definition}
\newtheorem{define}[thm]{Definition}
\newtheorem*{def*}{Definition}
\newcommand{\sD}{\mathscr{D}}
\newcommand{\sB}{\mathscr{B}}
\newcommand{\sH}{\mathscr{H}}
\newcommand{\sK}{\mathscr{K}}
\newcommand{\sS}{\mathscr{S}}
\newcommand{\bx}{\mathbf{x}}
\newcommand{\bp}{\mathbf{p}}
\renewcommand{\AA}{\mathcal{A}}
\newcommand{\BB}{\mathcal{B}}
\newcommand{\CC}{\mathcal{C}}
\newcommand{\EE}{\mathcal{E}}
\newcommand{\HH}{\mathcal{H}}
\newcommand{\R}{\mathbb{R}}
\newcommand{\Z}{\mathbb{Z}}
\newcommand{\N}{\mathbb{N}}
\newcommand{\C}{\mathbb{C}}
\newcommand{\I}{\mathbb{I}}
\newcommand{\gX}{\mathfrak{X}}
\newcommand{\gS}{\mathfrak{S}}
\newcommand{\gx}{\mathfrak{x}}
\newcommand{\Dom}{\mathop\mathrm{Dom}\nolimits}
\newcommand{\Ker}{\mathop\mathrm{Ker}\nolimits}
\newcommand{\Spec}{\mathop\mathrm{Spec}\nolimits}
\newcommand{\Span}{\mathop\mathrm{span}\nolimits}
\newcommand{\Ran}{\mathop\mathrm{Ran}\nolimits}
\newcommand{\dist}{\mathop\mathrm{dist}\nolimits}
\renewcommand{\Im}{\mathop\mathrm{Im}\nolimits}
\newcommand{\ad}{\mathop\mathrm{ad}\nolimits}
\newcommand{\ii}{\mathrm{i}}
\newcommand{\dd}{\mathrm{d}}
\begin{document}

\title{On the stability of periodically time-dependent quantum systems}

%\date{{}}

\author{P.~Duclos$^{1}$, E.~Soccorsi$^{1}$,
  P.~\v{S}\v{t}ov\'\i\v{c}ek$^{2}$, M.~Vittot$^{1}$}

\maketitle
%\vskip 0pt

\lyxaddress{$^{1}$ Centre de Physique th\'eorique de Marseille UMR
  6207 - Unit\'e Mixte de Recherche du CNRS et des Universit\'es
  Aix-Marseille I, Aix-Marseille II et de l' Universit\'e du Sud
  Toulon-Var - Laboratoire affili\'e \`a la FRUMAM}

\lyxaddress{$^{2}$Department of Mathematics, Faculty of Nuclear
  Science, Czech Technical University, Trojanova 13, 120 00 Prague,
  Czech Republic}

\begin{abstract}\noindent
  The main motivation of this article is to derive sufficient
  conditions for dynamical stability of periodically driven quantum
  systems described by a Hamiltonian $H(t)$, i.e., conditions under
  which it holds
  $\sup_{t\in\R}|\langle\psi_{t},H(t)\psi_{t}\rangle|<\infty$ where
  $\psi_{t}$ denotes a trajectory at time $t$ of the quantum system
  under consideration. We start from an analysis of the domain of the
  quasi-energy operator. Next we show, under certain assumptions, that
  if the spectrum of the monodromy operator $U(T,0)$ is pure point
  then there exists a dense subspace of initial conditions for which
  the mean value of energy is uniformly bounded in the course of time.
  Further we show that if the propagator admits a
  \textit{differentiable} Floquet decomposition then $\|H(t)\psi_t\|$
  is bounded in time for any initial condition $\psi_0$, and one
  employs the quantum KAM algorithm to prove the existence of this
  type of decomposition for a fairly large class of $H(t)$. In
  addition, we derive bounds uniform in time on transition
  probabilities between different energy levels, and we also propose
  an extension of this approach to the case of a higher order of
  differentiability of the Floquet decomposition. The procedure is
  demonstrated on a solvable example of the periodically
  time-dependent harmonic oscillator.
\end{abstract}

%%%%%%%%%%%%%%%%%%%%%%%
\section{Introduction}%
%%%%%%%%%%%%%%%%%%%%%%%

We discuss several topics related to the dynamical properties of
periodically time-dependent quantum systems. Such a system is
described by a Hamiltonian $H(t)$ in a Hilbert space $\sH$ depending
on $t$ periodically with a period $T$, and we suppose that the
propagator $U(t,s)$ associated to the Hamiltonian $H(t)$ exists.

We start our exposition from an analysis of the domain of the Floquet
Hamiltonian (the quasi-energy operator). The quasi-energy operator is
a basic tool in the theory of time-dependent quantum systems and is
closely related to the monodromy operator $U(T,0)$
\cite{howland79:_scatter_hamiltonians_periodic,
  yajima77:_scatter_schroedinger_eqs_periodic}. This is a common
belief that the dynamical properties are essentially determined by the
spectral properties of $U(T,0)$. It is shown in
\cite{enss_veselic83:_bound_propag_states_time_dep_hamil} that $\psi$
belongs to $\sH^{pp}(U(T,0))$ (the subspace in $\sH$ corresponding to
the pure point spectrum of $U(T,0)$) if and only if the trajectory
$\{\psi_t;\textrm{~}t\geq0\}$ is precompact (where
$\psi_t=U(t,0)\psi$). Under the assumptions that $H(0)$ is positive,
discrete and unbounded, and that the perturbation $H(t)-H(0)$ is
uniformly bounded, it is observed in
\cite{deoliveira95:_some_rems_stabil_nonstationary_qs} that if the
mean value of energy, $\langle\psi_t,H(t)\psi_t\rangle$, is bounded
then the corresponding trajectory $\{\psi_t;\textrm{~}t\geq0\}$ is
precompact. Jointly this implies that if the mean value of energy is
bounded for any initial condition then $U(T,0)$ has a pure point
spectrum. To our knowledge, the inverse implication is not clarified
yet. In the present paper we show, under certain assumptions, that if
the spectrum of $U(T,0)$ is pure point then there exists, in $\sH$, a
dense subspace of initial conditions for which the mean value of
energy is bounded. However it has been shown very recently in
\cite{deoliveira_simsen07} that there exist situations when some
trajectories may lead to unbounded energy in spite of pure pointness
of $U(T,0)$.

There is no doubt that the knowledge of evolution of the mean value of
energy in the case of time-dependent systems is important from the
physical point of view. This is also our basic topic in this paper.
More precisely, instead of treating directly the mean value of energy
we consider the quantity $\|H(t)\psi_t\|$. Naturally, this type of
problems attracted attention in the past though the results are less
numerous than one might expect. Let us mention some of them that
motivated us though in no way we attempt to provide an exhaustive
list.

Assuming a growing gap structure of the spectrum of $H(0)$ it is shown
in \cite{nenciu97:_adiabat_th_stabil_increasing_gaps} with the aid of
adiabatic methods that
$\langle\psi_{t},H(t)\psi_{t}\rangle=O(t^{\delta})$ where $\delta>0$
is inversely proportional to the order of differentiability of $H(t)$.
An upper bound of this type is also derived in
\cite{joye96:_upper_bounds_energy_time_depend} under rather mild
assumptions on the gap structure of the spectrum and without
differentiability of $H(t)$. On the other hand, the latter result is
directly applicable only provided the perturbation is in certain sense
small when compared to $H(0)$. For example, in the case of simple
spectrum the operator $H(0)^q(H(t)-H(0))$ is required to be
Hilbert-Schmidt for some $q\geq1/2$. Some extensions and applications
can be also found in
\cite{barbaroux_joye98:_expect_values_timedep_qs}. These estimates on
the growth of energy were derived without assuming the periodicity.
Let us also mention \cite{duclos_lev_ps07} where bounds on the energy
growth are derived in the case of shrinking gaps in the spectrum.

A stronger result is known for periodically time-dependent systems
\cite{asch_duclos_exner98:_stability_growing_gaps_wannier}. It
suggests that for a large class of periodic systems one can expect
uniform boundedness of the mean value of energy for any initial
condition $\psi\in\Dom H(0)$. Further, in \cite{wang07} the energy is
shown to be uniformly bounded in time in the particular case when the
harmonic oscillator is driven by quasi-periodically time dependent
Gaussian potentials for suitable non resonant frequencies and a small
enough coupling constant. It is proposed in
\cite{debievre_forni98:_transport_kicked_quasiperiod_hamil} to call
this property \emph{dynamical stability}. We adopt this terminology in
the current paper.

Though the ideas concerning the dynamical stability are developed in
\cite{asch_duclos_exner98:_stability_growing_gaps_wannier} on a
particular example of the driven ring it is indicated there that they
are valid also under more general settings. The proof is based on two
observations. First, if the propagator admits a differentiable Floquet
decomposition in the sense that it can be written in the form
$U(t)=U_{F}(t)\exp(-\ii tH_{F})$ where $H_{F}$ is self-adjoint and
$U_{F}(t)$ is a periodic and strongly differentiable family of unitary
operators then the system is dynamically stable.  According to the
second observation one can use the quantum KAM
(Kolmogorov-Arnold-Moser) algorithm to show that the propagator
actually admits this type of decomposition in the case when $H(0)$ is
a semi-bounded discrete operator obeying a gap condition, and provided
the frequency is non-resonant and the time-dependent perturbation is
sufficiently small. In particular, the result in
\cite{asch_duclos_exner98:_stability_growing_gaps_wannier} is based on
a formulation of the quantum KAM theorem presented in
\cite{duclos_ps96:_floquet_hamiltonian_pure_point}.

In the current paper we wish to further develop the basic ideas from
\cite{asch_duclos_exner98:_stability_growing_gaps_wannier} and
particularly to work out the proofs in full detail when considering
applications of these ideas to more general systems. In addition we
derive uniform bounds on transition probabilities between different
energy levels. Moreover, we propose an extension to the case when the
Floquet decomposition is $p$ times continuously differentiable in the
strong sense. Restricting the perturbation $V(t)=H(t)-H(0)$ to a
certain class of operator-valued functions by requiring the multiple
commutators with $H(0)$ to be bounded up to some order one can show
that $\|H(t)^p\psi_t\|$ is bounded in time. Furthermore, the basic
procedure is demonstrated on the solvable example of the periodically
time-dependent harmonic oscillator. For the purposes of this example
we collect in the Appendix some useful formulas for the propagator.
Finally we combine the procedure based on the differentiability of the
Floquet decomposition with an improved version of the quantum KAM
theorem that was presented in
\cite{duclos_lev_ps_vittot02:_weakly_regular_floquet_hamil}.

\section{The Floquet Hamiltonian}
\label{sec:floquet_hamiltonian}

Let us make more precise the assumptions on the Hamiltonian. Let
$\{H(t);\textrm{~}t\in\R\}$ be a family of self-adjoint operators such
that the domain $\Dom(H(t))$ does not depend on time. Further we
assume that the propagator $U(t,s)$ associated to $H(t)$ exists. This
means that $U(t,s)$ is a function with values in $\sB(\sH)$ which is
strongly continuous jointly in $t$ and $s$, $U(t,t)=\I$, the domain
$\Dom(H(0))$ is invariant under the action of $U(t,s)$ for all $t$,
$s$, and
\begin{displaymath}
  \forall\psi\in\Dom(H(0)),\textrm{~}\ii\,\partial_{t}U(t,s)\psi
  = H(t)\,U(t,s)\psi.
\end{displaymath}
Then the propagator is unique, unitary and satisfies the
Chapman-Kolmogorov equation: $U(t,r)U(r,s)=U(t,s)$ for all
$t,r,s\in\R$.

Let us recall that usually one imposes a standard sufficient condition
that guarantees the existence of the evolution operator. Namely, if
the mapping
\[
t,s\mapsto\frac{1}{t-s}\,\left((H(t)+\ii)(H(s)+\ii)^{-1}-\I\right)
\]
can be extended for $t=s$ to a strongly continuous mapping
$\R^{2}\to\sB(\sH)$ then the propagator exists
\cite{reed_simon75:methods_modern_mf_II}. For more general sufficient
conditions one can consult the monographs
\cite{simon71:_hamiltonians_as_quadratic_forms} and
\cite{krein71:_linear_diff_eqs_banach_space}. But as already stated,
we assume directly the existence of the propagator without bothering
about particular hypotheses that guarantee it.

Since the Hamiltonian $H(t)$ is assumed to be $T$-periodic the same
is true for the propagator. This means that
\begin{equation}
  \label{eq:U_period}
  \forall t,s,\textrm{~}U(t+T,s+T)=U(t,s).
\end{equation}
Notice also that by the closed graph theorem the operator
$H(t)\left(H(0)+\ii\right)^{-1}$ is bounded. In addition, in this
section we impose the following two assumptions:
\begin{eqnarray}
  & & \R\ni t\mapsto\|H(t)\left(H(0)+\ii\right)^{-1}\|
  \textrm{~is locally bounded,}
  \label{eq:Ht_locbound}\\
  & & \forall\psi\in\Dom(H(0)),\textrm{~}
  \R\ni t\mapsto\|H(0)U(t,0)\psi\|
  \textrm{~is locally square integrable.}
  \label{eq:HUtpsi_locbound}
\end{eqnarray}
In fact, hypothesis (\ref{eq:Ht_locbound}) means that
$H(t)\left(H(0)+\ii\right)^{-1}$ is bounded uniformly in $t$ since we
are considering the periodic case.

An important tool when investigating time dependent quantum systems is
the Floquet Hamiltonian (also called the quasi-energy operator)
\cite{howland79:_scatter_hamiltonians_periodic,
  yajima77:_scatter_schroedinger_eqs_periodic}. It acts in the Hilbert
space
\[
\sK = L^{2}([\,0,T\,],\sH,\dd t)
\equiv L^{2}([\,0,T\,],\dd t)\otimes\sH.
\]
If convenient we shall regard the elements of $\sK$ as $T$-periodic
vector-valued functions on $\R$ with values in $\sH$. A unique Floquet
Hamiltonian is associated to any strongly continuous propagator via
the Stone theorem according to the prescription
\begin{equation}
  \label{eq:def_K}
  \forall f\in\sK,\,\forall\sigma\in\R,
  \textrm{~for~a.a.~}t\in\R\textrm{,~}
  (e^{-\ii\sigma K}f)(t)=U(t,t-\sigma)f(t-\sigma).
\end{equation}
Hence $f$ belongs to $\Dom(K)$ if and only if the derivative
$\ii\partial_{\sigma}U(t,t-\sigma)f(t-\sigma)|_{\sigma=0}$ exists in
$\sK$. Morally the Floquet Hamiltonian can be regarded as
$-\ii\partial_{t}+H(t)$ but in general this formal expression should
be interpreted in a weak sense. The following remarks aim to provide
some details about the definition of $K$.

In the particular case when the Hamiltonian does not depend on time
and equals $H_{0}$ for all $t$ it holds
$U(t,t-\sigma)=\exp(-\ii\sigma{}H_{0})$ and one easily finds from
(\ref{eq:def_K}) that the associated Floquet Hamiltonian $K_{0}$ is
nothing but the closure of the operator
$-\ii\partial_{t}\otimes1+1\otimes H_{0}$ defined on the algebraic
tensor product $\Dom(\ii\partial_{t})\otimes\Dom(H_{0})$. Here and
everywhere in what follows the time derivative is automatically
considered with the periodic boundary conditions. This is to say that
the orthonormal basis
$\{T^{-1/2}\exp(2\pi\ii{}kt/T);\textrm{~}k\in\Z\}$ in
$L^{2}([\,0,T\,],\dd{}t)$ is formed by eigenfunctions of
$\ii\partial_{t}$.

Let us denote by $C_{T}^{\infty}(\R)$ the space of $T$-periodic
smooth functions on $\R$ and let
\[
C_{T}^{\infty}(\R)\otimes\Dom(H(0))
= \Span\{\eta(t)\psi;\textrm{~}
\eta\in C_{T}^{\infty}(\R),\textrm{~}\psi\in\Dom(H(0))\}\subset\sK
\]
be the algebraic tensor product. It is straightforward to see that
$C_{T}^{\infty}(\R)\otimes\Dom(H(0))\subset\Dom(K)$ and
\[
K(\eta\otimes\psi)(t)=-\ii\,\eta'(t)\psi+\eta(t)\, H(t)\psi,
\]
for every $\eta\in C_{T}^{\infty}(\R)$ and $\psi\in\Dom(H(0))$. Set
\[
K^{0}=K\big|_{C_{T}^{\infty}(\R)\otimes\Dom(H(0))}.
\]
It follows that $K^{0}$ is a symmetric operator,
$K^{0}\subset{}K\subset(K^{0})^{\ast}$.

Let $K^{1}$ be another operator acting in $\sK$ and defined by the
prescription: $f\in\Dom(K^{1})$ if and only if, for every
$\psi\in\Dom(H(0))$, the function
$t\mapsto\langle\psi,f(t)\rangle_{\sH}$ is absolutely continuous and
there exists $g\in\sK$ such that
\begin{equation}
  \label{eq:Kf_eq_g_weak}
  \forall\psi\in\Dom(H(0)),\textrm{~}
  -\ii\partial_{t}\langle\psi,f(t)\rangle_{\sH}+\langle
  H(t)\psi,f(t)\rangle_{\sH} = \langle\psi,g(t)\rangle_{\sH},
\end{equation}
(the last equality is valid, of course, almost everywhere on $\R$).
In that case $g$ is unique and we set $K^{1}f=g$.

From the definition it is obvious that $K^{0}\subset K^{1}$. Hence $K$
and $K^{1}$ coincide on $C_{T}^{\infty}(\R)\otimes\Dom(H(0))$. We
shall show that $K$ and $K^{1}$ are actually equal. Let us make a
remark on the notation used below and everywhere in the remainder of
the paper: the natural numbers $\N$ start from $1$ while $\Z_+$ stands
for non-negative integers.

\begin{lem}
  \label{lem:Dt_Utpsi_ft}
  For all $\psi\in\sH$ and $f\in\Dom(K^{1})$, the function
  $\langle{}U(t,0)\psi,f(t)\rangle_{\sH}$ is absolutely continuous and
  it holds true that
  \begin{equation}
    \label{eq:Dt_Utpsi_ft}
    -\ii\partial_{t}\langle{}U(t,0)\psi,f(t)\rangle_{\sH}
    =\langle U(t,0)\psi,g(t)\rangle_{\sH}\textrm{~~for~a.e.~}t\in\R,
  \end{equation}
  where $g=K^{1}f$.
\end{lem}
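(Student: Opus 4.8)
The plan is to prove the identity
\[
  \langle U(t,0)\psi,f(t)\rangle_\sH-\langle U(s,0)\psi,f(s)\rangle_\sH
  =\ii\int_s^t\langle U(\tau,0)\psi,g(\tau)\rangle_\sH\,\dd\tau
\]
for every $\psi\in\Dom(H(0))$ and all $s,t\in\R$, and then to pass to general $\psi\in\sH$ by density. This implies the lemma: the integrand on the right is measurable (it is the pairing of a strongly continuous and a strongly measurable $\sH$-valued function) and dominated in modulus by $\|\psi\|\,\|g(\tau)\|\in L^{1}_{\mathrm{loc}}(\R)$, so the right-hand side is absolutely continuous in $t$; hence $t\mapsto\langle U(t,0)\psi,f(t)\rangle$ has an absolutely continuous representative whose derivative equals $\ii\langle U(t,0)\psi,g(t)\rangle$ a.e., which is (\ref{eq:Dt_Utpsi_ft}).

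The obstruction to a direct computation of $\frac{\dd}{\dd t}\langle U(t,0)\psi,f(t)\rangle$ is that $f$ is merely an $L^{2}$-function (not pointwise defined, not continuous), that $t\mapsto U(t,0)\psi$ is only known to be strongly absolutely continuous with $\frac{\dd}{\dd t}U(t,0)\psi=-\ii H(t)U(t,0)\psi\in L^{2}_{\mathrm{loc}}$ (by (\ref{eq:Ht_locbound}) and (\ref{eq:HUtpsi_locbound})), and that no continuity of $t\mapsto H(t)$ is assumed; thus a naive Riemann-sum argument founders. I would remove this by a spectral cut-off, which is the crux. Let $P_{L}$ be the spectral projection of $H(0)$ onto $[-L,L]$; then $P_{L}\to\I$ strongly, $P_{L}$ commutes with $H(0)$, $\Ran(P_{L})\subset\Dom(H(0))$, and $(H(0)+\ii)P_{L}$ is bounded. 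Since $P_{L}\phi\in\Dom(H(0))$ for every $\phi\in\sH$, relation (\ref{eq:Kf_eq_g_weak}) with test vector $P_{L}\phi$ shows that $t\mapsto\langle\phi,P_{L}f(t)\rangle$ is absolutely continuous with a.e.\ derivative $\langle\phi,w_{L}(t)\rangle$, where $w_{L}(t):=\ii P_{L}g(t)-\ii\bigl[(H(0)+\ii)P_{L}\bigr]^{*}\bigl[H(t)(H(0)+\ii)^{-1}\bigr]^{*}f(t)$. This $w_{L}$ is weakly, hence (as $\sH$ is separable) strongly measurable, and $\|w_{L}(t)\|\le\|g(t)\|+(L+1)\,C\,\|f(t)\|\in L^{2}$ with $C:=\sup_{t}\|H(t)(H(0)+\ii)^{-1}\|<\infty$; therefore $t\mapsto P_{L}f(t)$ is strongly absolutely continuous with strong derivative $w_{L}$.

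Now $t\mapsto\langle U(t,0)\psi,P_{L}f(t)\rangle$ is the scalar pairing of two strongly absolutely continuous (hence locally bounded) $\sH$-valued functions, so by the Leibniz rule it is absolutely continuous with derivative $\langle-\ii H(t)U(t,0)\psi,P_{L}f(t)\rangle+\langle U(t,0)\psi,w_{L}(t)\rangle$; using that $P_{L}$ is bounded and self-adjoint this reduces to
\[
  \frac{\dd}{\dd t}\langle U(t,0)\psi,P_{L}f(t)\rangle
  =\ii\langle U(t,0)\psi,P_{L}g(t)\rangle
   +\ii\bigl\langle\bigl(P_{L}H(t)-H(t)P_{L}\bigr)U(t,0)\psi,\,f(t)\bigr\rangle
   \quad\text{a.e.}
\]
(both $H(t)U(t,0)\psi$ and $H(t)P_{L}U(t,0)\psi$ make sense because $U(t,0)\psi$ and $P_{L}U(t,0)\psi$ lie in $\Dom(H(t))=\Dom(H(0))$). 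Integrating from $s$ to $t$ and letting $L\to\infty$: the left-hand side tends to $\langle U(t,0)\psi,f(t)\rangle-\langle U(s,0)\psi,f(s)\rangle$ and $\ii\int_s^t\langle U(\tau,0)\psi,P_{L}g(\tau)\rangle\,\dd\tau\to\ii\int_s^t\langle U(\tau,0)\psi,g(\tau)\rangle\,\dd\tau$, both by $P_{L}\to\I$ strongly and dominated convergence. For the commutator term one checks that $\bigl(P_{L}H(\tau)-H(\tau)P_{L}\bigr)U(\tau,0)\psi\to 0$ in $\sH$ for each $\tau$ — writing $H(\tau)P_{L}U(\tau,0)\psi=\bigl[H(\tau)(H(0)+\ii)^{-1}\bigr]P_{L}(H(0)+\ii)U(\tau,0)\psi$ and using strong convergence of $P_{L}$ — while it is bounded in $\sH$, uniformly in $L$, by $2C\bigl(\|\psi\|+\|H(0)U(\tau,0)\psi\|\bigr)$, which multiplied by $\|f(\tau)\|$ is in $L^{1}_{\mathrm{loc}}$ by (\ref{eq:HUtpsi_locbound}) and the Cauchy--Schwarz inequality; hence this term tends to $0$. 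This gives the displayed identity for $\psi\in\Dom(H(0))$, and choosing $\psi_{n}\in\Dom(H(0))$ with $\psi_{n}\to\psi$ and passing to the limit (pointwise on the left; dominated convergence with dominant $\sup_{n}\|\psi_{n}\|\,\|g(\tau)\|$ on the right) extends it to all $\psi\in\sH$.

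The main obstacle is exactly the one addressed by the spectral cut-off: bounding the commutator $P_{L}H(t)-H(t)P_{L}$, applied to vectors of $\Dom(H(0))$, by a fixed $L^{2}_{\mathrm{loc}}$-function independent of $L$, so that dominated convergence can be invoked after integration. Everything else is routine manipulation with the Leibniz rule and the definition of $K^{1}$.
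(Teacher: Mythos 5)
Your argument is correct and follows the same essential plan as the paper's proof: a spectral cut-off $P_L$ of $H(0)$, the weak defining relation (\ref{eq:Kf_eq_g_weak}) tested on vectors in $\Ran P_L$, a commutator $(P_L H(t)-H(t)P_L)U(t,0)\psi$ with an $L^{1}_{\mathrm{loc}}$ dominant uniform in $L$ furnished by (\ref{eq:Ht_locbound}) and (\ref{eq:HUtpsi_locbound}), and a closing density step in $\psi$. The only cosmetic difference is that the paper expands $\langle U(t,0)\psi,P_n f(t)\rangle_{\sH}$ along an orthonormal basis of $\sH$ and differentiates the resulting $L^{1}_{\mathrm{loc}}$-convergent scalar series in $\sD'(\R)$, whereas you establish strong absolute continuity of $P_L f$ and apply the Leibniz rule for $\sH$-valued absolutely continuous functions; the two routes produce the identical three terms and the identical commutator.
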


\begin{proof}
  Let us first suppose that $\psi\in\Dom(H(0))$. Let $P$ be the
  projector-valued measure for $H(0)$ and set $P_{n}=P([-n,n])$,
  $n\in\N$. Then $P_{n}\to\I$ strongly as $n\to\infty$ and therefore
  the following limit is true in the space of distributions $\sD'(\R)$
  (actually in $L_{\mbox{\scriptsize{loc}}}^{1}(\R)$):
  \[
  \lim_{n\to\infty}\,\langle U(t,0)\psi,P_{n}f(t)\rangle_{\sH}
  =\langle U(t,0)\psi,f(t)\rangle_{\sH}.
  \]
  We shall compute the time derivative of
  $\langle{}U(t,0)\psi,f(t)\rangle_{\sH}$ in the sense of
  distributions when making use of the fact that $-\ii\partial_{t}$ is
  continuous on $\sD'(\R)$. Choose an orthonormal basis in $\sH$
  called $\{\varphi_{k}\}$. The series
  \[
  \langle U(t,0)\psi,P_{n}f(t)\rangle_{\sH}
  = \sum_{k}\langle U(t,0)\psi,\varphi_{k}\rangle_{\sH}
  \langle\varphi_{k},P_{n}f(t)\rangle_{\sH}
  \]
  converges in $\sD'(\R)$ (actually in
  $L_{\mbox{\scriptsize{loc}}}^{1}(\R)$) since it converges absolutely
  and is majorized by $\|\psi\|_\sH\|f(t)\|_{\sH}$, a locally
  integrable function. Then, in the sense of distributions,
  \begin{eqnarray}
    & & -\ii\partial_{t}\langle U(t,0)\psi,P_{n}f(t)\rangle_{\sH}
    =\sum_{k}\big(\langle H(t)U(t,0)\psi,\varphi_{k}\rangle_{\sH}
    \langle\varphi_{k},P_{n}f(t)\rangle_{\sH}\nonumber \\
    & & \qquad\qquad\quad -\langle U(t,0)\psi,\varphi_{k}\rangle_{\sH}
    \langle H(t)P_{n}\varphi_{k},f(t)\rangle_{\sH}
    +\langle U(t,0)\psi,\varphi_{k}\rangle_{\sH}
    \langle\varphi_{k},P_{n}g(t)\rangle_{\sH}\big).\nonumber \\
    \label{eq:Dt_Utpsift_sum_phik}
  \end{eqnarray}
  Here we have used the definition of $K^{1}$ (note that
  $P_n\varphi_k\in\Dom(H(0))$).

  The RHS in (\ref{eq:Dt_Utpsift_sum_phik}) splits into three sums
  each of them can be summed in $\sD'(\R)$. To see it let us note that
  with the aid of the Schwarz inequality and the Parseval equality one
  can estimate
  \begin{equation}
    \label{eq:sumk_HUpsi_phik_Pf}
    \sum_{k}\left|\langle H(t)U(t,0)\psi,\varphi_{k}\rangle_{\sH}
      \langle\varphi_{k},P_{n}f(t)\rangle_{\sH}\right|
    \leq\| H(t)U(t,0)\psi\|_{\sH}\|f(t)\|_{\sH}.
  \end{equation}
  Furthermore, $\|f(t)\|_{\sH}$ is square integrable and
  \[
  \|H(t)U(t,0)\psi\|_{\sH} \leq \|H(t)\left(H(0)+\ii\right)^{-1}\|
  \|\left(H(0)+\ii\right)U(t,0)\psi\|_{\sH}
  \]
  is locally square integrable due to (\ref{eq:Ht_locbound}) and
  (\ref{eq:HUtpsi_locbound}). Hence the RHS of
  (\ref{eq:sumk_HUpsi_phik_Pf}) is locally integrable. As far as the
  second sum is concerned let us note that
  \[
  G_{n}(t):=H(t)P_{n}=H(t)\left(H(0)+\ii\right)^{-1}
  \left(H(0)+\ii\right)P_{n}
  \]
  is a bounded operator and even $\|G_{n}(t)\|$ is locally bounded
  according to hypothesis (\ref{eq:Ht_locbound}). Finally, the third
  sum does not cause any problem. Consequently, the RHS of
  (\ref{eq:Dt_Utpsift_sum_phik}) equals
  \begin{equation}
    \label{eq:Dt_Utpsift_summed}
    \langle H(t)U(t,0)\psi,P_{n}f(t)\rangle_{\sH}
    -\langle U(t,0)\psi,G_{n}(t)^{\ast}f(t)\rangle_{\sH}
    +\langle U(t,0)\psi,P_{n}g(t)\rangle_{\sH}\,.
  \end{equation}
  Thus $-\ii\partial_{t}\langle U(t,0)\psi,f(t)\rangle_{\sH}$ is equal
  to the limit of (\ref{eq:Dt_Utpsift_summed}) as $n\to\infty$.

  Since for every $\varphi\in\Dom(H(0))$ it holds
  $H(0)P_{n}\varphi\to{}H(0)\varphi$ and $U(t,0)\psi\in\Dom(H(0))$, in
  the second term in (\ref{eq:Dt_Utpsift_summed}) we get
  \[
  \lim_{n\to\infty}G_{n}(t)U(t,0)\psi
  = \lim_{n\to\infty}H(t)\left(H(0)+\ii\right)^{-1}\left(H(0)
    +\ii\right)P_{n}U(t,0)\psi=H(t)U(t,0)\psi.
  \]
  The point-wise limits of the first and the third term in
  (\ref{eq:Dt_Utpsift_summed}) are obvious. To justify the convergence
  in $\sD'(\R)$ one can apply once more assumptions
  (\ref{eq:Ht_locbound}) and (\ref{eq:HUtpsi_locbound}) to show that
  each term has a locally integrable majorant which is independent of
  $n$. Thus sending $n\to\infty$ one finds that equality
  (\ref{eq:Dt_Utpsi_ft}) holds true in the sense of distributions.
  The RHS is a locally integrable function. By a standard result of
  the theory of distributions this implies that the function
  $\langle{}U(t,0)\psi,f(t)\rangle_{\sH}$ is absolutely continuous and
  that equality (\ref{eq:Dt_Utpsi_ft}) holds true in the usual sense.

  Finally let us show that the condition $\psi\in\Dom(H(0))$ from the
  beginning of the proof can be relaxed. Actually, if $h\in\sK$ and
  $\psi_k\to\psi$ in $\sH$ then
  $\langle{}U(t,0)\psi_k,h(t)\rangle_\sH$ is locally integrable and
  this sequence of functions converges to
  $\langle{}U(t,0)\psi,h(t)\rangle_\sH$ in the $L^1$ norm on every
  bounded interval and hence in the sense of distributions. For any
  $\psi\in\sH$ one can choose a sequence $\psi_k\in\Dom(H(0))$ such
  that $\psi_k\to\psi$ and then send $k\to\infty$ in the equality
  \begin{displaymath}
    -\ii\partial_{t}\langle{}U(t,0)\psi_k,f(t)\rangle_{\sH}
    =\langle U(t,0)\psi_k,g(t)\rangle_{\sH}\textrm{~~in~}\sD'(\R).
  \end{displaymath}
  Since the function $\langle{}U(t,0)\psi,g(t)\rangle_{\sH}$ is
  locally integrable the function
  $\langle{}U(t,0)\psi,f(t)\rangle_{\sH}$ can be redefined on a
  measure zero set so that it is absolutely continuous and equality
  (\ref{eq:Dt_Utpsi_ft}) holds true in the usual sense.
\end{proof}

\begin{lem}
  \label{lem:K1-symmetric}
  $K^{1}$ is symmetric.
\end{lem}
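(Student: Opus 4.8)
The plan is to check the defining symmetry relation $\langle K^{1}f_{1},f_{2}\rangle_{\sK}=\langle f_{1},K^{1}f_{2}\rangle_{\sK}$ directly for all $f_{1},f_{2}\in\Dom(K^{1})$; no separate density argument is needed, since $K^{0}\subset K^{1}$ and $K^{0}$ is already defined on $C_{T}^{\infty}(\R)\otimes\Dom(H(0))$, which is dense in $\sK$. Writing $g_{i}=K^{1}f_{i}$, the identity to be proved reads $\int_{0}^{T}\langle g_{1}(t),f_{2}(t)\rangle_{\sH}\,\dd t=\int_{0}^{T}\langle f_{1}(t),g_{2}(t)\rangle_{\sH}\,\dd t$. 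Morally the difference of the two sides is $\ii\int_{0}^{T}\partial_{t}\langle f_{1}(t),f_{2}(t)\rangle_{\sH}\,\dd t$, which vanishes by $T$-periodicity; so the real content is to make sense of $t\mapsto\langle f_{1}(t),f_{2}(t)\rangle_{\sH}$ as an absolutely continuous function with the expected derivative almost everywhere, the difficulty being that $f_{i}(t)$ is only defined up to a null set and need not belong to $\Dom(H(0))$.

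First I would pass to the ``interaction picture'': for $f\in\Dom(K^{1})$ with $g=K^{1}f$, put $h(t):=U(0,t)f(t)$ and $G(t):=U(0,t)g(t)$, both elements of $\sK$ since $U(0,t)$ is unitary and $t\mapsto U(0,t)$ is strongly continuous. Rewriting $\langle U(t,0)\psi,f(t)\rangle_{\sH}=\langle\psi,h(t)\rangle_{\sH}$ and likewise for $g$, Lemma \ref{lem:Dt_Utpsi_ft} says that for every $\psi\in\sH$ the scalar function $t\mapsto\langle\psi,h(t)\rangle_{\sH}$ is absolutely continuous with $\partial_{t}\langle\psi,h(t)\rangle_{\sH}=\ii\langle\psi,G(t)\rangle_{\sH}$ a.e. The hard (though standard) step is to upgrade this family of scalar identities --- each valid only a.e., with a possibly $\psi$-dependent null set --- into a genuine $\sH$-valued statement: comparing $h$ with the Bochner indefinite integral of $\ii G$ (meaningful since $\|G(\cdot)\|_{\sH}=\|g(\cdot)\|_{\sH}$ is locally integrable), reducing to a separable subspace that a.e. contains the difference, and testing against a countable dense set, one concludes that $h$ has a locally absolutely continuous $\sH$-valued representative with $h'(t)=\ii U(0,t)g(t)$ a.e.

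Once this is in hand the rest is mechanical. For $f_{1},f_{2}\in\Dom(K^{1})$ one has $\langle f_{1}(t),f_{2}(t)\rangle_{\sH}=\langle U(t,0)h_{1}(t),U(t,0)h_{2}(t)\rangle_{\sH}=\langle h_{1}(t),h_{2}(t)\rangle_{\sH}$, which is locally absolutely continuous as a product of $\sH$-valued absolutely continuous functions, with derivative $\langle h_{1}'(t),h_{2}(t)\rangle_{\sH}+\langle h_{1}(t),h_{2}'(t)\rangle_{\sH}$; inserting $h_{i}'=\ii U(0,\cdot)g_{i}$ and $h_{i}=U(0,\cdot)f_{i}$ and using unitarity of $U(0,t)$ reduces this to $-\ii\langle g_{1}(t),f_{2}(t)\rangle_{\sH}+\ii\langle f_{1}(t),g_{2}(t)\rangle_{\sH}$. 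Since $f_{1},f_{2}$ are $T$-periodic, the function $\langle f_{1}(t),f_{2}(t)\rangle_{\sH}$ is $T$-periodic and absolutely continuous, so the integral of its derivative over one period is zero; this is precisely $-\ii\langle g_{1},f_{2}\rangle_{\sK}+\ii\langle f_{1},g_{2}\rangle_{\sK}=0$, i.e. $\langle K^{1}f_{1},f_{2}\rangle_{\sK}=\langle f_{1},K^{1}f_{2}\rangle_{\sK}$. I expect everything outside the second paragraph to be routine, the Bochner-integral bookkeeping there being the only genuine obstacle.
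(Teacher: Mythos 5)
Your proof is correct, but it takes a genuinely different route from the paper's. You upgrade the scalar conclusion of Lemma~\ref{lem:Dt_Utpsi_ft} --- a family of a.e.\ statements with possibly $\psi$-dependent null sets --- to an $\sH$-valued absolute-continuity statement for $h=U(0,\cdot)f$ by comparing $h$ with a Bochner indefinite integral of $\ii G$, restricting to a separable subspace carrying the ranges, fixing a base point outside the union of the countably many null sets, and testing against a countable dense set; with a strong representative in hand, the symmetry identity for a pair $f_1,f_2$ follows by integrating $\partial_t\langle h_1(t),h_2(t)\rangle_\sH$ over one period and using $T$-periodicity. The paper avoids this upgrade entirely: it fixes a single $f$, applies Lemma~\ref{lem:Dt_Utpsi_ft} to $\bigl|\langle U(t,0)\psi,f(t)\rangle_\sH\bigr|^2$, integrates over $[s,s+T]$, and then sums over an orthonormal basis $\{\psi_k\}$, justifying the sum/integral interchange by the Cauchy--Schwarz bound $\sum_k\bigl|\langle U(t,0)^{-1}f(t),\psi_k\rangle\langle\psi_k,U(t,0)^{-1}g(t)\rangle\bigr|\le\|f(t)\|_\sH\|g(t)\|_\sH\in L^1$; Parseval then yields $\Im\langle f,K^1f\rangle_\sK=0$ for every $f\in\Dom(K^1)$, whence symmetry by polarization. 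In effect, the orthonormal-basis summation plays exactly the role your Bochner representative plays, but entirely at the level of scalar functions, so all the representative/null-set bookkeeping disappears. Your version is more structural (the interaction-picture object $h=U(0,\cdot)f$ lays the mechanism bare and yields a vector-valued statement that could be reused elsewhere), at the price of the Bochner-level technicalities you yourself flag; none of them is wrong, but the paper's argument is shorter and self-contained, requiring nothing beyond Lemma~\ref{lem:Dt_Utpsi_ft}, Cauchy--Schwarz, and dominated convergence.
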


\begin{proof}
  Suppose that $f\in\Dom(K^{1})$, $K^{1}f=g$ and $\psi\in\sH$.
  According to Lemma~\ref{lem:Dt_Utpsi_ft} we have
  \[
  -\ii\partial_{t}\left|\langle U(t,0)\psi,f(t)\rangle_{\sH}\right|^{2}
  =2\ii\Im\!\left(\langle f(t),U(t,0)\psi\rangle_{\sH}
    \langle U(t,0)\psi,g(t)\rangle_{\sH}\right).
  \]
  Let $\{\psi_{k}\}$ be an orthonormal basis in $\sH$. Then, for
  almost all $s\in\R$ and all $k$,
  \begin{eqnarray*}
    &  & \left|\langle\psi_{k},U(s+T,0)^{-1}f(s+T)\rangle_{\sH}\right|^{2}
    -\left|\langle\psi_{k},U(s,0)^{-1}f(s)\rangle_{\sH}\right|^{2}\\
    &  & \qquad\qquad=-2\Im\int_{s}^{s+T}
    \langle U(t,0)^{-1}f(t),\psi_{k}\rangle_{\sH}
    \langle\psi_{k},U(t,0)^{-1}g(t)\rangle_{\sH}\,\dd t\,.
  \end{eqnarray*}
  Summing in $k$ one can commute the sum and the integral.
  Consequently, for almost all $s$,
  \[
  \|f(s+T)\|_{\sH}^{\,2}-\|f(s)\|_{\sH}^{\,2}
  = -2\Im\int_{s}^{s+T}\langle f(t),g(t)\rangle_{\sH}\,\dd t
  = -2\Im(\langle f,g\rangle_{\sK}).
  \]
  Since $\|f(t)\|_{\sH}$ is periodic the LHS vanishes almost
  everywhere. We find that $\forall f\in\Dom(K^{1})$,
  $\Im(\langle{}f,K^{1}f\rangle_{\sK})=0$. This shows that $K^{1}$ is
  symmetric.
\end{proof}

\begin{lem}\label{prop:K0ast_eq_K1}
  $(K^{0})^{\ast}=K^{1}$. Consequently, $K^{1}$ is closed and $K^{0}$
  is essentially self-adjoint.
\end{lem}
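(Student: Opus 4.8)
The plan is to establish the operator inclusion $(K^{0})^{\ast}\subseteq K^{1}$ (the reverse inclusion being immediate) and then to read off the two stated consequences from abstract operator theory. For the easy direction, recall that $K^{0}\subseteq K^{1}$ and that, by Lemma~\ref{lem:K1-symmetric}, $K^{1}$ is symmetric; hence for any $f\in\Dom(K^{1})$ and any $u\in\Dom(K^{0})$ one has $\langle K^{0}u,f\rangle_{\sK}=\langle K^{1}u,f\rangle_{\sK}=\langle u,K^{1}f\rangle_{\sK}$, so $f\in\Dom((K^{0})^{\ast})$ with $(K^{0})^{\ast}f=K^{1}f$, i.e. $K^{1}\subseteq(K^{0})^{\ast}$.

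For the main inclusion I would take $f\in\Dom((K^{0})^{\ast})$, put $h:=(K^{0})^{\ast}f\in\sK$, and test the identity $\langle K^{0}u,f\rangle_{\sK}=\langle u,h\rangle_{\sK}$ against the generating elements $u=\eta\otimes\psi$ with $\eta\in C_{T}^{\infty}(\R)$ and $\psi\in\Dom(H(0))$. Inserting the explicit formula $K^{0}(\eta\otimes\psi)(t)=-\ii\,\eta'(t)\psi+\eta(t)H(t)\psi$, pulling the (conjugated) scalar factors out of $\langle\cdot,\cdot\rangle_{\sH}$, and writing $\phi=\overline{\eta}$ (which again exhausts $C_{T}^{\infty}(\R)$), one arrives, for every $\phi\in C_{T}^{\infty}(\R)$, at
\[
  \int_{0}^{T}\ii\,\phi'(t)\,\langle\psi,f(t)\rangle_{\sH}\,\dd t
  =\int_{0}^{T}\phi(t)\,\big(\langle\psi,h(t)\rangle_{\sH}-\langle H(t)\psi,f(t)\rangle_{\sH}\big)\,\dd t .
\]
The step I expect to be the crux is to check that the bracket on the right is a genuine element of $L^{1}([\,0,T\,])$ and not merely a distribution: one estimates $\|H(t)\psi\|_{\sH}\le\|H(t)(H(0)+\ii)^{-1}\|\,\|(H(0)+\ii)\psi\|_{\sH}$, which is bounded on $[\,0,T\,]$ thanks to hypothesis (\ref{eq:Ht_locbound}), while $\|f(t)\|_{\sH}\in L^{2}([\,0,T\,])$, and $t\mapsto H(t)\psi$ is measurable because $\eta\otimes\psi\in\Dom(K^{0})\subset\sK$ already forces $t\mapsto H(t)\psi\in L^{2}([\,0,T\,],\sH)$ (take $\eta\equiv1$). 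Granting this, the displayed identity says exactly that the distributional derivative, taken with periodic boundary conditions, of the $L^{1}$ function $t\mapsto\langle\psi,f(t)\rangle_{\sH}$ equals the $L^{1}$ function $\ii\big(\langle\psi,h(t)\rangle_{\sH}-\langle H(t)\psi,f(t)\rangle_{\sH}\big)$; by the standard regularity statement for distributions, $t\mapsto\langle\psi,f(t)\rangle_{\sH}$ then coincides a.e.\ with an absolutely continuous function and, for a.e.\ $t$,
\[
  -\ii\,\partial_{t}\langle\psi,f(t)\rangle_{\sH}+\langle H(t)\psi,f(t)\rangle_{\sH}=\langle\psi,h(t)\rangle_{\sH}.
\]
Since this holds for every $\psi\in\Dom(H(0))$ with one and the same $h\in\sK$, a comparison with (\ref{eq:Kf_eq_g_weak}) gives $f\in\Dom(K^{1})$ and $K^{1}f=h$, i.e. $(K^{0})^{\ast}\subseteq K^{1}$. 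Together with the easy direction this yields $(K^{0})^{\ast}=K^{1}$.

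For the consequences I would argue as follows. $K^{1}=(K^{0})^{\ast}$ is closed because every adjoint is. Next, from $K^{0}\subseteq K^{1}$ and closedness of $K^{1}$ we get $\overline{K^{0}}\subseteq K^{1}$, while symmetry of $K^{1}$ gives $K^{1}\subseteq(K^{1})^{\ast}=(K^{0})^{\ast\ast}=\overline{K^{0}}$; hence $\overline{K^{0}}=K^{1}=(K^{1})^{\ast}=(\overline{K^{0}})^{\ast}$, so $\overline{K^{0}}$ is self-adjoint and $K^{0}$ is essentially self-adjoint. (Note that Lemma~\ref{lem:Dt_Utpsi_ft} is not needed for this argument; it will serve later, via the characterisation of $\Dom(K)$ furnished by (\ref{eq:def_K}), to identify $K$ with $K^{1}$.)
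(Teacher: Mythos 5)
Your argument is correct and follows the same route as the paper: test the adjoint relation against $\eta\otimes\psi$, observe that by hypothesis (\ref{eq:Ht_locbound}) the resulting right-hand side is a locally integrable function of $t$ (not merely a distribution), invoke the standard regularity result to upgrade the distributional identity to absolute continuity of $t\mapsto\langle\psi,f(t)\rangle_{\sH}$, and read off $f\in\Dom(K^{1})$ with $K^{1}f=h$; the consequences then follow from Lemma~\ref{lem:K1-symmetric} exactly as you describe. You spell out a couple of small points the paper leaves implicit (the conjugation via $\phi=\overline{\eta}$, the measurability of $t\mapsto H(t)\psi$), but the substance is identical.
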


\begin{proof}
  By definition, $f\in\Dom((K^{0})^{\ast})$ if and only if there
  exists $g\in\sK$ such that
  \[
  \forall h\in\Dom(K^{0}),\textrm{~}\langle K^{0}h,f\rangle_{\sK}
  =\langle h,g\rangle_{\sK}.
  \]
  Moreover, in that case $g$ is unique and $(K^{0})^{*}f=g$. Setting
  $h=\eta\otimes\psi$ we find that for all $\psi\in\Dom(H(0))$ it is
  true that
  \[
  \forall\eta\in C_{T}^{\infty}(\R),\textrm{~}
  \int_{0}^{T}\left(\ii\eta'(t)\langle\psi,f(t)\rangle_{\sH}
    +\eta(t)\langle H(t)\psi,f(t)\rangle_{\sH}\right)\dd t
  =\int_{0}^{T}\eta(t)\langle\psi,g(t)\rangle_{\sH}\dd t\,.
  \]
  The last statement can be rewritten as equality
  (\ref{eq:Kf_eq_g_weak}) valid in the sense of distributions. Since
  the both functions $\langle H(t)\psi,f(t)\rangle_{\sH}$ and
  $\langle\psi,g(t)\rangle_{\sH}$ belong to
  $L_{\mbox{\scriptsize{loc}}}^{1}(\R)$ (using again
  (\ref{eq:Ht_locbound}) in the former case) the standard results of
  the theory of distributions tell us that
  $\langle\psi,f(t)\rangle_{\sH}$ is actually absolutely continuous
  and equality (\ref{eq:Kf_eq_g_weak}) holds true in the usual sense.
  Thus we conclude that $f\in\Dom(K^{1})$ and $K^{1}f=g$. Hence
  $(K^{0})^{\ast}\subset{}K^{1}$. Now it suffices to apply
  Lemma~\ref{lem:K1-symmetric}. Actually, the relations
  \[
  (K^{0})^{\ast}\subset K^{1}\subset(K^{1})^{\ast}
  \subset (K^{0})^{\ast\ast}
  =\overline{K^{0}}\subset(K^{0})^{\ast}
  \]
  imply that $K^1=(K^{0})^{\ast}$ is closed and
  $(\overline{K^{0}})^{\ast}=(K^{0})^{\ast}=\overline{K^{0}}$.
\end{proof}

\begin{prop}
  \label{prop:K_eq_K1}
  Assuming (\ref{eq:Ht_locbound}) and (\ref{eq:HUtpsi_locbound}), it
  holds true that
  \[
  K=K^{1}=\overline{K^{0}}.
  \]
  In particular, $C_{T}^{\infty}(\R)\otimes\Dom(H(0))$ is a core of
  $K$.
\end{prop}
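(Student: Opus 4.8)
The plan is to read the proposition off the three preceding lemmas, the only genuinely new ingredient being that $K$ itself is self-adjoint. By Lemma~\ref{prop:K0ast_eq_K1} we already have $K^{1}=(K^{0})^{\ast}=\overline{K^{0}}$, so everything reduces to proving $K=\overline{K^{0}}$, and for that it suffices to establish the chain of inclusions $\overline{K^{0}}\subset K\subset(K^{0})^{\ast}$.

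First I would note that $K$ is self-adjoint. Indeed, the prescription (\ref{eq:def_K}) defines a one-parameter family of unitary operators $\sigma\mapsto e^{-\ii\sigma K}$ on $\sK$: each $e^{-\ii\sigma K}$ is unitary because $U(t,t-\sigma)$ is unitary and the shift $t\mapsto t-\sigma$ acts as a measure-preserving flow on $[\,0,T\,]$ read modulo $T$ (here one uses the $T$-periodicity (\ref{eq:U_period}) and the convention that elements of $\sK$ are $T$-periodic $\sH$-valued functions); the group law $e^{-\ii\sigma_{1}K}e^{-\ii\sigma_{2}K}=e^{-\ii(\sigma_{1}+\sigma_{2})K}$ is precisely the Chapman--Kolmogorov identity $U(t,t-\sigma_{1})U(t-\sigma_{1},t-\sigma_{1}-\sigma_{2})=U(t,t-\sigma_{1}-\sigma_{2})$; and strong continuity in $\sigma$ follows from the joint strong continuity of $U(t,s)$. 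Stone's theorem then gives that $K$ is self-adjoint, in particular closed, and $K=K^{\ast}$.

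Next I would assemble the inclusions. It was already observed above that $C_{T}^{\infty}(\R)\otimes\Dom(H(0))\subset\Dom(K)$ with $K$ acting there as $K^{0}$, i.e.\ $K^{0}\subset K$; since $K$ is closed this upgrades to $\overline{K^{0}}\subset K$. The same inclusion $K^{0}\subset K$ together with self-adjointness of $K$ gives $K=K^{\ast}\subset(K^{0})^{\ast}$. Combining with Lemma~\ref{prop:K0ast_eq_K1},
\[
\overline{K^{0}}\subset K\subset(K^{0})^{\ast}=K^{1}=\overline{K^{0}},
\]
so all the inclusions are equalities and $K=K^{1}=\overline{K^{0}}$. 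The assertion that $C_{T}^{\infty}(\R)\otimes\Dom(H(0))$ is a core of $K$ is then merely a restatement of $\overline{K^{0}}=K$.

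As for the main obstacle: there is essentially none inside the proof of the proposition itself, because the analytic substance has already been absorbed into Lemma~\ref{lem:Dt_Utpsi_ft}, where hypotheses (\ref{eq:Ht_locbound}) and (\ref{eq:HUtpsi_locbound}) are exactly what is needed to justify the distributional computation of $\partial_{t}\langle U(t,0)\psi,f(t)\rangle_{\sH}$ and the passage to the $n\to\infty$ limit, and into the identification $(K^{0})^{\ast}=K^{1}$ of Lemma~\ref{prop:K0ast_eq_K1}. If anything deserves a line of care it is the self-adjointness (or at least closedness) of $K$; one could also avoid invoking Stone's theorem explicitly by recalling that the generator of a strongly continuous group is automatically closed, and then running the same chain of inclusions with $K$ symmetric and closed in place of self-adjoint.
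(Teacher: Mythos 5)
Your argument is correct and coincides with the paper's: the paper's proof is exactly the chain $K^{0}\subset\overline{K^{0}}\subset K=K^{\ast}\subset(K^{0})^{\ast}=\overline{K^{0}}$ combined with $K^{1}=(K^{0})^{\ast}$ from Lemma~\ref{prop:K0ast_eq_K1}. The only difference is that you re-derive the self-adjointness of $K$ from Stone's theorem, whereas the paper takes this as part of the definition of $K$ in (\ref{eq:def_K}); your extra verification is harmless and accurate.
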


\begin{proof}
  According to Lemma~\ref{lem:K1-symmetric} and
  Lemma~\ref{prop:K0ast_eq_K1} it holds true that
  \[
  K^{0}\subset\overline{K^{0}}\subset K
  =K^{\ast}\subset(K^{0})^{\ast}=\overline{K^{0}}
  \]
  and $K^{1}=(K^{0})^{\ast}=\overline{K^{0}}$. The proposition follows
  immediately.
\end{proof}

Let us note that if a vector-valued function $f(t)$ from the domain of
$K$ is even known to be continuously differentiable (in the strong
sense) then necessarily $f(t)\in\Dom(H(0))$ for all $t$. Under this
additional assumption we actually have
\[
(Kf)(t)=-\ii\partial_{t}f(t)+H(t)f(t)=g(t).
\]
In the general case, however, one should use the weaker form
(\ref{eq:Kf_eq_g_weak}). The relation between $K$ and the formal
expression $-\ii\partial_{t}+H(t)$ can be also expressed as follows.
Let $H=\int^{\oplus}H(t)\,\dd{}t$ be the self-adjoint operator in
$\sK$ with the domain formed by those $f\in\sK$ satisfying
$f(t)\in\Dom(H(0))$ for a.a. $t$ and $\int_{0}^{T}\Vert
H(t)f(t)\Vert^{2}\dd t<\infty$, with $(Hf)(t)=H(t)f(t)$. Clearly,
\[
\Dom(K)\supset\Dom(-\ii\partial_{t}\otimes1)\cap\Dom(H)
\supset C_{T}^{\infty}(\R)\otimes\Dom(H(0))
\]
and therefore, according to Proposition~\ref{prop:K_eq_K1},
$\Dom(-\ii\partial_{t}\otimes1)\cap\Dom(H)$ is a core of $K$.
Hence
\begin{equation}
  \label{eq:K_eq_dt_plus_H}
  K=\overline{-\ii\partial_{t}\otimes1+H}\,.
\end{equation}

\section{Boundedness of energy for a dense set of initial conditions}

In this section we consider slightly more general periodically
time-dependent Hamiltonians $H(t)$, $t \in \R$, than those presented
in the beginning of Section~\ref{sec:floquet_hamiltonian}, at least
among those which are bounded below. We suppose that the Hamiltonian
$H(t)$ is associated to a closed, densely defined and positive
sesquilinear form $q(t)$, with a domain independent of $t$:
\begin{equation}
  \label{DSIC_H1}
  \Dom q(t) = \Dom q(0),\ \forall t \in \R.
\end{equation}
Assuming that the spectrum of $U(T,0)$ is pure point we wish to
construct a rich set of initial conditions for which the mean value of
energy is uniformly bounded in time. It turns out that this is
possible if the eigenvectors of $U(T,0)$ belong to the form domain
$\Dom{}q(0)$.

The space $\Dom q(0)$ endowed with the scalar product
$\langle{}u,v\rangle_1=\langle{}u,v\rangle_\sH+q(0)(u,v)$ is a Hilbert
space denoted by $\sH_1$, and we recall that
\begin{displaymath}
  \langle H(t) u,v\rangle_{\sH}
  = q(t)(u,v),\ \forall u\in\Dom H(t),\ \forall v\in\sH_1,
\end{displaymath}
where
\begin{displaymath}
  \Dom H(t) = \{u\in\sH_1;\textrm{~}
  \exists C_u\geq0 \textrm{~s.t.~} \forall v\in\sH_1,\textrm{~}
  |q(t)(u,v)| \leq C_u\|v\|_{\sH}\}.
\end{displaymath}
We call $\sH_{-1}$ the dual space of $\sH_1$, that is to say the
vector space of continuous conjugate linear forms on $\sH_1$. For any
$u\in\sH$, the functional $v\mapsto\langle{}v,u\rangle_{\sH}$ belongs
to $\sH_{-1}$ since
$|\langle{}v,u\rangle_{\sH}|\leq\|u\|_{\sH}\|v\|_{\sH}
\leq\|u\|_{\sH}\|v\|_1$,
and we can also regard $\sH$ as a subspace of $\sH_{-1}$ with
\[
\|u\|_{-1} = \sup_{v\in\sH_1,\,v\neq0}
\frac{|\langle u,v\rangle_{\sH}|}{\|v\|_1} \leq \|u\|_{\sH}.
 \]
Thus
\begin{displaymath}
  \sH_1 \subset \sH \subset \sH_{-1},
\end{displaymath}
where the symbol $\subset$ means a topological embedding. Actually,
$H(t)$ can be extended into an operator mapping $\sH_1$ into
$\sH_{-1}$ provided there exists a constant $C_t\geq0$ such that
\begin{displaymath}
  \forall u\in\sH_1,\ q(t)(u,u) \leq C_t \|u\|_{1}^{\,2}.
\end{displaymath}
Let us denote by $\langle\cdot,\cdot\rangle_{-1,1}$ the dual pairing
between $\sH_{-1}$ and $\sH_{1}$. This pairing is conjugate linear in
the first and linear in the second argument. In other words, the
embedding $\sH\subset\sH_{-1}$ means that
$\langle\psi,g\rangle_{-1,1}=\langle\psi,g\rangle_\sH$ for all
$\psi\in\sH$ and $g\in\sH_1$, and the mapping
$H(t):\sH_{1}\to\sH_{-1}$ is defined so that
$\langle{}H(t)u,v\rangle_{-1,1}=q(t)(u,v)$ for all $u,v\in\sH_1$.

In the remainder of this section, we will refer to the propagator
$U(t,0)$ associated to the family of Hamiltonians $H(t)$, $t\in\R$.
Its existence is implied by the following result which can be found in
\cite[Theorem~II.27]{simon71:_hamiltonians_as_quadratic_forms} and
that we reproduce below for the reader's convenience.

\begin{thm}
  \label{thm_EPS_weak}
  We assume that $q(t)$ satisfies (\ref{DSIC_H1}) and that there is a
  constant $C\geq1$ such that the operator $H(t)$ satisfies, for all
  $t\in\R$:
  \begin{enumerate} %[(i)]
  \item $C^{-1} (H(0)+1) \leq H(t) \leq C (H(0)+1)$.
  \item The derivative $\frac{\dd}{\dd{}t} H(t)^{-1}$ exists in the
    norm sense and
    \begin{displaymath}
      \left\|\sqrt{H(t)}\left(\frac{d}{dt}H(t)^{-1}
        \right)\!\sqrt{H(t)}\right\| \leq C.
    \end{displaymath}
  \end{enumerate}
  Then, for any $\psi_0\in\sH_1$ there is a unique function
  $\R\ni{}t\mapsto\psi(t)\in\sH_1$ such that:
  \begin{enumerate}
  \item $\psi$ is $\sH_1$-weakly continuous, i.e., for all
    $g\in\sH_{-1}$, $t\mapsto\langle{}g,\psi(t)\rangle_{-1,1}$ is a
    continuous function.
  \item $\psi$ is a weak solution of the Schr\"odinger equation in the
    following sense:
    \[
    \forall g\in\sH_1,\textrm{~}
    -\ii\,\frac{\dd}{\dd t}\langle g,\psi(t)\rangle_{\sH}
    + q(t)(g,\psi(t)) = 0\textrm{~and~}\psi(0) = \psi_0.
    \]
  \item For all $s \in \R$ we have
    \begin{displaymath}
      \lim_{t\rightarrow s} \left\|\frac{\psi(t)-\psi(s)}{t-s}
        +\ii H(t)\psi(t) \right\|_{-1} = 0.
    \end{displaymath}
  \item $\|\psi(t)\|_{\sH}=\|\psi_0\|_{\sH}$ for all $t\in\R$ and
    $t\mapsto\psi(t)$ is continuous in the norm topology in $\sH$.
  \end{enumerate}
\end{thm}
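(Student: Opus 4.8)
The plan is to reproduce the regularization argument behind Theorem~II.27 of \cite{simon71:_hamiltonians_as_quadratic_forms}. First I would replace $H(t)$ by its Yosida approximants $H_{n}(t):=H(t)(1+n^{-1}H(t))^{-1}$, $n\in\N$, which are bounded and self-adjoint. Since hypothesis~1 forces $H(t)\geq C^{-1}$, the inverse $H(t)^{-1}$ is bounded and, because $H_{n}(t)^{-1}=H(t)^{-1}+n^{-1}$, hypothesis~2 shows that $\R\ni t\mapsto H_{n}(t)$ is norm-$C^{1}$; hence the propagator $U_{n}(t,s)$ generated by $H_{n}(t)$ exists by the standard theory (e.g.\ the Dyson expansion) for equations with bounded, norm-continuous generators. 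One also checks, using hypothesis~1 --- which identifies the form domains $\Dom(H(t)^{1/2})$ with $\sH_{1}$ and makes the corresponding norms equivalent uniformly in $t$ --- that $H_{n}(t)$ maps $\sH_{1}$ boundedly into itself, so that for $\psi_{0}\in\sH_{1}$ the curve $\psi_{n}(t):=U_{n}(t,0)\psi_{0}$ stays in $\sH_{1}$, is $C^{1}$ as an $\sH_{1}$-valued function, and satisfies $\|\psi_{n}(t)\|_{\sH}=\|\psi_{0}\|_{\sH}$.

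The core of the argument is a uniform energy estimate. I would differentiate $e_{n}(t):=q(t)(\psi_{n}(t),\psi_{n}(t))=\langle\psi_{n}(t),H(t)\psi_{n}(t)\rangle_{\sH}$. The two contributions involving $\dot\psi_{n}(t)=-\ii H_{n}(t)\psi_{n}(t)$ cancel, because $\langle H_{n}(t)\xi,\xi\rangle_{\sH}$ is real for $\xi=H(t)^{1/2}\psi_{n}(t)$, and one is left with the $t$-derivative of the form $q$ evaluated at the frozen argument. Writing $q(t)(u,u)=\langle M(t)^{-1}\eta,\eta\rangle_{\sH}$ with $\eta=(H(0)+1)^{1/2}u$ and $M(t)=(H(0)+1)^{1/2}H(t)^{-1}(H(0)+1)^{1/2}$, hypotheses~1 and~2 show that $M(t)$ is bounded, uniformly invertible and norm-$C^{1}$, so that $|\dot e_{n}(t)|\leq C' e_{n}(t)$ with $C'$ independent of $n$. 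Gronwall's inequality, together with $e_{n}(0)=q(0)(\psi_{0},\psi_{0})\leq\|\psi_{0}\|_{1}^{\,2}$, then yields $\sup_{n}\sup_{|t|\leq R}\|\psi_{n}(t)\|_{\sH_{1}}<\infty$ for every $R>0$.

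Next I would pass to the limit $n\to\infty$. From the Schr\"odinger equation, $\|\psi_{n}(t)-\psi_{n}(s)\|_{-1}\leq C|t-s|$ uniformly in $n$, so the family $\{\psi_{n}\}$ is bounded in $\sH_{1}$ and equicontinuous in $\sH_{-1}$ on compact intervals; along a subsequence it converges, for each $t$, weakly in $\sH_{1}$ to some $\psi(t)\in\sH_{1}$, with $\psi(0)=\psi_{0}$, and $\psi$ is $\sH_{-1}$-norm continuous. To see that $\psi$ is a weak solution one passes to the limit in $-\ii\frac{\dd}{\dd t}\langle g,\psi_{n}(t)\rangle_{\sH}+q_{n}(t)(g,\psi_{n}(t))=0$ for $g\in\sH_{1}$, where $q_{n}(t)$ is the form of $H_{n}(t)$: here $H_{n}(t)^{1/2}g\to H(t)^{1/2}g$ strongly in $\sH$, while $H_{n}(t)^{1/2}\psi_{n}(t)$ is bounded in $\sH$ and converges weakly to $H(t)^{1/2}\psi(t)$ (test against $\sH_{1}$ and use density), so $q_{n}(t)(g,\psi_{n}(t))\to q(t)(g,\psi(t))$, all convergences being dominated by the uniform bounds. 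Because a weak solution is determined by its initial value (see below), the full sequence converges. Properties~1 and~3 follow in the standard way from the weak formulation together with the $\sH_{-1}$-continuity of $\psi$ and the norm-continuity of $t\mapsto H(t)\in\sB(\sH_{1},\sH_{-1})$ (itself a consequence of hypotheses~1 and~2), and linearity together with conservation of the $\sH$-norm (proved next) allows one to extend $\psi_{0}\mapsto\psi(t)$ to a unitary operator $U(t,s)$ on all of $\sH$.

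It remains to establish conservation of the $\sH$-norm and uniqueness. Given any weak solution $\psi$ with the stated regularity, I would mollify in the time variable, $\psi^{\delta}=\rho_{\delta}*\psi\in C^{\infty}(\R;\sH_{1})$, and use that $\dot\psi=-\ii H(\cdot)\psi(\cdot)$ holds in $\sH_{-1}$. Then $\frac{\dd}{\dd t}\|\psi^{\delta}(t)\|_{\sH}^{\,2}=2\,\mathrm{Re}\,\langle\dot\psi^{\delta}(t),\psi^{\delta}(t)\rangle_{-1,1}$, and letting $\delta\to0$ the right-hand side tends, in $L^{1}_{\mathrm{loc}}$, to $2\,\mathrm{Re}\,\bigl(-\ii\,q(t)(\psi(t),\psi(t))\bigr)=0$, since $q(t)(\psi(t),\psi(t))$ is real; hence $\|\psi(t)\|_{\sH}$ is constant. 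This gives property~4 and, applied to the difference of two solutions, uniqueness; combined with the weak $\sH$-convergence of $\psi_{n}(t)$ it also upgrades that convergence to strong convergence in $\sH$. The step I expect to be the main obstacle is precisely this passage to the limit: one has to check that hypotheses~1 and~2 are inherited by the truncations $H_{n}(t)$ with constants independent of $n$ (so that the energy estimate is genuinely $n$-uniform), and one has to organize the convergence of the regularized forms $q_{n}$ to $q$ on the basis of the weak $\sH_{1}$-convergence of $\psi_{n}(t)$ alone, strong convergence in $\sH_{1}$ being unavailable; establishing the sharp continuity statement in part~3 requires, in addition, showing that $t\mapsto H(t)\psi(t)$ is continuous in $\sH_{-1}$.
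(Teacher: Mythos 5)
The paper does not actually prove Theorem~\ref{thm_EPS_weak}: it is explicitly quoted from Theorem~II.27 of Simon's monograph \emph{Quantum Mechanics for Hamiltonians Defined as Quadratic Forms} (``\dots and that we reproduce below for the reader's convenience''), so there is no internal proof to compare your attempt against. The appropriate course here, following the paper, is simply to cite Simon rather than re-derive the result.

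That said, the Yosida-regularization route you sketch is the right kind of argument and the main devices are sensible: the identity $H_n(t)^{-1}=H(t)^{-1}+n^{-1}$ correctly transfers hypothesis~2 to the approximants, the cancellation in $\frac{\dd}{\dd t}q(t)(\psi_n(t),\psi_n(t))$ using commutativity of $H_n(t)$ with $H(t)^{1/2}$ is correct, the rewriting $q(t)(u,u)=\langle M(t)^{-1}\eta,\eta\rangle$ with $M(t)=(H(0)+1)^{1/2}H(t)^{-1}(H(0)+1)^{1/2}$ isolates exactly what hypotheses~1 and~2 control, and the combination of a uniform $\sH_1$-bound with $\sH_{-1}$-Lipschitz equicontinuity followed by a diagonal argument and mollification is the standard compactness/uniqueness scheme. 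But it is a sketch, not a proof, and the holes are not only the ones you flag. In particular: (i) to solve the approximate Schr\"odinger equation as an ODE in $\sH_1$ and conclude $\psi_n\in C^1(\R,\sH_1)$ you need $t\mapsto H_n(t)$ to be norm-continuous as a $\sB(\sH_1)$-valued map, which does not follow automatically from norm-$C^1$ in $\sB(\sH)$; (ii) showing that $M(t)$ is genuinely norm-$C^1$ requires justifying the interchange of the unbounded factors $(H(0)+1)^{1/2}$ with the difference quotient of $H(t)^{-1}$, a step for which local uniform control of $\sqrt{H(t)}(\frac{\dd}{\dd t}H(t)^{-1})\sqrt{H(t)}$ is needed; (iii) obtaining the strong-$\sH_{-1}$ limit in part~3 (i.e.\ $\sH_{-1}$-continuity of $t\mapsto H(t)\psi(t)$), and upgrading the distributional identity to the pointwise one in part~2, both need a continuity argument for $t\mapsto q(t)(g,\psi(t))$ that you have not supplied. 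None of these look fatal, but they would have to be filled in before this could stand as a proof; since the paper deliberately defers to Simon, you should do the same.
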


The propagator $U:(s,t)\in\R^2\mapsto{}U(t,s)$ associated to the
Hamiltonian $H(t)$ is defined by $U(t,s)\psi(s)=\psi(t)$. It is
unitary and strongly continuous according to point~4.

For the proof of the main result of this section,
Proposition~\ref{prop:dense_set_weak}, we need the following lemma.

\begin{lem}
  \label{lm_F_bounded}
  Let $\psi\in\sH_1$ be an eigenfunction of the Floquet operator
  $U(T,0)$. Then the function
  \begin{displaymath}
    F_{\psi}(t) := \|H(t)U(t,0)\psi\|_{-1}
  \end{displaymath}
  is bounded in $\R$:
  \begin{displaymath}
%    \label{Fbounded}
    \|F_{\psi}\|_{\infty} := \sup_{t \in \R} F_{\psi}(t) < +\infty.
  \end{displaymath}
\end{lem}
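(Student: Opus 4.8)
The plan is to reduce the estimate to a compact time interval by periodicity, and then to bound $F_\psi$ there by combining the uniform boundedness of $H(t)$ regarded as a map $\sH_1\to\sH_{-1}$ with the $\sH_1$-weak continuity of the trajectory $t\mapsto U(t,0)\psi$. First I would note that, $U(T,0)$ being unitary, the eigenfunction $\psi$ satisfies $U(T,0)\psi=\lambda\psi$ with $|\lambda|=1$. Using the Chapman--Kolmogorov equation together with the periodicity relation~(\ref{eq:U_period}) one has $U(t+T,0)=U(t+T,T)\,U(T,0)=U(t,0)\,U(T,0)$, whence $U(t+T,0)\psi=\lambda\,U(t,0)\psi$; since also $H(t+T)=H(t)$, this gives $F_\psi(t+T)=\|H(t)\,\lambda U(t,0)\psi\|_{-1}=F_\psi(t)$. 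Thus $F_\psi$ is $T$-periodic and it suffices to bound it on $[0,T]$.

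Next I would record that $H(t)\colon\sH_1\to\sH_{-1}$ is bounded uniformly in $t$. Indeed, for $u,v\in\sH_1$ the Cauchy--Schwarz inequality for the positive form $q(t)$ yields $|\langle H(t)u,v\rangle_{-1,1}|=|q(t)(u,v)|\le q(t)(u,u)^{1/2}q(t)(v,v)^{1/2}$, while assumption~1 of Theorem~\ref{thm_EPS_weak} gives, for every $w\in\sH_1$, $q(t)(w,w)=\langle H(t)w,w\rangle_{-1,1}\le C\,\langle(H(0)+1)w,w\rangle_{-1,1}=C\,\|w\|_1^{\,2}$. Hence $\|H(t)u\|_{-1}\le C\,\|u\|_1$ for all $t$ and all $u\in\sH_1$, so that $F_\psi(t)\le C\,\|U(t,0)\psi\|_1$.

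It then remains to bound $\|U(t,0)\psi\|_1$ on $[0,T]$. By Theorem~\ref{thm_EPS_weak} the function $t\mapsto\psi(t):=U(t,0)\psi$ takes values in $\sH_1$ and is $\sH_1$-weakly continuous (point~1); hence for each $g\in\sH_{-1}$ the continuous function $t\mapsto\langle g,\psi(t)\rangle_{-1,1}$ is bounded on the compact interval $[0,T]$, so the set $\{\psi(t):t\in[0,T]\}$ is weakly bounded in $\sH_1$ and therefore norm-bounded by the uniform boundedness principle. Combining the three steps, $\|F_\psi\|_\infty=\sup_{t\in[0,T]}F_\psi(t)\le C\sup_{t\in[0,T]}\|\psi(t)\|_1<\infty$. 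I expect this last step to be the one requiring care: only \emph{weak} continuity of $\psi(\cdot)$ in $\sH_1$ is available, so local boundedness cannot be read off from continuity directly and must be obtained through the uniform boundedness principle; the rest is routine manipulation of the periodicity relations and of the form bounds in Theorem~\ref{thm_EPS_weak}.
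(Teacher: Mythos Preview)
Your proof is correct and in fact somewhat more direct than the paper's. The overall architecture is the same---periodicity reduces to a compact time interval, and the uniform boundedness principle converts pointwise (weak) bounds into a uniform norm bound---but the two proofs differ in \emph{where} the UBP is applied. The paper first establishes that $t\mapsto H(t)\psi(t)$ is weakly continuous as an $\sH_{-1}$-valued function, arguing via points~2 and~3 of Theorem~\ref{thm_EPS_weak} (the Schr\"odinger equation and the $\sH_{-1}$-differentiability), and then applies the UBP directly to the family $\{H(t)\psi(t)\}$ in $\sH_{-1}$. You instead observe, via the Cauchy--Schwarz inequality for the positive form $q(t)$ and assumption~1 of the Theorem, that $\|H(t)\|_{\sH_1\to\sH_{-1}}\le C$ uniformly in $t$, which reduces the problem to bounding $\|\psi(t)\|_1$; this you handle using only point~1 of the Theorem (the $\sH_1$-weak continuity already given) and the UBP in $\sH_1$. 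Your route is shorter and uses fewer of the conclusions of Theorem~\ref{thm_EPS_weak}; the paper's route, on the other hand, yields as a by-product the weak continuity of $t\mapsto H(t)\psi(t)$ in $\sH_{-1}$, though this does not appear to be used elsewhere.
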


\begin{proof}
  First, we notice that the function
  $t\mapsto|\langle{}H(t)U(t,0)\psi,g\rangle_{-1,1}|$, with
  $g\in\sH_1$, is periodic with the period $T$. This can be seen from
  the equality
  \[
  U(t+T,0)\psi = U(t+T,T)U(T,0)\psi = \lambda U(t,0)\psi,
  \]
  and
  \begin{displaymath}
    |\langle{}H(t+T)U(t+T,0)\psi,g\rangle_{-1,1}|
    = |\lambda||\langle H(t+T)U(t,0)\psi,g\rangle_{-1,1}|
    = |\langle{}H(t)U(t,0)\psi,g\rangle_{-1,1}|
  \end{displaymath}
  since $|\lambda |=1$ and $H(t+T)=H(t)$.

  Moreover, the $\sH_{-1}$-valued function $t\mapsto{}H(t)U(t,0)\psi$
  is weakly continuous on $\R$. Indeed, for any given real numbers $s$
  and $t$, we derive from the following obvious decomposition, with
  $g\in\sH_1$,
  \begin{eqnarray*}
    && \langle H(t)\psi(t),g\rangle_{-1,1}
    - \langle H(s)\psi(s),g\rangle_{-1,1} \\
    && =\, \Big\langle H(t)\psi(t)
      - \ii\,\frac{\psi(t)-\psi(s)}{t-s},g\Big\rangle_{-1,1}
    - \Big\langle H(s)\psi(s)
      - \ii\,\frac{ \psi(t)-\psi(s)}{t-s},g\Big\rangle_{-1,1},
  \end{eqnarray*}
  that
  \begin{eqnarray*}
    & & |\langle H(t)\psi(t),g\rangle_{-1,1}
    - \langle H(s) \psi(s) , g \rangle_{-1,1}  | \\
    && \leq\, \Big\|H(t)\psi(t)
    - \ii\,\frac{\psi(t)-\psi(s)}{t-s}\Big\|_{-1}\|g\|_1
    + \Big|q(s)(\psi(s),g)
    - \ii\,\frac{\langle\psi(t),g\rangle_\sH
      - \langle\psi(s),g\rangle_\sH}{t-s}\Big|.
  \end{eqnarray*}
  Applying respectively points~3 and 2 of Theorem~\ref{thm_EPS_weak}
  one finds that the both terms on the RHS of the preceding inequality
  tend to zero as $t$ tends to $s$.

  This implies that for every $g\in\sH_1$ the function
  $t\mapsto|\langle{}H(t)U(t,0)\psi,g\rangle_{-1,1}|$ is bounded on
  $\R$ (since we just check that it is periodic). From the uniform
  boundedness principle it follows that
  \begin{displaymath}
    F_\psi(t) = \sup_{g\in\sH_1,\ \|g\|_1=1}|
    \langle{}H(t)U(t,0)\psi,g\rangle_{-1,1}|
  \end{displaymath}
  is bounded on $\R$ as well.
\end{proof}

\begin{prop}
  \label{prop:dense_set_weak}
  Let us suppose that the Floquet operator $U(T,0)$ has a pure point
  spectrum and admits a basis $\mathcal{B}$ formed by eigenfunctions
  belonging to $\sH_1$. Then the energy of the quantum system, when
  starting from any initial state $\psi\in\Span\mathcal{B}$, the set
  of finite linear combinations of vectors from $\mathcal{B}$, is
  bounded in the course of time:
  \[
  \sup_{t\in\R} |\langle H(t)U(t,0)\psi,U(t,0)\psi\rangle_{-1,1}|
  = \sup_{t\in\R} |q(t)\big(U(t,0)\psi,U(t,0)\psi\big)|
  < \infty.
  \]
\end{prop}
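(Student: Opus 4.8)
The plan is to reduce the assertion to the single‑eigenfunction estimate of Lemma~\ref{lm_F_bounded} by a linearity argument, and then to convert the resulting bound on $\|H(t)U(t,0)\psi\|_{-1}$ into a bound on the form value $q(t)\big(U(t,0)\psi,U(t,0)\psi\big)$ by means of the uniform comparison of forms supplied by Theorem~\ref{thm_EPS_weak}.

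For the first step I would fix $\psi\in\Span\mathcal{B}$ and write $\psi=\sum_{j=1}^{N}c_{j}\psi_{j}$ with the $\psi_{j}\in\mathcal{B}\subset\sH_{1}$ eigenfunctions of $U(T,0)$. Using the linearity of $U(t,0)$ and of the extended map $H(t)\colon\sH_{1}\to\sH_{-1}$, together with the triangle inequality for $\|\cdot\|_{-1}$, I get
\[
\|H(t)U(t,0)\psi\|_{-1}\le\sum_{j=1}^{N}|c_{j}|\,\|H(t)U(t,0)\psi_{j}\|_{-1}
=\sum_{j=1}^{N}|c_{j}|\,F_{\psi_{j}}(t)\le\sum_{j=1}^{N}|c_{j}|\,\|F_{\psi_{j}}\|_{\infty}\,,
\]
and the right‑hand side is finite by Lemma~\ref{lm_F_bounded}; thus $t\mapsto\|H(t)U(t,0)\psi\|_{-1}$ is bounded on $\R$.

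For the second step I would set $u=U(t,0)\psi\in\sH_{1}$. Since $q(t)$ is positive one has $q(t)(u,u)=\langle H(t)u,u\rangle_{-1,1}\ge0$, whence $q(t)(u,u)\le\|H(t)u\|_{-1}\,\|u\|_{1}$; and Assumption~1 of Theorem~\ref{thm_EPS_weak}, read in the form sense as $C^{-1}\|v\|_{1}^{2}\le q(t)(v,v)\le C\|v\|_{1}^{2}$ for all $v\in\sH_{1}$ and all $t$, gives in particular $\|u\|_{1}\le C^{1/2}q(t)(u,u)^{1/2}$. Combining these and cancelling a factor $q(t)(u,u)^{1/2}$ yields $q(t)(u,u)\le C\|H(t)u\|_{-1}^{2}$, so that together with the first step
\[
\sup_{t\in\R}\big|q(t)\big(U(t,0)\psi,U(t,0)\psi\big)\big|
=\sup_{t\in\R}q(t)(u,u)\le C\Big(\sum_{j=1}^{N}|c_{j}|\,\|F_{\psi_{j}}\|_{\infty}\Big)^{2}<\infty\,.
\]
I expect the main (and essentially the only) subtlety to lie in this last step: one must avoid a circular estimate when controlling $\|U(t,0)\psi\|_{1}$, and the clean device is to invoke the \emph{lower} form bound of Theorem~\ref{thm_EPS_weak} rather than to estimate $\|U(t,0)\psi\|_{1}$ by hand. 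Alternatively one could obtain $\sup_{t}\|U(t,0)\psi_{j}\|_{1}<\infty$ by repeating the uniform boundedness argument used in the proof of Lemma~\ref{lm_F_bounded}, exploiting the $T$‑periodicity of $t\mapsto\|U(t,0)\psi_{j}\|_{1}$ and the weak $\sH_{1}$‑continuity from point~1 of Theorem~\ref{thm_EPS_weak}, and then finish by Cauchy--Schwarz in the pairing $\langle\cdot,\cdot\rangle_{-1,1}$.
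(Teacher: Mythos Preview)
Your argument is correct, and in its second step it is somewhat more direct than the paper's. Both proofs rest on Lemma~\ref{lm_F_bounded} and on the form comparison in assumption~1 of Theorem~\ref{thm_EPS_weak}; they diverge in how the bound on $\|H(t)U(t,0)\psi\|_{-1}$ is turned into a bound on the energy. You first estimate $\|H(t)U(t,0)\psi\|_{-1}$ for the full $\psi$ by linearity, and then exploit the \emph{lower} form bound $C^{-1}\|u\|_{1}^{2}\le q(t)(u,u)$ to close the inequality $q(t)(u,u)\le\|H(t)u\|_{-1}\|u\|_{1}$ into $q(t)(u,u)\le C\|H(t)u\|_{-1}^{2}$, neatly sidestepping any separate control of $\|U(t,0)\psi\|_{1}$. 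The paper instead first proves, for each eigenfunction $\psi_{j}$, the quantitative bound $\|U(t,0)\psi_{j}\|_{1}\le C\,F_{\psi_{j}}(t)$ by inserting $H(t)^{-1}$ (bounded under the standing hypotheses) and using the \emph{upper} form bound, then expands the energy bilinearly as $\sum_{i,j}c_{i}\overline{c_{j}}\,\langle H(t)U(t,0)\psi_{i},U(t,0)\psi_{j}\rangle_{-1,1}$ and applies Cauchy--Schwarz term by term. Your route is shorter; the paper's isolates the estimate $\|U(t,0)\psi_{j}\|_{1}\le C\,F_{\psi_{j}}(t)$, which is of some independent interest and is essentially the ``alternative'' you sketch at the end.
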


\begin{proof}
  Recall that by our assumptions $H(t)^{-1}$ is a bounded operator on
  $\sH$ (see Theorem~\ref{thm_EPS_weak}). If $u\in\sH_1$, $v\in\sH$,
  then $H(t)^{-1}v\in\Dom{}H(t)$ and
  $q(t)(u,H(t)^{-1}v)=\langle{}u,v\rangle_\sH$. Consequently,
  \begin{displaymath}
    \langle H(t)u,H(t)^{-1}v\rangle_{-1,1}
    = q(t)\big(u,H(t)^{-1}v\big) = \langle u,v\rangle_\sH.
  \end{displaymath}

  We can assume that the basis $\BB$ is orthonormal. For any given
  $\psi$ in $\BB\subset\sH_1$, we first notice that $\|U(t,0)\psi\|_1$
  is bounded by $F_{\psi}(t)$ defined in Lemma~\ref{lm_F_bounded} up
  to a multiplicative constant $C$. Indeed, for any $g\in\sH$ we have
  \begin{equation}
    \label{normeres}
    |\langle U(t,0)\psi,g\rangle_{\sH}|
    = |\langle H(t)U(t,0)\psi,H(t)^{-1}g\rangle_{-1,1}|
    \leq \|H(t)U(t,0)\psi\|_{-1}\|H(t)^{-1}g\|_1,
  \end{equation}
  with
  \begin{displaymath}
    \| H(t)^{-1}g \|_1^2
    = \langle H(t)^{-1}g,(H(0)+1)H(t)^{-1} g \rangle_{\sH}
    \leq C \langle H(t)^{-1} g, g \rangle_{\sH}
    \leq C \|H(t)^{-1} g\|_1 \|g\|_{-1},
  \end{displaymath}
  according to assumption~1 in Theorem~\ref{thm_EPS_weak}. Thus
  $\|H(t)^{-1}g\|_1\leq C\|g\|_{-1}$ and (\ref{normeres}) becomes
  \begin{equation}
    \label{normeres2}
    |\langle g,U(t,0)\psi\rangle_{-1,1}|
    = |\langle U(t,0)\psi,g\rangle_{\sH}|
    \leq C\|H(t)U(t,0)\psi\|_{-1}\|g\|_{-1}
    = C F_{\psi}(t)\|g\|_{-1}.
  \end{equation}
  Furthermore, $\sH$ being dense in $\sH_{-1}$ in the norm topology,
  inequality (\ref{normeres2}) remains valid for any $g\in\sH_{-1}$
  implying
  \begin{equation}
    \label{normepsit}
    \|U(t,0)\psi\|_1 = \sup_{g\in\sH_{-1},\ g\neq 0}
    \frac{|\langle g,U(t,0)\psi\rangle_{-1,1}|}{\|g\|_{-1}}
    \leq C F_{\psi}(t).
  \end{equation}

  To complete the proof we pick a function $\varphi$ in
  $\Span\mathcal{B}$, $\varphi=\sum_{i=1}^N c_i\psi_i$, with
  $\psi_i\in\mathcal{B}$ and $c_i\in\C$ for $i=1,2,\ldots,N$. The
  energy function of the quantum system with the initial condition
  $\varphi$ decomposes as
  \[
  E_{\varphi}(t) \equiv
  \langle H(t)U(t,0)\varphi, U(t,0)\varphi\rangle_{-1,1}
  = \sum_{i,j=1}^N c_i \overline{c_j}\,
  \langle H(t)U(t,0)\psi_i, U(t,0)\psi_j\rangle_{-1,1}.
  \]
  Therefore,
  \begin{eqnarray*}
    |E_{\varphi}(t)| & \leq & \sum_{i,j=1}^N |c_i| |c_j|
    \| H(t)U(t,0)\psi_i \|_{-1} \|U(t,0)\psi_j\|_1 \\
    & \leq & C \sum_{i,j=1}^N |c_i||c_j| F_{\psi_i}(t)F_{\psi_j}(t)
    \,\leq\, C\max_{1\leq i\leq N}F_{\psi_i}(t)^{2}
    \left(\sum_{i=1}^N |c_i|\right)^{\!2},
  \end{eqnarray*}
  according to (\ref{normepsit}), so we finally obtain
  \[
  |E_{\varphi}(t)| \leq CN \|\varphi\|^2 \max_{1\leq i\leq N}
  \|F_{\psi_i}\|_{\infty}^2,
  \]
  by combining the Cauchy-Schwarz inequality with Lemma
  \ref{lm_F_bounded}.
\end{proof}

%%%%%%%%%%%%%%%%%%%%%%%%%%%%%%%%%%%%%%%%%%%%%%%%%%%%%%%%%%%%%%%%%%%%%
\section{Bounds on energy  and transition probabilities}%
%%%%%%%%%%%%%%%%%%%%%%%%%%%%%%%%%%%%%%%%%%%%%%%%%%%%%%%%%%%%%%%%%%%%%

The only assumptions needed in this section are that the domain
$\Dom{}H(t)$ of a $T$-periodic family of self-adjoint operators is
time-independent and that the propagator $U(t,s)$ associated to $H(t)$
exists in the usual sense, as recalled in the beginning of
Section~\ref{sec:floquet_hamiltonian}.

By the spectral theorem, the Floquet (monodromy) operator $U(T,0)$ can
be written in the form $U(T,0)=\exp(-\ii\, TH_{F})$ where $H_{F}$ is a
self-adjoint operator.  Of course, the choice of $H_{F}$ is highly
ambiguous. Let $U_{F}(t)$ be the family of unitary operators defined
by the equality
\begin{equation}
  \label{eq:Floquet-decomp}
  U(t,0)=U_{F}(t)e^{-\ii tH_{F}}.
\end{equation}
Then $U_{F}(0)=\I$ and from the periodicity of $U(t,s)$ (see
(\ref{eq:U_period})) it follows that $U_{F}(t)$ also depends on $t$
periodically. Relation (\ref{eq:Floquet-decomp}) is known as the
Floquet decomposition.

\begin{define}
  \label{def:Floquet_decomp}
  We shall say that a Floquet decomposition is $r$ times continuously
  differentiable in the strong sense for some $r\in\N$ if this is case
  for the family $U_{F}(t)$.

  Furthermore, we shall say that a Floquet decomposition is relatively
  continuously differentiable in the strong sense if the family
  $U_{F}(t)(H_F+\ii)^{-1}$ is continuously differentiable in the
  strong sense. Equivalently this means that for all
  $\psi\in\Dom{}H_F$ the vector-valued function $U_F(t)\psi$ is
  continuously differentiable.
\end{define}

Assume that the propagator $U(t,s)$ admits a Floquet decomposition
which is relatively continuously differentiable in the strong sense.
Set
\begin{equation}
  \label{eq:S_eq_diff_UF}
  S_F(t) = \ii\, U_{F}(t)^{-1}\partial_{t}U_{F}(t),\textrm{~~}
  \Dom{}S_F(t) = \Dom{}H_F.
\end{equation}
By the uniform boundedness principle, $S_F(t)$ is $H_F$--bounded for
all $t\in\R$.  Using the periodicity of $U_{F}(t)$ and applying again
the uniform boundedness principle one finds that
$S_F(t)(H_F+\ii)^{-1}$ is bounded uniformly in $t$. Moreover, $S_F(t)$
is a symmetric operator.

If the Floquet decomposition is even continuously differentiable in
the strong sense then $S_F(t)$ will be naturally supposed to be
defined on the entire space $\sH$. Referring again to the uniform
boundedness principle, in this case we have $S_F(t)\in\sB(\sH)$. Using
the periodicity of $U_{F}(t)$ and applying the uniform boundedness
principle once more one finds that $S_F(t)$ is bounded uniformly in
$t$. Hence $S_F:=\int^{\oplus}S_F(t)\,\dd{}t$ is a bounded operator in
$\sK$ whose norm equals
\[
\|S_F\|=\sup_{t\in\R}\|S_F(t)\|.
\]
Moreover, $S_F(t)$ is a Hermitian operator.

\begin{lem}
  \label{lem:H_eq_S_plus_HAsendw}
  Assume that a Floquet decomposition (\ref{eq:Floquet-decomp}) is
  relatively continuously differentiable in the strong sense and that
  the relative bound of $S_F(t)$ with respect to $H_F$ is less than
  one for all $t$. Then
  \begin{equation}
    \label{eq:H_eq_S_plus_HAsendw}
    \forall t\in\R,\textrm{~}H(t)
    = U_{F}(t)\big(H_{F}+S_F(t)\big)U_{F}(t)^{-1}.
  \end{equation}
  In particular,
  \begin{equation}
    \label{eq:H0_eq_S0_plus_HF}
    H(0) = H_{F}+S_F(0).
  \end{equation}
  Furthermore,
  \[
  \Dom(H_{F})=\Dom(H(0))
  \]
  and this domain is $U_F(t)$ invariant.
\end{lem}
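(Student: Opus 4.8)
The plan is to start from the Floquet decomposition $U(t,0)=U_F(t)e^{-\ii tH_F}$ and differentiate it formally, then justify the manipulations rigorously. First I would work on a convenient dense set of vectors, namely $\psi\in\Dom H_F$ with the extra property that $e^{-\ii tH_F}\psi$ stays in $\Dom H_F$ for all $t$ (e.g.\ vectors of the form $\chi(H_F)\varphi$ with $\chi$ compactly supported, which are analytic vectors for $H_F$). For such $\psi$ the function $t\mapsto U(t,0)\psi$ is differentiable: indeed $e^{-\ii tH_F}\psi$ is strongly $C^1$ with values in $\Dom H_F$, and $U_F(t)$ applied to a $\Dom H_F$-valued $C^1$ curve is differentiable by the relative-differentiability hypothesis together with the Leibniz rule (here one uses that $S_F(t)(H_F+\ii)^{-1}$ is bounded uniformly in $t$, established just before the lemma). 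Differentiating,
\[
\ii\,\partial_t U(t,0)\psi
= U_F(t)\big(S_F(t)+H_F\big)e^{-\ii tH_F}\psi
= U_F(t)\big(H_F+S_F(t)\big)U_F(t)^{-1}\,U(t,0)\psi,
\]
using $\ii\,\partial_t U_F(t) = U_F(t)S_F(t)$ on $\Dom H_F$ from \eqref{eq:S_eq_diff_UF}. On the other hand $U(t,0)\psi\in\Dom H(0)$ and $\ii\,\partial_t U(t,0)\psi = H(t)U(t,0)\psi$ by the defining property of the propagator. Since $U(t,0)$ is a bijection of $\Dom H(0)$ onto itself and the chosen vectors $\psi$ form a core argument's worth of range, we get that on a dense set $H(t)\xi = U_F(t)(H_F+S_F(t))U_F(t)^{-1}\xi$.

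Next I would upgrade this from a dense set to an operator identity, and this is where the hypothesis that the relative bound of $S_F(t)$ is less than one does the real work. By the Kato--Rellich theorem, $H_F+S_F(t)$ with domain $\Dom H_F$ is self-adjoint; hence $U_F(t)(H_F+S_F(t))U_F(t)^{-1}$ is self-adjoint with domain $U_F(t)\Dom H_F$. We have shown it agrees with the self-adjoint operator $H(t)$ on a dense subspace; to conclude equality I would check that this dense subspace is a core for $U_F(t)(H_F+S_F(t))U_F(t)^{-1}$ (equivalently, that $e^{-\ii tH_F}$ maps the chosen analytic-vector set into a core of $H_F$, which is standard) — two self-adjoint operators agreeing on a common core coincide. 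Putting $t=0$ and $U_F(0)=\I$ gives \eqref{eq:H0_eq_S0_plus_HF}. The domain statement $\Dom H_F = \Dom H(0)$ then follows from \eqref{eq:H0_eq_S0_plus_HF} via Kato--Rellich ($\Dom(H_F+S_F(0))=\Dom H_F$), and the $U_F(t)$-invariance of this domain is read off from \eqref{eq:H_eq_S_plus_HAsendw}: since $\Dom H(t)=\Dom H(0)$ by hypothesis and the right-hand side has domain $U_F(t)\Dom H_F = U_F(t)\Dom H(0)$, we get $U_F(t)\Dom H(0)=\Dom H(0)$.

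The main obstacle is the rigorous differentiation step: one must be careful that $e^{-\ii tH_F}\psi$ need not stay in $\Dom H_F$ for a general $\psi\in\Dom H_F$, so the naive product rule $\partial_t(U_F(t)e^{-\ii tH_F}\psi)$ is not immediately licit on all of $\Dom H_F$; the remedy is to restrict to the dense set of $H_F$-analytic (or at least $C^\infty$) vectors, carry out the computation there, and then invoke the core argument. A secondary technical point is verifying that $S_F(t)(H_F+\ii)^{-1}$ being bounded (uniformly in $t$) suffices to control the cross term $(\partial_t U_F(t))e^{-\ii tH_F}\psi = U_F(t)S_F(t)e^{-\ii tH_F}\psi$, which again is fine on the analytic-vector set where $S_F(t)e^{-\ii tH_F}\psi$ is well-defined and depends continuously on $t$.
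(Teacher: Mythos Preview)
Your ``main obstacle'' is not an obstacle at all: for any self-adjoint $H_F$ and any $\psi\in\Dom H_F$, the vector $e^{-\ii tH_F}\psi$ stays in $\Dom H_F$ for all $t$ (this is immediate from the spectral theorem, since $e^{-\ii tH_F}$ commutes with $(H_F+\ii)^{-1}$), and $t\mapsto e^{-\ii tH_F}\psi$ is continuous in the graph norm of $H_F$. So the product-rule computation of $\ii\,\partial_t\big(U_F(t)e^{-\ii tH_F}\psi\big)$ goes through for every $\psi\in\Dom H_F$; restricting to analytic vectors is unnecessary.

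The genuine gap lies elsewhere. You write ``On the other hand $U(t,0)\psi\in\Dom H(0)$ and $\ii\,\partial_t U(t,0)\psi = H(t)U(t,0)\psi$ by the defining property of the propagator.'' But the defining property only gives this for $\psi\in\Dom H(0)$, whereas your $\psi$ was chosen in $\Dom H_F$. At this point you do not yet know $\Dom H_F\subset\Dom H(0)$ --- that equality is part of the conclusion --- so the step is circular. (It can be repaired: strong differentiability of $t\mapsto U(t,0)\psi$ at $t=0$ together with self-adjointness of $H(0)$ forces $\psi\in\Dom H(0)$ by a duality argument, but you did not supply this.)

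The paper sidesteps the circularity entirely. It defines $\widetilde H(t):=U_F(t)(H_F+S_F(t))U_F(t)^{-1}$ on $U_F(t)(\Dom H_F)$, observes this is self-adjoint by Kato--Rellich, and then shows by the same product-rule computation that $U(t,s)$ is a propagator for the family $\widetilde H(t)$. Since $U(t,s)$ is also the propagator for $H(t)$, and two self-adjoint families generating the same propagator must coincide, one gets $H(t)=\widetilde H(t)$ as operators. The domain statements then drop out: $\Dom H(0)=\Dom H(t)=U_F(t)(\Dom H_F)$, and setting $t=0$ gives $\Dom H(0)=\Dom H_F$. This uniqueness-of-generator argument replaces your core argument and avoids ever needing to know in advance that the two domains match.
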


\begin{proof}
  By the assumptions and the Kato-Rellich theorem (see, for example,
  \cite{reed_simon75:methods_modern_mf_II}),
  \begin{displaymath}
    \widetilde{H}(t) := U_{F}(t)\big(H_{F}+S_F(t)\big)U_{F}(t)^{-1},
    \textrm{~~}\Dom{}\widetilde{H}(t)=U_F(t)(\Dom{}H_F),
  \end{displaymath}
  is a $T$-periodic family of self-adjoint operators. From
  (\ref{eq:Floquet-decomp}) it follows that
  \begin{displaymath}
    U(t,s) = U_F(t)e^{-\ii(t-s)H_F}U_F(s)^{-1}.
  \end{displaymath}
  From this relation it is obvious that
  \begin{displaymath}
    U(t,s)(\Dom{}\widetilde{H}(s)) = \Dom{}\widetilde{H}(t).
  \end{displaymath}
  Suppose that $\varphi\in{}U_F(s)(\Dom{}H_F)$ and thus
  $\varphi=U_F(s)\psi$ for some $\psi\in\Dom{}H_F$. A straightforward
  computation yields
  \begin{displaymath}
    \ii\,\partial_tU(t,s)\varphi
    = \ii\,\partial_t\big(U_F(t)e^{-\ii(t-s)H_F}\psi\big)
    = U_F(t)(H_F+S_F(t))e^{-\ii(t-s)H_F}\psi
    = \widetilde{H}(t)U(t,s)\varphi.
  \end{displaymath}
  Hence $U(t,s)$ is a propagator associated to the family
  $\widetilde{H}(t)$.

  Using the property of self-adjointness one can easily see that the
  uniqueness of the relation between a Hamiltonian and a propagator
  applies also in the following direction: if two (in general
  time-dependent) Hamiltonians generate the same propagator then they
  are equal. In our case this means that $\widetilde{H}(t)=H(t)$ for
  all $t$, i.e., equality (\ref{eq:H_eq_S_plus_HAsendw}) holds true.
  Consequently, $U_F(t)(\Dom{}H_F)=\Dom{}H(t)=\Dom{}H(0)$ and setting
  $t=0$ we have $\Dom{}H_F=\Dom{}H(0)$.
\end{proof}

Next we shall show that the relative continuous differentiability of
$U_{F}(t)$ implies the dynamical stability.

\begin{prop}
  \label{prop:energy_bounded}
  Under the same assumptions as in
  Lemma~\ref{lem:H_eq_S_plus_HAsendw}, the energy of the system
  described by the Hamiltonian $H(t)$ is uniformly bounded for any
  initial condition. More precisely,
  \[
  \forall\psi\in\Dom(H(0)),\textrm{~}
  \sup_{t\in\R}\Vert H(t)U(t,0)\psi\Vert \leq C_\psi
  \]
  where
  \begin{displaymath}
    C_\psi = \Vert H_F\psi\Vert
    +\sup_t\Vert S_F(t)(H_F+\ii)^{-1}\Vert\,
    \Vert(H_F+\ii)\psi\Vert.
  \end{displaymath}
\end{prop}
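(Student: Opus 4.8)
The plan is to use the Floquet decomposition directly together with Lemma~\ref{lem:H_eq_S_plus_HAsendw}. Given $\psi\in\Dom(H(0))=\Dom(H_F)$, write $U(t,0)\psi=U_F(t)e^{-\ii tH_F}\psi$ and apply the identity $H(t)=U_F(t)(H_F+S_F(t))U_F(t)^{-1}$ from \eqref{eq:H_eq_S_plus_HAsendw}. Since $U_F(t)$ is unitary, this gives
\[
\Vert H(t)U(t,0)\psi\Vert
= \big\Vert U_F(t)\big(H_F+S_F(t)\big)e^{-\ii tH_F}\psi\big\Vert
= \big\Vert\big(H_F+S_F(t)\big)e^{-\ii tH_F}\psi\big\Vert,
\]
so the problem reduces to bounding the right-hand side uniformly in $t$.

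Next I would check that $e^{-\ii tH_F}\psi$ indeed lies in $\Dom(H_F)$, which is immediate because $\psi\in\Dom(H_F)$ and the unitary group generated by $H_F$ leaves $\Dom(H_F)$ invariant and commutes with $H_F$ on it. Hence $\Vert H_Fe^{-\ii tH_F}\psi\Vert=\Vert e^{-\ii tH_F}H_F\psi\Vert=\Vert H_F\psi\Vert$, giving the first term in $C_\psi$. For the second term, insert $(H_F+\ii)^{-1}(H_F+\ii)$ and use that $(H_F+\ii)$ commutes with $e^{-\ii tH_F}$:
\[
\Vert S_F(t)e^{-\ii tH_F}\psi\Vert
= \big\Vert S_F(t)(H_F+\ii)^{-1}(H_F+\ii)e^{-\ii tH_F}\psi\big\Vert
\leq \Vert S_F(t)(H_F+\ii)^{-1}\Vert\,\Vert(H_F+\ii)\psi\Vert,
\]
using again that $e^{-\ii tH_F}$ is unitary and commutes with $H_F+\ii$. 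Taking the supremum over $t$ and invoking the triangle inequality then yields the claimed bound with
\[
C_\psi = \Vert H_F\psi\Vert
+\sup_t\Vert S_F(t)(H_F+\ii)^{-1}\Vert\,\Vert(H_F+\ii)\psi\Vert.
\]

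There is no serious obstacle here: the work was already done in Lemma~\ref{lem:H_eq_S_plus_HAsendw}, which identifies $H(t)$ with the unitary conjugate of $H_F+S_F(t)$ and establishes the crucial domain equality $\Dom(H_F)=\Dom(H(0))$. The only point requiring a word of care is that the quantity $\sup_t\Vert S_F(t)(H_F+\ii)^{-1}\Vert$ is finite; this was noted in the text immediately after \eqref{eq:S_eq_diff_UF}, where the uniform boundedness principle combined with the $T$-periodicity of $U_F(t)$ shows that $S_F(t)(H_F+\ii)^{-1}$ is bounded uniformly in $t$. With that in hand the estimate is a two-line manipulation, so I would present it essentially as written above.
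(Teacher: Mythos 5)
Your proof is correct and follows exactly the paper's own argument: apply Lemma~\ref{lem:H_eq_S_plus_HAsendw} to rewrite $H(t)U(t,0)\psi$ as $U_F(t)(H_F+S_F(t))e^{-\ii tH_F}\psi$, use unitarity of $U_F(t)$ to drop it from the norm, and then estimate the two terms separately via the triangle inequality and the insertion of $(H_F+\ii)^{-1}(H_F+\ii)$. The paper compresses the final estimate into a single "$\leq C_\psi$" step; you simply spell out that step, which is the right thing to do.
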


\begin{proof}
  From equalities (\ref{eq:Floquet-decomp}) and
  (\ref{eq:H_eq_S_plus_HAsendw}) it follows that
  \[
  \hspace{7em}
  \Vert H(t)U(t,0)\psi\Vert
  = \Vert(H_F+S_F(t))e^{-\ii tH_F}\psi\Vert
  \leq C_\psi.
  \hspace{7em}\qed
  \]
  \renewcommand{\qed}{}
\end{proof}

\begin{rem*}
  Proposition~\ref{prop:energy_bounded} even implies that the mean
  value of the square of energy, $H(t)^2$, is uniformly bounded.
\end{rem*}

Another application is an estimate of transition probabilities under
the assumption of the strong differentiability of $U_F(t)$. To this
end we shall need the following lemma.

\begin{lem}
  \label{lem:AX-XB_eq_Y}
  Assume that $X,Y\in\sB(\sH)$, $A$ and $B$ are bounded Hermitian
  operators on $\sH$ such that
  \begin{equation}
    \label{eq:AX-XB_eq_Y}
    AX-XB = Y.
  \end{equation}
  If there exist two disjoint closed intervals containing respectively
  $\Spec(A)$ and $\Spec(B)$ then
  \[
  \Vert X\Vert \leq \frac{\Vert{}Y\Vert}{\dist(\Spec(A),\Spec(B))}\,.
  \]
\end{lem}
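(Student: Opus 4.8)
The plan is to reduce the bound to the scalar inequality on the spectra by representing $X$ through the resolvents of $A$ and $B$ via a contour integral. Concretely, since $\Spec(A)$ and $\Spec(B)$ lie in disjoint closed intervals, I can pick a simple closed positively oriented contour $\Gamma$ in $\C$ that encircles $\Spec(A)$ once and avoids $\Spec(B)$ entirely, with the property that $\dist(\Gamma,\Spec(B))$ and $\dist(\Gamma,\Spec(A))$ are both close to $\tfrac12\dist(\Spec(A),\Spec(B))$; the cleanest choice is to let $\Gamma$ run along the two vertical lines at (signed) distance $\tfrac12\dist(\Spec(A),\Spec(B))$ from the interval containing $\Spec(A)$, closed up far away where it contributes nothing in the limit, or — better for getting the sharp constant — to use the integral formula
\[
  X = \frac{1}{2\pi\ii}\oint_{\Gamma}(z-A)^{-1}Y(z-B)^{-1}\,\dd z,
\]
which follows from \eqref{eq:AX-XB_eq_Y} by writing $(z-A)^{-1}Y - Y(z-B)^{-1} = (z-A)^{-1}\big((z-B)-(z-A)\big)(z-B)^{-1} \cdot(\text{something})$; more precisely one checks $(z-A)^{-1}Y(z-B)^{-1} = (z-A)^{-1}\big(AX-XB\big)(z-B)^{-1} = X(z-B)^{-1} - (z-A)^{-1}X$, whose contour integral over $\Gamma$ picks out $X$ from the first term (as $\Gamma$ encircles $\Spec(A)$, hence the poles of $(z-A)^{-1}X$ in the variable $z$... wait, $(z-A)^{-1}$ has its singularities on $\Spec(A)\subset\text{int}\,\Gamma$) and zero from the second ($(z-A)^{-1}X$ is analytic outside $\Spec(A)$, and $\tfrac{1}{2\pi\ii}\oint_\Gamma (z-A)^{-1}\,\dd z = \I$, so $\tfrac{1}{2\pi\ii}\oint_\Gamma X(z-B)^{-1}\,\dd z = 0$ since $(z-B)^{-1}$ is analytic inside $\Gamma$, while $\tfrac{1}{2\pi\ii}\oint_\Gamma (z-A)^{-1}X\,\dd z = X$). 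This gives the displayed formula for $X$.

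From the integral representation I would estimate directly:
\[
  \Vert X\Vert \leq \frac{1}{2\pi}\oint_{\Gamma}
  \Vert(z-A)^{-1}\Vert\,\Vert Y\Vert\,\Vert(z-B)^{-1}\Vert\,|\dd z|
  \leq \frac{\Vert Y\Vert}{2\pi}\oint_{\Gamma}
  \frac{|\dd z|}{\dist(z,\Spec(A))\,\dist(z,\Spec(B))},
\]
using the spectral-theorem bound $\Vert(z-A)^{-1}\Vert = \dist(z,\Spec(A))^{-1}$ for the self-adjoint (hence Hermitian, bounded) operator $A$, and likewise for $B$. To extract the constant $\dist(\Spec(A),\Spec(B))^{-1}$ one then optimizes the contour. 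The slick way is to collapse $\Gamma$ onto the interval $[a_-,a_+]$ containing $\Spec(A)$: parametrize $\Gamma$ as two horizontal segments at imaginary parts $\pm\varepsilon$ over $[a_--\varepsilon,a_++\varepsilon]$ joined by small end-caps, let $\varepsilon\downarrow 0$, and observe that in the limit the integrand, restricted to $z=x\in[a_-,a_+]$ on the real axis, still has $\dist(x,\Spec(B))\geq \dist(\Spec(A),\Spec(B))$ — however the factor $\dist(z,\Spec(A))^{-1}$ blows up, so this naive collapse is not quite legitimate. The correct optimization keeps $\Gamma$ at a fixed positive distance: take $\Gamma$ to be the boundary of an $\varepsilon$-neighbourhood of $[a_-,a_+]$; then $\dist(z,\Spec(A)) \geq \min(\varepsilon, \dots)$ on the flat parts equals $\varepsilon$, $\dist(z,\Spec(B)) \geq d-\varepsilon$ where $d:=\dist(\Spec(A),\Spec(B))$, and the total arc length is $2(a_+-a_-) + 2\pi\varepsilon$; the bound becomes roughly $\Vert Y\Vert\,\big((a_+-a_-)/(\pi\varepsilon(d-\varepsilon)) + 1/(d-\varepsilon)\big)$, which is minimized not at the value giving the clean constant. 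So the contour estimate alone gives \emph{a} bound but not the stated sharp one.

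The clean route to the sharp constant, which I would actually adopt, bypasses contours: use the integral formula
\[
  X = \frac{\ii}{2\pi}\int_{-\infty}^{\infty}
  e^{-\ii sA}\,Y\,e^{\ii sB}\,\widehat{\chi}(s)\,\dd s
\]
is also delicate; instead, the textbook argument (due to essentially Sz.-Nagy/Rosenblum) is: write $d = \dist(\Spec(A),\Spec(B))$, and without loss of generality (replacing $A$ by $A-c$, $B$ by $B-c$, which does not change \eqref{eq:AX-XB_eq_Y} or $d$) assume $\Spec(A)\subset[d/2,\infty)$ and $\Spec(B)\subset(-\infty,-d/2]$ — i.e.\ $A\geq d/2$ and $B\leq -d/2$ are separated by $0$ with a symmetric gap. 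Wait: disjoint closed intervals need not be symmetric, but they can be \emph{made} symmetric about $0$ after translation only if they have a common gap midpoint, which after translation we can always arrange: translate so the gap $(\,\beta,\alpha\,)$ between the interval $[\,*,\beta\,]\ni\Spec(B)$ and $[\,\alpha,*\,]\ni\Spec(A)$ is centered at $0$, i.e.\ $\alpha = -\beta = d/2$ where $d = \alpha-\beta$. Now the Rosenblum formula reads
\[
  X = -\frac{1}{2\pi\ii}\int_{-\infty}^{\infty}(A+\ii y)^{-1}\,Y\,(B+\ii y)^{-1}\,\dd y
  \quad\text{(or an analogous one),}
\]
and since $A\geq d/2>0$ we get $\Vert(A+\ii y)^{-1}\Vert \leq (d^2/4+y^2)^{-1/2}$ and similarly $\Vert(B+\ii y)^{-1}\Vert\leq(d^2/4+y^2)^{-1/2}$, whence
\[
  \Vert X\Vert \leq \frac{\Vert Y\Vert}{2\pi}\int_{-\infty}^{\infty}
  \frac{\dd y}{d^2/4+y^2} = \frac{\Vert Y\Vert}{2\pi}\cdot\frac{2}{d}\cdot\frac{\pi}{1}
  \cdot\frac{1}{2}
  \cdot 2
  = \frac{\Vert Y\Vert}{d}.
\]
(The elementary integral $\int_{-\infty}^\infty (d^2/4+y^2)^{-1}\dd y = \tfrac{2}{d}\cdot\arctan\big|_{-\infty}^{\infty} = \tfrac{2}{d}\cdot\pi = \tfrac{2\pi}{d}$, giving exactly $\Vert Y\Vert/d$ after dividing by $2\pi$.) The main obstacle is therefore entirely at the level of setting up the right integral representation and the normalization that makes the separating strip symmetric — both are standard but must be executed carefully; once that is in place, the estimate is a one-line application of the spectral theorem and a table integral. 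I would also note the degenerate case $Y=0$ forces $X=0$ when the spectra are disjoint, recovered trivially, and that boundedness and Hermiticity of $A,B$ (hence self-adjointness) are exactly what licenses the resolvent bounds $\Vert(A+\ii y)^{-1}\Vert \leq \dist(\ii y,\Spec(A))^{-1}$.
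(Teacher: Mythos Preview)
Your final argument is correct and is essentially identical to the paper's: the paper also writes $X=\frac{1}{2\pi\ii}\oint_\gamma(A-z)^{-1}Y(B-z)^{-1}\,\dd z$ and then collapses $\gamma$ to the vertical line through the midpoint of the gap (which, after your recentering, is exactly the imaginary axis), obtaining the sharp constant from the same elementary integral $\int_\R(d^2/4+y^2)^{-1}\,\dd y=2\pi/d$. Your initial detour through a closed contour hugging $\Spec(A)$ was rightly abandoned --- as you noticed, that route produces only a multiple of the sharp bound.
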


\begin{proof}
  For the sake of definiteness let us suppose that
  $\inf\Spec(B)>\sup\Spec(A)$.  The solution $X$ of equation
  (\ref{eq:AX-XB_eq_Y}) is unique and given by the
  formula
  \begin{equation}
    \label{eq:inverse_comm_as_int}
    X=\frac{1}{2\pi\ii}\oint_{\gamma}(A-z)^{-1}Y(B-z)^{-1}dz\,.
  \end{equation}
  After a usual limit procedure we can choose for the integration path
  $\gamma$ in (\ref{eq:inverse_comm_as_int}) the line which is
  parallel to the imaginary axis and intersects the real axis in the
  point $(\sup\Spec(A)+\inf\Spec(B))/2$. Integral
  (\ref{eq:inverse_comm_as_int}) admits a simple estimate leading to
  the desired inequality.
\end{proof}

\begin{rem*}
  Let us note that an estimate of this sort still exists when the
  spectra of $A$ and $B$ are interlaced provided
  $\dist(\Spec(A),\Spec(B))>0$. In the general case, as discussed in
  article \cite{bhatia_rosenthal97:_how_operator_eq_ax_xb_y}, it holds
  true that
  \[
  \Vert X\Vert\leq\frac{\pi}{2}
  \frac{\Vert Y\Vert}{\dist(\Spec(A),\Spec(B))}\,.
  \]
\end{rem*}

\begin{prop}
  \label{prop:trans_probability}
  Assume that the propagator $U(t,s)$ admits a Floquet decomposition
  (\ref{eq:Floquet-decomp}) which is continuously differentiable in
  the strong sense. Let $P(t,\cdot)$ be the projector-valued measure
  from the spectral decomposition of $H(t)$. Let
  $\Delta_{1},\Delta_{2}\subset\R$ be two intervals such that
  $\dist(\Delta_{1},\Delta_{2})>0$. Then it holds true that
  \begin{displaymath}
    \forall s,t\in\R,\textrm{~}\Vert P(t,\Delta_{1})
    U(t,s)P(s,\Delta_{2})\Vert
    \leq\frac{2\|S_F\|}{\dist(\Delta_{1},\Delta_{2})}\,.
  \end{displaymath}
  In particular, if $E_{n}(t)$ and $E_{m}(s)$ are eigenvalues of
  $H(t)$ and $H(s)$, respectively, $E_{n}(t)\neq{}E_{m}(s)$, and if
  $P_{n}(t)$ and $P_{m}(s)$ denote the projectors onto the
  corresponding eigenspaces then
  \[
  \Vert P_{n}(t)U(t,s)P_{m}(s)\Vert
  \leq\frac{2\|S_F\|}{|E_{n}(t)-E_{m}(s)|}\,.
  \]
\end{prop}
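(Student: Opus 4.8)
The plan is to reduce the transition-probability estimate to an application of Lemma~\ref{lem:AX-XB_eq_Y}, using the Floquet decomposition to rewrite $U(t,s)$ in terms of $H_F$ and $U_F(t)$. By Lemma~\ref{lem:H_eq_S_plus_HAsendw} (whose hypotheses are satisfied here, since strong differentiability of $U_F(t)$ makes $S_F(t)$ bounded, hence of relative bound zero with respect to $H_F$) we have $H(t)=U_F(t)(H_F+S_F(t))U_F(t)^{-1}$, and $U(t,s)=U_F(t)e^{-\ii(t-s)H_F}U_F(s)^{-1}$. Conjugating by $U_F$, the operator $P(t,\Delta_1)U(t,s)P(s,\Delta_2)$ is unitarily equivalent to $Q_1(t)\,e^{-\ii(t-s)H_F}\,Q_2(s)$, where $Q_j$ is the spectral projection of $H_F+S_F(\cdot)$ onto the respective interval $\Delta_j$; in particular the norm is unchanged. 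So it suffices to bound $\Vert Q_1 e^{-\ii(t-s)H_F} Q_2\Vert$.

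The key step is to produce a commutator equation in the spirit of \eqref{eq:AX-XB_eq_Y}. I would set $X = Q_1(t)\,e^{-\ii(t-s)H_F}\,Q_2(s)$ and look for bounded Hermitian operators $A,B$ and a bounded $Y$ with $AX-XB=Y$ and $\Vert Y\Vert\le 2\Vert S_F\Vert\,\Vert X\Vert$-free bound. The natural choice is to let $A=H_F+S_F(t)$ act on the left and $B=H_F+S_F(s)$ on the right, but these are unbounded, so instead one works on the ranges of $Q_1(t)$ and $Q_2(s)$, which are spectral subspaces of $H_F+S_F(t)$ and $H_F+S_F(s)$ for the bounded intervals $\Delta_1$ and $\Delta_2$; on those subspaces $H_F+S_F(\cdot)$ is bounded. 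Concretely, compute
\begin{displaymath}
  (H_F+S_F(t))X - X(H_F+S_F(s))
  = [H_F, e^{-\ii(t-s)H_F}]_{\text{sandwiched}} + S_F(t)X - X S_F(s),
\end{displaymath}
and observe that $H_F$ commutes with $e^{-\ii(t-s)H_F}$, so the first bracket drops out once one is careful that $Q_1(t)H_F$ and $H_F Q_2(s)$ are the relevant bounded restrictions. Thus $AX - XB = S_F(t)X - XS_F(s) =: Y$ with $\Vert Y\Vert \le 2\Vert S_F\Vert\,\Vert X\Vert$. Since $\dist(\Spec(A|_{\Ran Q_1(t)}),\Spec(B|_{\Ran Q_2(s)}))\ge \dist(\Delta_1,\Delta_2)>0$, Lemma~\ref{lem:AX-XB_eq_Y} (together with the interlaced-spectra remark, or simply the disjoint-interval case since $\Delta_1,\Delta_2$ are disjoint) gives $\Vert X\Vert \le 2\Vert S_F\Vert\,\Vert X\Vert/\dist(\Delta_1,\Delta_2)$... which is not yet what we want. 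The fix is that Lemma~\ref{lem:AX-XB_eq_Y} is an a priori bound on the \emph{unique} solution of the commutator equation, so instead I would argue: the map $Z\mapsto AZ-ZB$ is invertible on $\sB$ with inverse of norm $\le 1/\dist$, hence $\Vert X\Vert = \Vert (A\,\cdot-\cdot\,B)^{-1}Y\Vert \le \Vert Y\Vert/\dist \le 2\Vert S_F\Vert\,\Vert X\Vert/\dist$; this is circular unless we instead treat $H_F$, not $H_F+S_F$, as giving the spectral gap. So the cleaner route is to take $A = H_F$ restricted appropriately, $B=H_F$ restricted appropriately; then $AX-XB=0$ plus the $S_F$ terms — but then $A,B$ have interlaced (in fact equal) spectra. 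The resolution I would actually use: conjugate so that the \emph{disjoint} spectral intervals come from $H(t)$ and $H(s)$ directly, i.e.\ keep $X = P(t,\Delta_1)U(t,s)P(s,\Delta_2)$, put $A=H(t)$, $B=H(s)$ (bounded on the relevant spectral subspaces), and compute $AX-XB = H(t)U(t,s) - U(t,s)H(s)$ restricted. Using $\ii\partial_t U(t,s)=H(t)U(t,s)$ and $-\ii\partial_s U(t,s) = -U(t,s)H(s)$ is the wrong track since $X$ is fixed; instead use the Floquet form: $H(t)U(t,s) - U(t,s)H(s) = U_F(t)(S_F(t) - S_F(s))e^{-\ii(t-s)H_F}U_F(s)^{-1}$ after the $H_F$ parts cancel. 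Hence $\Vert AX-XB\Vert \le 2\Vert S_F\Vert\,\Vert\text{(unitary factors)}\Vert \le 2\Vert S_F\Vert$ on the spectral subspaces (using that $P(s,\Delta_2)$ has norm one), and Lemma~\ref{lem:AX-XB_eq_Y} with $\Spec(A|_{\Ran P(t,\Delta_1)})\subset\overline{\Delta_1}$, $\Spec(B|_{\Ran P(s,\Delta_2)})\subset\overline{\Delta_2}$ yields exactly
\begin{displaymath}
  \Vert P(t,\Delta_1)U(t,s)P(s,\Delta_2)\Vert \le \frac{2\Vert S_F\Vert}{\dist(\Delta_1,\Delta_2)}.
\end{displaymath}

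The main obstacle is the bookkeeping with unbounded operators: $H_F+S_F(\cdot)$ and the $H(t)$ are unbounded, so to apply Lemma~\ref{lem:AX-XB_eq_Y} one must restrict everything to the ranges of the spectral projections onto the bounded intervals $\Delta_1$ (resp.\ $\Delta_2$), check that $P(t,\Delta_1)U(t,s)P(s,\Delta_2)$ maps $\Ran P(s,\Delta_2)$ into $\Ran P(t,\Delta_1)$, and verify that the commutator identity $H(t)U(t,s) - U(t,s)H(s) = U_F(t)(S_F(t)-S_F(s))e^{-\ii(t-s)H_F}U_F(s)^{-1}$ holds on the appropriate dense set and extends by boundedness. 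The strong $C^1$ assumption on $U_F$ is exactly what makes $S_F(t)$ bounded and the identity rigorous. Once these points are settled, the estimate of $\Vert S_F(t)-S_F(s)\Vert\le 2\Vert S_F\Vert = 2\sup_t\Vert S_F(t)\Vert$ and the final division by $\dist(\Delta_1,\Delta_2)$ are immediate, and the eigenvalue statement is the special case $\Delta_1=\{E_n(t)\}$, $\Delta_2=\{E_m(s)\}$, $P(t,\Delta_1)=P_n(t)$, $P(s,\Delta_2)=P_m(s)$.
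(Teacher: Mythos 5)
Your final route is the paper's proof: set $A=H(t)$ restricted to $\Ran P(t,\Delta_1)$ (equivalently $A=H(t)P(t,\Delta_1)$), $B=H(s)$ restricted to $\Ran P(s,\Delta_2)$, $X=P(t,\Delta_1)U(t,s)P(s,\Delta_2)$, establish that the operator $H(t)U(t,s)-U(t,s)H(s)$ extends to a bounded operator of norm at most $2\|S_F\|$ via the Floquet decomposition and (\ref{eq:H_eq_S_plus_HAsendw}), and invoke Lemma~\ref{lem:AX-XB_eq_Y}. The several discarded attempts in the middle are harmless detours.

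One computational slip: your commutator identity
\[
H(t)U(t,s) - U(t,s)H(s) = U_F(t)\big(S_F(t)-S_F(s)\big)e^{-\ii(t-s)H_F}U_F(s)^{-1}
\]
is incorrect, because $e^{-\ii(t-s)H_F}$ does not commute with $S_F(s)$. Writing $H(t)U(t,s)=U_F(t)(H_F+S_F(t))e^{-\ii(t-s)H_F}U_F(s)^{-1}$ and $U(t,s)H(s)=U_F(t)e^{-\ii(t-s)H_F}(H_F+S_F(s))U_F(s)^{-1}$, the $H_F$ terms cancel (as $H_F$ \emph{does} commute with its own group), giving
\[
H(t)U(t,s) - U(t,s)H(s)
= U_F(t)\Big(S_F(t)\,e^{-\ii(t-s)H_F} - e^{-\ii(t-s)H_F}\,S_F(s)\Big)U_F(s)^{-1},
\]
which is exactly the paper's formula (\ref{eq:HU_UH}) after substituting $U_F(t)=U(t,0)e^{\ii tH_F}$. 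The norm bound $\le \|S_F(t)\|+\|S_F(s)\|\le 2\|S_F\|$ is the same, so your conclusion stands; just fix the identity. You also omit the paper's final remark that unbounded $\Delta_i$ are handled by a limit procedure, but that is a cosmetic point.
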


\begin{proof}
  Using relation (\ref{eq:H_eq_S_plus_HAsendw}) one verifies the
  equality
  \begin{equation}
    \label{eq:HU_UH}
    H(t)U(t,s)-U(t,s)H(s)
    = U(t,0)\big(e^{\ii tH_F}S_F(t)e^{-\ii tH_F}
    -e^{\ii sH_F}S_F(s)e^{-\ii sH_F}\big)U(0,s)
  \end{equation}
  which is valid on $\Dom(H(0))$. In particular, the LHS of
  (\ref{eq:HU_UH}) extends to an operator bounded on $\sH$ whose norm
  may be estimated from above by $2\|S_F\|$. Setting
  $A=H(t)P(t,\Delta_{1})$, $B=H(s)P(s,\Delta_{2})$ and
  $X=P(t,\Delta_{1})U(t,s)P(s,\Delta_{2})$, one easily finds that
  \begin{displaymath}
    AX-XB = P(t,\Delta_{1})(H(t)U(t,s)-U(t,s)H(s))P(s,\Delta_{2}).
  \end{displaymath}
  If the intervals $\Delta_{1}$, $\Delta_{2}$ are bounded then
  Lemma~\ref{lem:AX-XB_eq_Y} implies that
  $\|X\|\leq2\|S_F\|/\dist(\Delta_{1},\Delta_{2})$. If the intervals
  are not bounded one can use a limit procedure.
\end{proof}

\section{Extension: a higher order of differentiability of the Floquet
  decomposition}
\label{sec:higher_order}

Under assumptions on higher order differentiability in the strong
sense of the operator-valued function $U_F(t)$ in
(\ref{eq:Floquet-decomp}) one can extend the conclusions of
Proposition~\ref{prop:energy_bounded} and
Proposition~\ref{prop:trans_probability}. To this end, as an auxiliary
tool we first need to state some basic facts concerning the multiple
commutators.

\subsection{Multiple commutators}

\begin{define}
  \label{def:alpha_n}
  Let $A$ be a selfadjoint operator in $\sH$, $X\in\sB(\sH)$ and
  $n\in\Z_+$. The sesquilinear form
  \begin{displaymath}
    \alpha_n(\xi,\eta) = \sum_{k=0}^n \binom{n}{k}(-1)^k\,
    \langle XA^k\xi,A^{n-k}\eta\rangle_\sH
  \end{displaymath}
  is well defined on $\Dom(A^n)$. If it is bounded then there exists a
  unique bounded operator, denoted by $\ad_A^{\,n}X$, such that
  \begin{displaymath}
    \forall\xi,\eta\in\Dom(A^n),\textrm{~}
    \alpha_n(\xi,\eta) = \langle(\ad_A^{\,n}X)\xi,\eta\rangle_\sH.
  \end{displaymath}
  If this is the case we shall say that (the $n$-multiple commutator)
  $\ad_A^{\,n}X$ exists in $\sB(\sH)$.
\end{define}

\begin{rem*}
  Some elementary facts follow immediately from the definition.
  Suppose that $B=B^\ast$ is bounded. Then $\ad_B^{\,n}X\in\sB(\sH)$
  exists for all $n\in\Z_+$ and it holds
  \begin{displaymath}
    \ad_B^{\,n}X = \sum_{k=0}^n \binom{n}{k}(-1)^kB^{n-k}XB^k.
  \end{displaymath}
  Moreover, in this case $\ad_{A+B}X\in\sB(\sH)$ exists if and only if
  $\ad_AX\in\sB(\sH)$ exists and then $\ad_{A+B}X=\ad_AX+\ad_BX$.
\end{rem*}

\begin{define}
\label{def:CnA}
Suppose that $A=A^\ast$ in $\sH$. For every $n\in\Z_+$ we introduce
the linear subspace $\CC_n(A)\subset\sB(\sH)$ formed by those bounded
operators $X$ for which the commutators $\ad_A^{\,k}X\in\sB(\sH)$
exist for all $k=0,1,\ldots,n$.
\end{define}

\begin{rem*}
  Clearly, $\ad_A^{\,0}X=X$ and $\CC_0(A)=\sB(\sH)$. From the
  definition it is also obvious that the vector spaces are nested,
  i.e.,
  \begin{equation}
    \label{eq:Cn_nested}
    \CC_0(A)\supset\CC_1(A)\supset\CC_2(A)\supset\ldots.
  \end{equation}
\end{rem*}

\begin{lem}
  \label{lem:comm_Xast_XY}
  Suppose that $A=A^\ast$ and $X,Y\in\sB(\sH)$. If the commutators
  $\ad_AX,\ad_AY\in\sB(\sH)$ exist then there also exist
  $\ad_AX^\ast,\ad_A(XY)\in\sB(\sH)$ and it holds
  \renewcommand{\labelenumi}{(\roman{enumi})}
  \renewcommand{\theenumi}{\roman{enumi}}
  \begin{enumerate}
    \item\label{item:comm_1} $X(\Dom A)\subset\Dom A$,
    \item\label{item:comm_2} $\ad_AX^\ast=-(\ad_AX)^\ast$,
    \item\label{item:comm_3} $\ad_A(XY)=(\ad_AX)Y+X\ad_AY$.
  \end{enumerate}
\end{lem}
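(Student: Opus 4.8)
The plan is to prove the three assertions of Lemma~\ref{lem:comm_Xast_XY} in the order (\ref{item:comm_1}), (\ref{item:comm_2}), (\ref{item:comm_3}), using only the defining property of $\ad_A$ as a bounded operator representing the sesquilinear form $\alpha_1$. Throughout, the key identity is that for $X\in\CC_1(A)$ one has
\begin{displaymath}
  \langle X A\xi,\eta\rangle_\sH - \langle X\xi,A\eta\rangle_\sH
  = \langle(\ad_AX)\xi,\eta\rangle_\sH,
  \qquad \forall\xi,\eta\in\Dom A.
\end{displaymath}

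\textbf{Step 1 (assertion (\ref{item:comm_1})).} Fix $\xi\in\Dom A$. To show $X\xi\in\Dom A$ it suffices, since $A=A^\ast$, to check that the conjugate-linear functional $\eta\mapsto\langle X\xi,A\eta\rangle_\sH$ on $\Dom A$ is bounded in the $\sH$-norm. But the displayed identity gives $\langle X\xi,A\eta\rangle_\sH = \langle XA\xi,\eta\rangle_\sH - \langle(\ad_AX)\xi,\eta\rangle_\sH$, and the right-hand side is bounded by $(\|X\|\,\|A\xi\|+\|\ad_AX\|\,\|\xi\|)\|\eta\|$. Hence $X\xi\in\Dom(A^\ast)=\Dom A$, and moreover $AX\xi = XA\xi-(\ad_AX)\xi$ on $\Dom A$.

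\textbf{Step 2 (assertion (\ref{item:comm_2})).} For $\xi,\eta\in\Dom A$ compute, using self-adjointness of $A$ and the displayed identity for $X$,
\begin{displaymath}
  \langle X^\ast A\xi,\eta\rangle_\sH - \langle X^\ast\xi,A\eta\rangle_\sH
  = \langle A\xi,X\eta\rangle_\sH - \langle\xi,XA\eta\rangle_\sH
  = -\overline{\langle(\ad_AX)\eta,\xi\rangle_\sH}
  = -\langle(\ad_AX)^\ast\xi,\eta\rangle_\sH.
\end{displaymath}
Thus $\alpha_1$ for $X^\ast$ is represented by the bounded operator $-(\ad_AX)^\ast$, so $\ad_AX^\ast\in\sB(\sH)$ exists and equals $-(\ad_AX)^\ast$. (Note $X^\ast\in\sB(\sH)$, so Definition~\ref{def:alpha_n} applies.)

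\textbf{Step 3 (assertion (\ref{item:comm_3})).} By Step~1 applied to $Y$, one has $Y(\Dom A)\subset\Dom A$, so for $\xi,\eta\in\Dom A$ both $Y\xi\in\Dom A$ and (by Step~1 applied to $X^\ast$, which exists by Step~2) $X^\ast\eta\in\Dom A$. Insert a telescoping term:
\begin{displaymath}
  \langle XYA\xi,\eta\rangle_\sH - \langle XY\xi,A\eta\rangle_\sH
  = \big(\langle XYA\xi,\eta\rangle_\sH - \langle XAY\xi,\eta\rangle_\sH\big)
  + \big(\langle X AY\xi,\eta\rangle_\sH - \langle XY\xi,A\eta\rangle_\sH\big).
\end{displaymath}
The first bracket equals $\langle X(YA-AY)\xi,\eta\rangle_\sH = -\langle X(\ad_AY)\xi,\eta\rangle_\sH$, rewritten via $AY\xi = YA\xi-(\ad_AY)\xi$ from Step~1; the second bracket is $\alpha_1$ for $X$ evaluated at the pair $(Y\xi,\eta)$, hence equals $\langle(\ad_AX)Y\xi,\eta\rangle_\sH$. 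Therefore the left-hand side equals $\langle\big((\ad_AX)Y+X(\ad_AY)\big)\xi,\eta\rangle_\sH$, a bounded form, so $\ad_A(XY)\in\sB(\sH)$ exists and the Leibniz rule (\ref{item:comm_3}) holds.

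I expect no serious obstacle here: the argument is a routine unwinding of the form definition, with the only point needing care being the justification that the intermediate vectors ($X\xi$, $Y\xi$, $X^\ast\eta$) indeed lie in $\Dom A$ before one is allowed to move $A$ across an inner product — which is exactly why assertion (\ref{item:comm_1}) must be proved first and then reused in Steps~2 and~3. A minor bookkeeping point is that in Step~3 one should verify the telescoping manipulations using only pairings where $A$ acts on a vector already known to be in $\Dom A$; this is arranged by the order above.
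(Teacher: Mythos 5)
Your plan — prove (\ref{item:comm_1}) first by showing that $\eta\mapsto\langle X\xi,A\eta\rangle_\sH$ is bounded, then reuse (\ref{item:comm_1}) in the proofs of (\ref{item:comm_2}) and (\ref{item:comm_3}) — is exactly the paper's strategy, and all three conclusions you state are correct. However, the ``key identity'' you quote at the outset has the wrong sign: for $n=1$ Definition~\ref{def:alpha_n} gives
\[
\alpha_1(\xi,\eta)=\langle X\xi,A\eta\rangle_\sH-\langle XA\xi,\eta\rangle_\sH
=\langle(\ad_AX)\xi,\eta\rangle_\sH,
\]
with the two inner products in the opposite order from what you wrote. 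In Step~1 the derived operator formula should therefore read $AX\xi=XA\xi+(\ad_AX)\xi$ on $\Dom A$ (i.e.\ $\ad_AX=AX-XA$ there), not $XA\xi-(\ad_AX)\xi$.

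The flipped sign propagates as self-cancelling slips through Step~2, but in Step~3 it surfaces as an internal inconsistency: you cite $AY\xi=YA\xi-(\ad_AY)\xi$ and then assert that the first bracket equals $-\langle X(\ad_AY)\xi,\eta\rangle_\sH$, yet the cited formula gives $(YA-AY)\xi=+(\ad_AY)\xi$, so that bracket would be $+\langle X(\ad_AY)\xi,\eta\rangle_\sH$. As actually written, your two brackets sum to $\langle((\ad_AX)Y-X\ad_AY)\xi,\eta\rangle_\sH$, not the stated Leibniz rule. Once the key identity is corrected to match Definition~\ref{def:alpha_n} these slips disappear and your argument agrees with the paper's, modulo one cosmetic difference in (\ref{item:comm_3}): the paper moves $A$ onto $X^\ast\eta$ via self-adjointness and then recognizes $\alpha_1^Y(\xi,X^\ast\eta)$, whereas you substitute an operator formula for $AY\xi$; both routes are equally valid.
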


\begin{proof}
  To show (\ref{item:comm_1}) choose $\xi\in\Dom{}A$. By definition,
  for all $\eta\in\Dom{}A$ we have
  \begin{displaymath}
    \langle X\xi,A\eta\rangle =  \langle XA\xi,\eta\rangle
    + \langle(\ad_AX)\xi,\eta\rangle.
  \end{displaymath}
  Hence $X\xi$ belongs to $\Dom{}A^\ast=\Dom{}A$. Point
  (\ref{item:comm_2}) follows from the equality
  \begin{displaymath}
    \langle X^\ast\xi,A\eta\rangle-\langle X^\ast A\xi,\eta\rangle
    = -\overline{\langle X\eta,A\xi\rangle-\langle XA\eta,\xi\rangle}
    = \langle-(\ad_AX)^\ast\xi,\eta\rangle
  \end{displaymath}
  which is valid for all $\xi,\eta\in\Dom{}A$. For $\xi,\eta$ from the
  same domain we know, by points (\ref{item:comm_1}) and
  (\ref{item:comm_2}), that $Y\xi,X^\ast\eta\in\Dom{}A$. Thus we have
  the equality
  \begin{displaymath}
    \langle XY\xi,A\eta\rangle-\langle XYA\xi,\eta\rangle
    = \langle XY\xi,A\eta\rangle-\langle XAY\xi,\eta\rangle
    +\langle Y\xi,AX^\ast\eta\rangle-\langle YA\xi,X^\ast\eta\rangle
  \end{displaymath}
  with the RHS being equal to
  $\langle(\ad_AX)Y\xi,\eta\rangle+\langle(\ad_AY)\xi,X^\ast\eta\rangle$.
  Point (\ref{item:comm_3}) follows.
\end{proof}

\begin{rem*}
  Lemma~\ref{lem:comm_Xast_XY} implies that $\ad_AX\in\sB(\sH)$ exists
  if and only if $\Dom(A)$ is invariant with respect to $X$ and the
  operator $AX-XA$ is bounded on this domain. If this is the case then
  $\ad_AX=AX-XA$ on $\Dom(A)$.
\end{rem*}

\begin{lem}
  \label{lem:ad_AX_limit}
  Let $\{X_n\}_n$ be a sequence of bounded operators in $\sH$ such
  that the commutators $\ad_AX_n\in\sB(\sH)$ exist for all $n$. If the
  sequence $\{X_n\}_n$ converges weakly to a bounded operator $X$ and
  the sequence $\{\ad_AX_n\}_n$ converges weakly to a bounded operator
  $Y$ then $\ad_AX\in\sB(\sH)$ exists and equals $Y$.
\end{lem}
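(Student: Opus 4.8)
The plan is to transfer the defining identity for $\ad_AX_n$ from the sequence to the limit by testing against vectors in $\Dom(A^n)$ — here $n=1$, so against $\xi,\eta\in\Dom(A)$ — and exploiting that weak convergence of operators gives convergence of the matrix elements $\langle X_n\zeta,\omega\rangle_\sH$ for every fixed pair $\zeta,\omega\in\sH$. Concretely, for fixed $\xi,\eta\in\Dom(A)$ the quantities $A\xi$ and $A\eta$ are again (fixed) vectors in $\sH$, so I can evaluate the defining sesquilinear form $\alpha_1$ of Definition~\ref{def:alpha_n} for $X_n$ at $(\xi,\eta)$:
\[
  \langle X_n\xi,A\eta\rangle_\sH-\langle X_nA\xi,\eta\rangle_\sH
  =\langle(\ad_AX_n)\xi,\eta\rangle_\sH .
\]
Each of the three inner products is of the form $\langle(\text{operator}_n)(\text{fixed vector}),(\text{fixed vector})\rangle_\sH$, so by weak convergence $X_n\rightharpoonup X$ and $\ad_AX_n\rightharpoonup Y$ I may pass to the limit $n\to\infty$ in all three terms simultaneously, obtaining
\[
  \langle X\xi,A\eta\rangle_\sH-\langle XA\xi,\eta\rangle_\sH
  =\langle Y\xi,\eta\rangle_\sH ,\qquad\forall\,\xi,\eta\in\Dom(A).
\]
This says precisely that the form $\alpha_1$ associated with $X$ is bounded (it equals $\langle Y\cdot,\cdot\rangle_\sH$ on $\Dom(A)\times\Dom(A)$), so by Definition~\ref{def:alpha_n} the commutator $\ad_AX\in\sB(\sH)$ exists and $\ad_AX=Y$.

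There is essentially no serious obstacle: the only point requiring a word of care is that weak operator convergence is exactly the statement that $\langle X_n\zeta,\omega\rangle_\sH\to\langle X\zeta,\omega\rangle_\sH$ for \emph{every} fixed $\zeta,\omega\in\sH$, and in the identity above the arguments $\xi$, $A\xi$, $\eta$, $A\eta$ are all fixed vectors of $\sH$ once $\xi,\eta\in\Dom(A)$ are chosen, so this applies verbatim. (One does not need norm boundedness of $\{X_n\}$ or $\{\ad_AX_n\}$ as a hypothesis, since weak convergence already entails it via the uniform boundedness principle; but in any case it is not needed for the argument.) If one wished to state the lemma for general $n$ rather than $n=1$, the same computation works: for $\xi,\eta\in\Dom(A^n)$ the vectors $A^k\xi$ and $A^{n-k}\eta$ are fixed elements of $\sH$, and passing to the weak limit termwise in the finite sum defining $\alpha_n$ shows that $\ad_A^{\,n}X$ exists and equals the weak limit of $\ad_A^{\,n}X_n$; only the symmetry that lets one split the $2\pi$ factors, already recorded in Lemma~\ref{lem:comm_Xast_XY}, is used. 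Thus the proof reduces to a one-line limit passage in a bounded bilinear form paired against the fixed test vectors.
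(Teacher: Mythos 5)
Your argument is exactly the paper's proof: fix $\xi,\eta\in\Dom(A)$, write the defining identity $\langle X_n\xi,A\eta\rangle-\langle X_nA\xi,\eta\rangle=\langle(\ad_AX_n)\xi,\eta\rangle$, and pass to the weak limit term by term. (The aside about ``$2\pi$ factors'' in your remark on the $n$-fold generalization is a stray slip and should just be dropped, but it does not touch the proof of the lemma as stated.)
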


\begin{rem*}
  From Lemma~\ref{lem:ad_AX_limit} it follows that the linear operator
  $\ad_A$ on $\sB(\sH)$, with $\Dom(\ad_A)=\CC_1(A)$, is closed.
\end{rem*}

\begin{proof}
  Let $\xi,\eta\in\Dom(A)$ be arbitrary vectors. By definition, for
  all $n$,
  \begin{displaymath}
    \langle X_n\xi,A\eta\rangle-\langle X_nA\xi,\eta\rangle
    = \langle(\ad_AX_n)\xi,\eta\rangle.
  \end{displaymath}
  It suffices to send $n$ to infinity.
\end{proof}

\begin{prop}
  \label{prop:algsCn}
  The following statements are true for all $X\in\sB(\sH)$ and
  $n\in\Z_+$:
  \renewcommand{\labelenumi}{(\roman{enumi})}
  \renewcommand{\theenumi}{\roman{enumi}}
  \begin{enumerate}
  \item\label{item:Cn_1} If $X\in\CC_n(A)$ then
    $X(\Dom{}A^k)\subset\Dom{}A^k$ for all $k=0,1,\ldots,n$.
  \item\label{item:Cn_2} $X\in\CC_{n+1}(A)$ if and only if
    $\ad_AX\in\sB(\sH)$ exists and belongs to $\CC_n(A)$. Moreover, if
    this is the case then
    \begin{displaymath}
      \ad_A^{\,k}(\ad_AX) = \ad_A^{\,k+1}X\textrm{~~for~}
      k=0,1,\ldots,n.
    \end{displaymath}
  \item\label{item:Cn_3} $\CC_n(A)$ is a $\ast$-subalgebra of
    $\sB(\sH)$.
  \end{enumerate}
\end{prop}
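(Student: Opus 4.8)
The plan is to prove the three assertions of Proposition~\ref{prop:algsCn} by induction on $n$, using the single-commutator facts from Lemma~\ref{lem:comm_Xast_XY} together with the characterization of $\CC_{n+1}(A)$ in terms of $\ad_AX$ lying in $\CC_n(A)$. The natural order is to establish (\ref{item:Cn_2}) first, then deduce (\ref{item:Cn_1}) and (\ref{item:Cn_3}) from it, because (\ref{item:Cn_2}) is the ``recursion step'' that lets one peel off one commutator at a time.

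For (\ref{item:Cn_2}), I would unfold Definition~\ref{def:CnA}: $X\in\CC_{n+1}(A)$ means $\ad_A^{\,k}X\in\sB(\sH)$ exists for $k=0,1,\ldots,n+1$. The content of the claim is that the shifted sequence $\ad_A^{\,k}(\ad_AX)$ coincides with $\ad_A^{\,k+1}X$ for $k=0,\ldots,n$, so that ``$\ad_A^{\,k}X$ exists for $k\le n+1$'' is equivalent to ``$Y:=\ad_AX$ exists and $\ad_A^{\,k}Y$ exists for $k\le n$'', i.e.\ $Y\in\CC_n(A)$. The identity $\ad_A^{\,k}(\ad_AX)=\ad_A^{\,k+1}X$ should be proved by an inner-product computation on $\Dom(A^{k+1})$: expand both sesquilinear forms $\alpha_k$ for $\ad_AX$ and $\alpha_{k+1}$ for $X$ using Definition~\ref{def:alpha_n}, and use the single-commutator relation $\langle (\ad_AX)\xi,\eta\rangle=\langle XA\xi,\eta\rangle-\langle X\xi,A\eta\rangle$ valid on $\Dom(A)$ (which requires $X(\Dom A)\subset\Dom A$, supplied by Lemma~\ref{lem:comm_Xast_XY}(\ref{item:comm_1})). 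The binomial coefficients will reorganize via Pascal's rule $\binom{k}{j}+\binom{k}{j-1}=\binom{k+1}{j}$, which is exactly the combinatorial shadow of the Leibniz-type recursion $\ad_A^{\,k+1}=\ad_A\circ\ad_A^{\,k}$. One must be slightly careful that at each stage the relevant operator is known to preserve the appropriate domain $\Dom(A^j)$ so that the reassociation of inner products is legitimate; this is handled by an inner induction invoking Lemma~\ref{lem:comm_Xast_XY}(\ref{item:comm_1}) applied to $\ad_A^{\,j}X$ in place of $X$.

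Granting (\ref{item:Cn_2}), statement (\ref{item:Cn_1}) follows by induction on $n$: for $n=0$ it is vacuous, and if $X\in\CC_{n+1}(A)$ then $\ad_AX\in\CC_n(A)$, so by the inductive hypothesis $\ad_AX$ preserves $\Dom(A^k)$ for $k\le n$; combining this with Lemma~\ref{lem:comm_Xast_XY}(\ref{item:comm_1}) (which gives $X(\Dom A)\subset\Dom A$) and the relation $A X\xi = XA\xi + (\ad_AX)\xi$ one shows inductively that $X(\Dom A^{k+1})\subset\Dom A^{k+1}$ for $k\le n$. For (\ref{item:Cn_3}), I would again induct on $n$: $\CC_0(A)=\sB(\sH)$ is trivially a $*$-algebra; assuming $\CC_n(A)$ is a $*$-subalgebra, take $X,Y\in\CC_{n+1}(A)$. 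By Lemma~\ref{lem:comm_Xast_XY} the operators $\ad_AX^\ast=-(\ad_AX)^\ast$ and $\ad_A(XY)=(\ad_AX)Y+X(\ad_AY)$ exist in $\sB(\sH)$; since $\ad_AX,\ad_AY\in\CC_n(A)$ by (\ref{item:Cn_2}) and $X,Y\in\CC_n(A)$ (by nestedness \eqref{eq:Cn_nested}), the inductive hypothesis that $\CC_n(A)$ is closed under adjoint, sum and product shows $\ad_AX^\ast\in\CC_n(A)$ and $\ad_A(XY)\in\CC_n(A)$; one more application of (\ref{item:Cn_2}) gives $X^\ast,XY\in\CC_{n+1}(A)$, and closure under linear combinations is immediate from linearity of $\ad_A$.

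The main obstacle is the bookkeeping in (\ref{item:Cn_2}): making the binomial identity $\ad_A^{\,k}(\ad_AX)=\ad_A^{\,k+1}X$ rigorous requires verifying at every intermediate step that the operators in play map $\Dom(A^j)$ into itself, so that the various inner-product manipulations $\langle ZA\xi,\eta\rangle\leftrightarrow\langle Z\xi,A\eta\rangle$ are justified — there is a genuine danger of circularity (one wants to use domain-invariance to prove existence of a commutator, but domain-invariance itself comes from that commutator via Lemma~\ref{lem:comm_Xast_XY}(\ref{item:comm_1})). The clean way to avoid this is to run the whole argument as a single induction on $n$ in which all three statements (\ref{item:Cn_1}), (\ref{item:Cn_2}), (\ref{item:Cn_3}) are carried simultaneously, so that at stage $n$ one may freely use domain-invariance up to order $n$ when establishing existence of the $(n+1)$-st commutator. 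The remaining steps are routine algebra.
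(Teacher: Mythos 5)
Your plan is essentially sound and reaches the same conclusion as the paper, but the logical organization is different, and there is one conceptual misstep worth pointing out.

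The paper proves (\ref{item:Cn_1}) and (\ref{item:Cn_2}) by two \emph{independent} arguments, neither relying on the other. For (\ref{item:Cn_1}) it proceeds by induction on $k$, using only the defining form $\alpha_{k+1}$ together with the nesting $\CC_{k+1}(A)\subset\CC_\ell(A)$: isolating the $\ell=0$ term in
$\sum_{\ell=0}^{k+1}\binom{k+1}{\ell}(-1)^\ell\langle XA^\ell\xi,A^{k+1-\ell}\eta\rangle=\langle(\ad_A^{\,k+1}X)\xi,\eta\rangle$
and using $XA^\ell\xi\in\Dom(A^{k+1-\ell})$ (from the induction hypothesis) to write each remaining term as $\langle A^{k+1-\ell}XA^\ell\xi,\eta\rangle$ exhibits $X\xi$ as an element of $\Dom(A^{k+1})$. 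For (\ref{item:Cn_2}) the paper establishes the shift identity
$\sum_{k=0}^m\binom{m}{k}(-1)^k\langle(\ad_AX)A^k\xi,A^{m-k}\eta\rangle=\langle(\ad_A^{\,m+1}X)\xi,\eta\rangle$
on $\Dom(A^{m+1})$ purely by algebraic rearrangement and Pascal's rule, then extends to $\Dom(A^m)$ by boundedness. \emph{No domain invariance of $X$ is ever used there:} the defining relation $\langle(\ad_AX)\zeta,\chi\rangle=\langle X\zeta,A\chi\rangle-\langle XA\zeta,\chi\rangle$ for $\zeta,\chi\in\Dom(A)$ is simply the definition of $\alpha_1$, so your remark that this relation ``requires $X(\Dom A)\subset\Dom A$, supplied by Lemma~\ref{lem:comm_Xast_XY}(\ref{item:comm_1})'' is not right (and, incidentally, your version has the opposite sign from the paper's convention $\alpha_1(\xi,\eta)=\langle X\xi,A\eta\rangle-\langle XA\xi,\eta\rangle$). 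Domain invariance would only be needed if one tried to rewrite $\langle X\zeta,A\chi\rangle$ as $\langle AX\zeta,\chi\rangle$, and the argument for (\ref{item:Cn_2}) never does that. Consequently, the ``genuine danger of circularity'' you describe does not actually arise, and the simultaneous induction you propose, though logically harmless, is unnecessary overhead.

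Your route for (\ref{item:Cn_1}) — deducing it from (\ref{item:Cn_2}) via the operator identity $AX\xi=XA\xi+(\ad_AX)\xi$ on $\Dom A$ and an inner induction — is valid and a perfectly good alternative to the paper's direct argument from $\alpha_{k+1}$; the trade-off is that it introduces a dependency (\ref{item:Cn_1})$\Leftarrow$(\ref{item:Cn_2}) that the paper avoids. For (\ref{item:Cn_3}) your argument matches the paper exactly. In short: correct, with a reversed order of (\ref{item:Cn_1}) and (\ref{item:Cn_2}) and an added but unneeded dependency; fix the sign in the commutator form, and drop the claim that the shift identity needs $X(\Dom A)\subset\Dom A$.
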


\begin{proof}
  (i) We shall show that, for a given $k\in\Z_+$, the domain
  $\Dom{}A^k$ is invariant with respect to all $X\in\CC_n(A)$ as long
  as $n\geq{}k$. Recalling that the spaces $\CC_n(A)$ are nested it
  suffices to consider the case of $n=k$. To this end, we shall
  proceed by induction in $k$.  For $k=0$ the statement is trivial.
  Suppose that the statement holds true for all $\ell$,
  $0\leq\ell\leq{}k$. Choose $X\in\CC_{k+1}(A)$. The induction
  hypothesis implies that for any $\xi\in\Dom(A^{k+1})$ and
  $1\leq\ell\leq{}k+1$, $XA^\ell\xi\in\Dom(A^{k+1-\ell})$. By the
  definition of $\ad_A^{\,k+1}X$ we have the equality
  \begin{displaymath}
    \langle X\xi,A^{k+1}\eta\rangle
    = \sum_{\ell=1}^{k+1}\binom{k+1}{\ell}(-1)^{\ell+1}
    \langle A^{k+1-\ell}XA^\ell\xi,\eta\rangle
    +\langle(\ad_A^{\,k+1})X\xi,\eta\rangle,
  \end{displaymath}
  valid for all $\eta\in\Dom(A^{k+1})$. Hence $X\xi\in\Dom(A^{k+1})$.

  (ii) By the very definition, if $X\in\CC_{n+1}(A)$ then
  $\ad_AX\in\sB(\sH)$ exists. If $0\leq{}m\leq{}n$ and
  $\xi,\eta\in\Dom(A^{m+1})$ then simple algebraic manipulations lead
  to the equality
  \begin{equation}
    \label{eq:adAmXplus1}
    \sum_{k=0}^m\binom{m}{k}(-1)^k
    \langle(\ad_AX)A^k\xi,A^{m-k}\eta\rangle
    = \langle(\ad_A^{\,m+1}X)\xi,\eta\rangle.
  \end{equation}
  The both sides in (\ref{eq:adAmXplus1}) extend in a unique way to
  the domain $\xi,\eta\in\Dom(A^m)$. It follows that
  $\ad_A^{\,m}(\ad_AX)\in\sB(\sH)$ exists and equals $\ad_A^{\,m+1}X$.
  Hence $\ad_AX\in\CC_n(A)$. Conversely, suppose that
  $\ad_AX\in\CC_n(A)$. For any $m$, $0\leq{}m\leq{}n$, and
  $\xi,\eta\in\Dom(A^{m+1})$, one finds, again with the aid of simple
  algebraic manipulations, that
  \begin{displaymath}
    \langle\ad_A^{\,m}(\ad_AX)\xi,\eta\rangle
    = \sum_{k=0}^{m+1}\binom{m+1}{k}(-1)^k
    \langle XA^k\xi,A^{m+1-k}\eta\rangle.
  \end{displaymath}
  Hence $\ad_A^{\,m+1}X\in\sB(\sH)$ exists and thus
  $X\in\CC_{n+1}(A)$.

  (iii) First let us show that $X^\ast\in\CC_n(A)$ provided the same
  is true for $X$. We shall proceed by induction in $n$. The case
  $n=0$ is obvious. Suppose that the claim is true for $n$. If
  $X\in\CC_{n+1}(A)$ then, by the already proved point
  (\ref{item:Cn_2}) of the current proposition, $\ad_AX\in\CC_n(A)$.
  By the induction hypothesis and Lemma~\ref{lem:comm_Xast_XY}
  ad~(\ref{item:comm_2}) we have
  $\ad_AX^\ast=-(\ad_AX)^\ast\in\CC_n(A)$. Referring once more to point
  (\ref{item:Cn_2}) of the current proposition we conclude that indeed
  $X^\ast\in\CC_{n+1}(A)$.

  Finally let us show that $XY\in\CC_n(A)$ provided $X,Y\in\CC_n(A)$.
  We shall proceed by induction in $n$. The case $n=0$ is again
  obvious. Suppose that the claim is true for $n$. If
  $X,Y\in\CC_{n+1}(A)$ then, by point (\ref{item:Cn_2}) of the current
  proposition, $\ad_AX,\ad_AY\in\CC_n(A)$. By the induction hypothesis
  and Lemma~\ref{lem:comm_Xast_XY} ad~(\ref{item:comm_3}) we have
  \begin{displaymath}
    \ad_A(XY)=(\ad_AX)Y+X\ad_AY\in\CC_n(A).
  \end{displaymath}
  Referring again to point (\ref{item:Cn_2}) of the current
  proposition we conclude that $XY\in\CC_{n+1}(A)$.
\end{proof}

\begin{rem*}
  As an immediate consequence of Proposition~\ref{prop:algsCn} ad
  (\ref{item:Cn_1}) it holds $\Dom(A^k)=\Dom((A+B)^k)$ for
  $k=0,1,\ldots,p$, provided $B\in\CC_{p-1}(A)$ for some $p\in\N$.
\end{rem*}

\begin{define}
  \label{def:Cn_uniform}
  Let $A$ be a self-adjoint operator on $\sH$ and $X(t)\in\sB(\sH)$ be
  an operator-valued function, with the variable $t$ running over
  $\R$, and let $n\in\Z_+$. We shall say that $X(t)$ is in the algebra
  $\CC_n(A)$ uniformly if $X(t)\in\CC_n(A)$ for all $t\in\R$ and
  \begin{displaymath}
    \sup_{t\in\R}\sum_{k=0}^n\|\ad_{A}^{\,k}X(t)\| < \infty.
  \end{displaymath}
\end{define}

\begin{rems*}
  Of course, the operator-valued function $X(t)$ may be constant.

  From Proposition~\ref{prop:algsCn} ad~(\ref{item:Cn_2}) one
  immediately deduces that an operator-valued function
  $X(t)\in\sB(\sH)$ is in $\CC_{n+1}(A)$ uniformly if and only if
  $X(t)$ is uniformly bounded and $\ad_{A}X(t)$ is in $\CC_n(A)$
  uniformly. Moreover, a straightforward induction procedure based on
  this observation jointly with Lemma~\ref{lem:comm_Xast_XY}
  ad~(\ref{item:comm_2}) and ad~(\ref{item:comm_3}) implies that if
  $X(t)$ and $Y(t)$ are in $\CC_n(A)$ uniformly then also
  $X(t)^\ast$ and $X(t)Y(t)$ are in $\CC_n(A)$ uniformly.
\end{rems*}

\begin{lem}
  \label{lem:CnAplusB_uniform}
  Let $A$ be a self-adjoint operator and $B\in\CC_{p-1}(A)$ be a
  Hermitian operator for some $p\in\N$. Then an operator-valued
  function $X(t)\in\sB(\sH)$, with $t\in\R$, is in $\CC_p(A)$
  uniformly if and only if $X(t)$ is in $\CC_p(A+B)$ uniformly.
\end{lem}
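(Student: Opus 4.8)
The plan is to reduce the claim to the case $p=1$ by induction on $p$, using the characterization of uniform membership in $\CC_{p}$ from Proposition~\ref{prop:algsCn} ad~(\ref{item:Cn_2}) together with the additivity formula $\ad_{A+B}X=\ad_AX+\ad_BX$ recorded in the remark after Definition~\ref{def:alpha_n}. For the base case $p=1$: since $B$ is bounded and Hermitian, $\ad_B^{\,k}X(t)$ exists in $\sB(\sH)$ for every $k$ and every bounded $X(t)$, with the explicit finite-sum formula from that same remark; in particular $\sup_t\|\ad_BX(t)\|\le 2\|B\|\sup_t\|X(t)\|<\infty$ whenever $X(t)$ is uniformly bounded. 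Hence $\ad_AX(t)\in\sB(\sH)$ exists iff $\ad_{A+B}X(t)=\ad_AX(t)+\ad_BX(t)\in\sB(\sH)$ exists, and the two finiteness conditions $\sup_t(\|X(t)\|+\|\ad_AX(t)\|)<\infty$ and $\sup_t(\|X(t)\|+\|\ad_{A+B}X(t)\|)<\infty$ are equivalent by the triangle inequality. This settles $p=1$.

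For the inductive step, assume the lemma holds for $p-1$ in place of $p$ (with the weaker hypothesis $B\in\CC_{p-2}(A)$, which follows from $B\in\CC_{p-1}(A)$ by the nesting (\ref{eq:Cn_nested})). Suppose $X(t)$ is in $\CC_p(A)$ uniformly. By the remark following Definition~\ref{def:Cn_uniform}, this is equivalent to: $X(t)$ is uniformly bounded and $\ad_AX(t)$ is in $\CC_{p-1}(A)$ uniformly. Now I want to pass to $A+B$. The key observations are: (a) $\ad_{A+B}X(t)=\ad_AX(t)+\ad_BX(t)$; (b) $\ad_AX(t)$ is in $\CC_{p-1}(A)$ uniformly, hence by the induction hypothesis it is in $\CC_{p-1}(A+B)$ uniformly; (c) $\ad_BX(t)$ is in $\CC_{p-1}(A+B)$ uniformly — this is the step that needs the most care. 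Granting (a)--(c), the sum $\ad_{A+B}X(t)$ is in $\CC_{p-1}(A+B)$ uniformly (sums of uniformly-$\CC_{p-1}$ functions are again such, since each $\ad_{A+B}^{\,k}$ is linear), so by the remark after Definition~\ref{def:Cn_uniform} applied with $A+B$ we conclude $X(t)$ is in $\CC_p(A+B)$ uniformly. The converse direction is symmetric: $A=(A+B)+(-B)$ and $-B\in\CC_{p-1}(A+B)$ by the already-established forward direction applied to the constant function $B$, so the roles of $A$ and $A+B$ may be interchanged.

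The main obstacle is establishing (c), i.e.\ that $\ad_BX(t)$ is in $\CC_{p-1}(A+B)$ uniformly whenever $X(t)$ is in $\CC_p(A)$ uniformly and $B\in\CC_{p-1}(A)$. I would prove the auxiliary fact that $\ad_BY$ is in $\CC_{p-1}(A)$ uniformly whenever both $B$ and $Y(t)$ lie in $\CC_{p-1}(A)$ (uniformly, in the case of $Y$); this follows from the closed-form expansion $\ad_BY=\sum_{k=0}^{1}\binom{1}{k}(-1)^kB^{1-k}YB^k = BY-YB$ combined with the fact that $\CC_{p-1}(A)$ is a $\ast$-subalgebra (Proposition~\ref{prop:algsCn} ad~(\ref{item:Cn_3})) and that products and sums of uniformly-$\CC_{p-1}(A)$ functions stay uniformly-$\CC_{p-1}(A)$ (remark after Definition~\ref{def:Cn_uniform}). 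Since $X(t)\in\CC_p(A)\subset\CC_{p-1}(A)$ uniformly and $B\in\CC_{p-1}(A)$, this gives $\ad_BX(t)=BX(t)-X(t)B$ in $\CC_{p-1}(A)$ uniformly, and then the induction hypothesis upgrades this to $\CC_{p-1}(A+B)$ uniformly, which is exactly (c). A small point to check along the way is that the constant operator $B$ qualifies: $B\in\CC_{p-1}(A)$ uniformly is immediate since the relevant supremum over $t$ is just the single finite quantity $\sum_{k=0}^{p-1}\|\ad_A^{\,k}B\|$.
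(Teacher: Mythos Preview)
Your proof is correct and follows the same inductive strategy as the paper: establish one implication by induction on $p$ using $\ad_{A+B}X=\ad_AX+\ad_BX$, the $\ast$-algebra property of $\CC_{p-1}(A)$ to handle $\ad_BX(t)=BX(t)-X(t)B$, and the characterization from Proposition~\ref{prop:algsCn}~(\ref{item:Cn_2}), then obtain the converse by the substitution $(A,B)\mapsto(A+B,-B)$. Your justification that $-B\in\CC_{p-1}(A+B)$ via the already-proved forward direction at level $p-1$ is in fact cleaner than the paper's parenthetical appeal to ``$\ad_{A+B}^{\,m}B=\ad_A^{\,m}B$'', which is literally correct only for $m\le 1$ (though the conclusion $B\in\CC_{p-1}(A+B)$ it is meant to support is still true).
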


\begin{proof}
  Clearly it suffices to prove only one implication since the other
  one follows after replacing $A$ by $A+B$ and $B$ by $-B$ (while
  making use of the simple fact that $\ad_{A+B}^{\,m}B=\ad_A^{\,m}B$).
  We shall proceed by induction in $p$.

  As far as the case $p=1$ is concerned we assume that $B\in\CC_0(A)$
  and $X(t)$ is in $\CC_1(A)$ uniformly. This in particular means that
  $X(t)$ is a uniformly bounded operator-valued function and hence the
  same is true for $\ad_{B}X(t)=BX(t)-X(t)B$. Now it suffices to take
  into account the equality $\ad_{A+B}X(t)=\ad_{A}X(t)+\ad_{B}X(t)$.

  Let us now assume that the lemma has been proved for some $p\in\N$,
  and that $B\in\CC_p(A)$ and $X(t)$ is in $\CC_{p+1}(A)$ uniformly.
  Now we can repeatedly apply the remarks following
  Definition~\ref{def:Cn_uniform}. Firstly, $\ad_{B}X(t)=BX(t)-X(t)B$
  is in $\CC_p(A)$ uniformly.  Secondly, $X(t)$ is uniformly bounded
  and $\ad_{A}X(t)$ is in $\CC_p(A)$ uniformly. Consequently,
  $\ad_{A+B}X(t)$ is in $\CC_p(A)$ uniformly as well. By the induction
  hypothesis, $\ad_{A+B}X(t)$ is in $\CC_p(A+B)$ uniformly. This in
  turn implies that $X(t)$ is in $\CC_{p+1}(A+B)$ uniformly.
\end{proof}

In the particular case when the operator-valued function $X(t)$ is
constant Lemma~\ref{lem:CnAplusB_uniform} reduces to the following
statement.

\begin{lem}
  \label{lem:CnAplusB_const}
  Let $A$ be a self-adjoint operator on $\sH$ and $B\in\CC_{p-1}(A)$
  for some $p\in\N$, and suppose that $B=B^\ast$. Then
  $\CC_k(A)=\CC_k(A+B)$ for $k=0,1,\ldots,p$.
\end{lem}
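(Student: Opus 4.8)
The plan is to obtain the statement as the specialization of Lemma~\ref{lem:CnAplusB_uniform} to constant operator-valued functions, applied separately for each index $k$. The starting observation is that for a constant function $X(t)\equiv X\in\sB(\sH)$ the notion ``$X(t)$ is in $\CC_n(A)$ uniformly'' from Definition~\ref{def:Cn_uniform} coincides with plain membership $X\in\CC_n(A)$: if $X\in\CC_n(A)$ then each $\ad_A^{\,k}X$, $0\le k\le n$, is one fixed bounded operator, so the supremum over $t\in\R$ of $\sum_{k=0}^n\|\ad_A^{\,k}X(t)\|$ equals the finite number $\sum_{k=0}^n\|\ad_A^{\,k}X\|$, and the uniform-boundedness clause is automatically satisfied. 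Thus the uniform version of the algebra, evaluated on constant functions, is nothing but $\CC_n(A)$ itself.

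Next I would dispose of $k=0$ trivially, since $\CC_0(A)=\sB(\sH)=\CC_0(A+B)$ by the remark following Definition~\ref{def:CnA}. For $1\le k\le p$ I would apply Lemma~\ref{lem:CnAplusB_uniform} with $p$ replaced by $k$; this is legitimate because $k\in\N$ and, by the nestedness (\ref{eq:Cn_nested}) of the algebras together with $B\in\CC_{p-1}(A)$ and $p-1\ge k-1$, we have $B\in\CC_{k-1}(A)$, so $B$ is a Hermitian element of $\CC_{k-1}(A)$. Feeding the constant function $X(t)\equiv X$ into that lemma and using the equivalence of the preceding paragraph yields $X\in\CC_k(A)\iff X\in\CC_k(A+B)$ for every $X\in\sB(\sH)$, that is, $\CC_k(A)=\CC_k(A+B)$.

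There is essentially no obstacle here: the only points deserving a moment's care are that Lemma~\ref{lem:CnAplusB_uniform} is stated with a single parameter $p$, so to cover all indices $k\le p$ one must re-invoke it with each smaller parameter (which the nestedness of the $\CC_n(A)$ makes possible), and that the supremum condition in Definition~\ref{def:Cn_uniform} must be checked to be vacuous for constant functions.
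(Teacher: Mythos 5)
Your proof is correct and follows the same route the paper intends: the paper introduces Lemma~\ref{lem:CnAplusB_const} precisely as the constant-function specialization of Lemma~\ref{lem:CnAplusB_uniform}, giving no separate argument. You also supply the small bookkeeping step the paper leaves implicit, namely re-invoking Lemma~\ref{lem:CnAplusB_uniform} at each index $k\le p$ via the nestedness $\CC_{p-1}(A)\subset\CC_{k-1}(A)$.
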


We shall also need the following algebraic lemma.

\begin{lem}
  \label{lem:HpU_commute}
  Suppose that $A=A^\ast$ and $B\in\CC_p(A)$ for some $p\in\Z_+$. Then
  the following claims are true:
  \renewcommand{\labelenumi}{(\roman{enumi})}
  \renewcommand{\theenumi}{\roman{enumi}}
  \begin{enumerate}
  \item\label{item:ApB} On $\Dom(A^p)$ it holds
    \begin{equation}
      \label{eq:ApB}
      A^pB = \sum_{k=0}^p\binom{p}{k}\left(\ad_A^{\,p-k}B\right)A^k.
    \end{equation}
  \item\label{item:AplusBp} There exist polynomials
    $F_{p,k}(\gx_0,\gx_1,\ldots,\gx_{p-k-1})$, $k=0,1,\ldots,p-1$, in
    non-commutative variables $\gx_j$, with non-negative integer
    coefficients and such that it holds
    \begin{equation}
      \label{eq:AplusBp}
      (A+B)^p = A^p+\sum_{k=0}^{p-1}
      F_{p,k}\big(B,\ad_AB,\ldots,\ad_A^{\,p-k-1}B\big)A^k
    \end{equation}
    on $\Dom(A^p)$.
  \end{enumerate}
\end{lem}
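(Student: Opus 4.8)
The plan is to prove both (i) and (ii) by induction on $p$, keeping the self-adjoint operator $A$ fixed and letting $B$ range over $\CC_p(A)$. The algebraic engine is small: from Lemma~\ref{lem:comm_Xast_XY} I use that an operator in $\CC_1(A)$ leaves $\Dom(A)$ invariant and that $\ad_A(XY)=(\ad_AX)Y+X\ad_AY$; from Proposition~\ref{prop:algsCn} I use that $\ad_A^{\,m}(\ad_AB)=\ad_A^{\,m+1}B$, that $X\in\CC_{j+1}(A)$ implies $\ad_AX\in\CC_j(A)$, and that $X\in\CC_n(A)$ leaves each $\Dom(A^k)$ with $k\le n$ invariant. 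Together with the elementary identity $\ad_AX=AX-XA$ on $\Dom(A)$ for $X\in\CC_1(A)$ (the remark after Lemma~\ref{lem:comm_Xast_XY}), these make every rearrangement below legitimate on the stated domains, so no genuinely new analytic input is needed.

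For (i) the case $p=0$ reads $B=\ad_A^{\,0}B$. Assume the formula at level $p$ for every $B\in\CC_p(A)$ and fix $B\in\CC_{p+1}(A)$; then $C:=\ad_AB\in\CC_p(A)$ and $B\in\CC_1(A)$. For $\xi\in\Dom(A^{p+1})$ one has $B\xi\in\Dom(A^{p+1})$ and $A(B\xi)=B(A\xi)+C\xi\in\Dom(A^p)$, hence $A^{p+1}B\xi=A^p\big(B(A\xi)\big)+A^p(C\xi)$. Applying the induction hypothesis to $B$ evaluated at $A\xi$ and to $C$ evaluated at $\xi$, rewriting $\ad_A^{\,p-k}C=\ad_A^{\,p+1-k}B$ in the second sum, shifting the index by one in the first, and using Pascal's rule $\binom{p}{k-1}+\binom{p}{k}=\binom{p+1}{k}$, the two sums collapse to $\sum_{k=0}^{p+1}\binom{p+1}{k}(\ad_A^{\,p+1-k}B)A^k\xi$, which is the claim at level $p+1$. (Heuristically this is just the binomial expansion $A^pB=(\ad_A+R_A)^pB$ for the commuting left/right multiplications $X\mapsto AX$, $X\mapsto XA=:R_AX$; the induction merely carries it out on the correct domains.)

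For (ii) the case $p=0$ is $(A+B)^0=A^0$ with an empty sum. For the step, write $(A+B)^{p+1}=(A+B)(A+B)^p$ and substitute the level-$p$ expansion, valid on $\Dom(A^{p+1})=\Dom\big((A+B)^{p+1}\big)$ by the remark after Proposition~\ref{prop:algsCn}. Distributing $A+B$ produces $A^{p+1}$, the terms $BA^p$ and $B\,F_{p,k}(\cdots)A^k$ (already of the wanted form after left-multiplying the relevant polynomial by $\gx_0$), and the terms $A\,F_{p,k}\big(B,\ad_AB,\ldots,\ad_A^{\,p-k-1}B\big)A^k$. In the last term every entry $\ad_A^{\,j}B$ with $j\le p-k-1$ lies in $\CC_1(A)$, hence preserves $\Dom(A)$ and satisfies $A\,\ad_A^{\,j}B=\ad_A^{\,j}B\,A+\ad_A^{\,j+1}B$ there; applying the Leibniz rule of Lemma~\ref{lem:comm_Xast_XY} across the product gives
\[
A\,F_{p,k}\big(B,\ldots,\ad_A^{\,p-k-1}B\big)
= F_{p,k}\big(B,\ldots,\ad_A^{\,p-k-1}B\big)\,A
+\widetilde F_{p,k}\big(B,\ldots,\ad_A^{\,p-k}B\big)
\]
on $\Dom(A)$, where $\widetilde F_{p,k}$ is again a polynomial in non-commuting variables with non-negative integer coefficients. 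Reindexing $k\mapsto k+1$ in the first right-hand term and collecting all contributions yields $(A+B)^{p+1}=A^{p+1}+\sum_{k=0}^{p}F_{p+1,k}\big(B,\ldots,\ad_A^{\,p-k}B\big)A^k$, the polynomials $F_{p+1,k}$ being sums of the old $F_{p,k}$, of their $\gx_0$-multiples, and of the $\widetilde F_{p,k}$ — all with non-negative integer coefficients — and involving exactly the variables $\gx_0,\ldots,\gx_{(p+1)-k-1}$.

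The step I expect to be the main obstacle is the bookkeeping in (ii): one has to track simultaneously which power $A^k$ each polynomial multiplies, how many commutator levels $\ad_A^{\,j}B$ may occur in $F_{p,k}$, and the invariance of ``non-negative integer coefficients'' under left-multiplication by $\gx_0$ and under the $\ad_A$-Leibniz step, so that each $F_{p+1,k}$ indeed depends only on $\gx_0,\ldots,\gx_{(p+1)-k-1}$. Once those three indices are made to line up, the identity drops out; the operator-theoretic points (applying $A$, $A^p$ to vectors, distributing finite sums over a fixed $\xi$, commuting $A$ through a product) are entirely controlled by the domain-invariance assertions of Proposition~\ref{prop:algsCn} and Lemma~\ref{lem:comm_Xast_XY}.
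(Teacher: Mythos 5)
Your proof is correct, and it follows essentially the same strategy as the paper: both parts are established by induction in $p$, with part (i) handled by the Pascal-rule rearrangement and part (ii) reduced to pushing one factor of $A+B$ through the level-$p$ expansion. The one place you diverge is the direction in which you peel off a factor in (ii): you write $(A+B)^{p+1}=(A+B)(A+B)^p$ and then commute $A$ to the right through each polynomial $F_{p,k}$ by a Leibniz step ($A\ad_A^{\,j}B=\ad_A^{\,j}B\,A+\ad_A^{\,j+1}B$ on $\Dom(A)$), whereas the paper writes $(A+B)^{p+1}=(A+B)^p(A+B)$ so that the only new term to manage is $A^kB$, which is handled by a direct appeal to part (i). The paper's right-peeling route has the small advantage of reusing (i) verbatim and producing a clean explicit recursion $F_{p+1,k}=F_{p,k-1}+\sum_{\ell=k}^p\binom{\ell}{k}F_{p,\ell}\,\gx_{\ell-k}$ (with the convention $F_{p,p}=1$); your route is equally valid, and your domain bookkeeping — checking that $F_{p,k}(\cdots)A^k\xi\in\Dom(A)$ for $\xi\in\Dom(A^{p+1})$ so that the Leibniz push-through is legitimate — is the one extra point to verify, which you do correctly via the nesting $\CC_n(A)\subset\CC_1(A)$ and the invariance statements of Proposition~\ref{prop:algsCn}.
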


\begin{proof}
  (i) By Proposition~\ref{prop:algsCn} ad~(\ref{item:Cn_1}) the both
  sides of (\ref{eq:ApB}) are well defined on $\Dom(A^p)$. To verify
  (\ref{eq:ApB}) one can proceed by induction in $p$ which amounts to
  simple algebraic manipulations. We omit the details.

  (ii) Again, the both sides of (\ref{eq:AplusBp}) are well defined on
  $\Dom(A^p)$. One can proceed by induction in $p$. Set, by
  convention, $F_{p,p}=1$. To carry out the induction step let us
  write
  \begin{displaymath}
    (A+B)^{p+1} = \sum_{k=0}^{p}
    F_{p,k}\big(B,\ad_AB,\ldots,\ad_A^{\,p-k-1}B\big)A^k(A+B)
  \end{displaymath}
  and apply claim~(\ref{item:ApB}) of the current lemma to manage the
  term $A^kB$ on the RHS. By comparison one arrives at the recursion
  rule
  \begin{displaymath}
    F_{p+1,k}(\gx_0,\gx_1,\ldots,\gx_{p-k})
    = F_{p,k-1}(\gx_0,\gx_1,\ldots,\gx_{p-k})
    +\sum_{\ell=k}^p\binom{\ell}{k}
    F_{p,\ell}(\gx_0,\gx_1,\ldots,\gx_{p-\ell-1})\,\gx_{\ell-k}
  \end{displaymath}
  from which claim~(\ref{item:AplusBp}) easily follows.
\end{proof}

\subsection{Differentiable Floquet decompositions}

In this section we shall assume that
\begin{displaymath}
  V(t):=H(t)-H(0)
\end{displaymath}
is a uniformly bounded operator-valued function. Of course, it is
Hermitian and $T$-periodic. We also assume that we are given a Floquet
decomposition (\ref{eq:Floquet-decomp}) of the corresponding
propagator $U(t,s)$. For $p\in\Z_+$ let us set
\begin{displaymath}
  \AA_p^0 = \CC_p(H_0),\textrm{~}\AA_p = \CC_p(H_F).
\end{displaymath}
Here and everywhere in this section we write shortly $H_0=H(0)$ and
$S_0=S_F(0)$. Thus we have $H_0=H_F+S_0$ (see (\ref{eq:S_eq_diff_UF})
and (\ref{eq:H_eq_S_plus_HAsendw})).

If the Floquet decomposition is continuously differentiable in the
strong sense and $S_0\in\AA_{p-1}$ for some $p\in\N$ then
Lemma~\ref{lem:CnAplusB_const} tells us that $\AA_k=\AA_k^0$ for
$k=0,1,\ldots,p$.

\begin{lem}
  \label{lem:Floquet_diffp_VA_UA}
  Let us assume that $p\in\N$ and $V(t)\in{}C^{p-1}(\R)$ in the strong
  sense, and that the propagator $U(t,s)$ admits a Floquet
  decomposition (\ref{eq:Floquet-decomp}) which is $p$ times
  continuously differentiable in the strong sense. If
  \begin{equation}
    \label{eq:V_in_A_uni}
    V^{(k)}(t)\textrm{~is in~}\AA_{p-1-k}\textrm{~uniformly for~}
    k=0,1,\dots,p-1,
  \end{equation}
  then
  \begin{equation}
    \label{eq:UF_in_A_uni}
    U_F^{(k)}(t)\textrm{~is in~}\AA_{p-k}\textrm{~uniformly for~}
    k=0,1,\ldots,p.
  \end{equation}
  Moreover, $\AA_k=\AA_k^0$ for $k=0,1,\ldots,p$, and
  $S_F(t)=\ii\,U_F(t)^{-1}\partial_tU_F(t)$ is in $\AA_{p-1}$
  uniformly.
\end{lem}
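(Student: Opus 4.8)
The plan is to turn the Floquet decomposition into a differential equation for $U_F(t)$ and then climb the commutator ladder of $H_F$ by an induction on a ``commutator budget''. Since $U_F\in C^{p}(\R)$ with $p\ge1$, the operator $S_F(t)=\ii\,U_F(t)^{-1}U_F'(t)$ is bounded, so Lemma~\ref{lem:H_eq_S_plus_HAsendw} will give $\Dom H_F=\Dom H(0)$, $H(0)=H_F+S_0$, and the $U_F(t)$-invariance of $\Dom H_F$. Writing $U(t,0)=U_F(t)e^{-\ii tH_F}$ and differentiating the Schr\"odinger equation on $\Dom H(0)$, I will obtain, for $\psi\in\Dom H_F$,
\[
  \ii\,U_F'(t)\psi = H(t)U_F(t)\psi - U_F(t)H_F\psi
  = H_FU_F(t)\psi - U_F(t)H_F\psi + W(t)U_F(t)\psi,
\]
with $W(t):=H(t)-H_F=S_0+V(t)\in\sB(\sH)$. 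Since the right side is the restriction to $\Dom H_F$ of a bounded operator and $\Dom H_F$ is $U_F(t)$-invariant, the remark following Lemma~\ref{lem:comm_Xast_XY} shows that $\ad_{H_F}U_F(t)\in\sB(\sH)$ exists and equals $R_0(t):=\ii\,U_F'(t)-W(t)U_F(t)$.

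Next I will record the analogous identities for the higher derivatives. Putting $V_*^{(0)}:=W$, $V_*^{(j)}:=V^{(j)}$ for $j\ge1$, and
\[
  R_k(t):=\ii\,U_F^{(k+1)}(t)-\sum_{j=0}^k\binom{k}{j}V_*^{(j)}(t)\,U_F^{(k-j)}(t),
  \qquad k=0,1,\ldots,p-1,
\]
I will show by induction on $k$ that $\ad_{H_F}U_F^{(k)}(t)\in\sB(\sH)$ exists and equals $R_k(t)$. In the step one rewrites the identity as $H_FU_F^{(k)}(t)\psi=R_k(t)\psi+U_F^{(k)}(t)H_F\psi$ on $\Dom H_F$; using $V\in C^{p-1}(\R)$ and $U_F\in C^{p}(\R)$ the right side is strongly differentiable in $t$, so closedness of $H_F$ forces $U_F^{(k+1)}(t)\psi\in\Dom H_F$ with $\partial_t\bigl(H_FU_F^{(k)}(t)\psi\bigr)=H_FU_F^{(k+1)}(t)\psi$, whence $\ad_{H_F}U_F^{(k+1)}(t)=R_k'(t)$; a short computation with the Leibniz rule and Pascal's identity (using $\tfrac{\dd}{\dd t}V_*^{(j)}=V_*^{(j+1)}$) then gives $R_k'(t)=R_{k+1}(t)$.

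The main step is the bootstrap. To begin with, all $U_F^{(k)}(t)$, $k=0,\ldots,p$, are uniformly bounded (strong continuity, $T$-periodicity, uniform boundedness principle), and $S_0=\ii\,U_F'(0)$ is bounded. For $m=0,1,\ldots,p$ I will prove by induction the statement $Q(m)$: ``$U_F^{(k)}(t)\in\CC_{\min(m,p-k)}(H_F)$ uniformly for all $k=0,\ldots,p$, and $S_0\in\CC_{\min(m,p-1)}(H_F)$''. The base case $Q(0)$ is the uniform boundedness just recorded. For $Q(m)\Rightarrow Q(m+1)$ the only new assertions concern indices $k\le p-m-1$; for such $k$, the previous paragraph and the remarks following Definition~\ref{def:Cn_uniform} reduce $U_F^{(k)}(t)\in\CC_{m+1}(H_F)$ uniformly to $R_k(t)\in\CC_m(H_F)$ uniformly, and each summand of $R_k(t)$ lies in $\CC_m(H_F)$ uniformly: the factors $U_F^{(k+1)}$ and $U_F^{(k-j)}$ by $Q(m)$ (the relevant orders all exceed $m$), the factors $V^{(j)}$ with $j\ge1$ by hypothesis~(\ref{eq:V_in_A_uni}) since $p-1-j\ge m$ whenever $j\le k$, and $W=S_0+V(t)$ by the $S_0$-clause of $Q(m)$ together with $V(t)\in\AA_{p-1}$ uniformly; closure of uniform $\CC_m$-membership under sums and products then concludes. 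The $S_0$-clause of $Q(m+1)$ follows from $S_0=\ii\,U_F^{(1)}(0)$ and the $U_F^{(1)}$ part just obtained.

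Finally $Q(p)$ is exactly~(\ref{eq:UF_in_A_uni}) and also yields $S_0\in\CC_{p-1}(H_F)=\AA_{p-1}$, whence Lemma~\ref{lem:CnAplusB_const} with $B=S_0$ gives $\AA_k=\CC_k(H_F)=\CC_k(H_F+S_0)=\CC_k(H_0)=\AA_k^0$ for $k=0,\ldots,p$; and $S_F(t)=\ii\,U_F(t)^{*}U_F'(t)$ with $U_F(t)^{*}\in\AA_p$ and $U_F'(t)\in\AA_{p-1}$ uniformly gives $S_F(t)\in\AA_{p-1}$ uniformly, completing the proof. The delicate point is the apparent circularity: to obtain $U_F(t)\in\CC_p(H_F)$ one needs $\ad_{H_F}U_F(t)=R_0(t)$, hence the constant term $S_0$ of $W$, to have $p-1$ bounded commutators with $H_F$, while $S_0\in\CC_{p-1}(H_F)$ is itself only a by-product of the regularity of $U_F'$. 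The induction on $m$ is arranged precisely to break this loop: at each stage one spends one commutator to gain one, and the regularity of $S_0$ trails exactly one order behind.
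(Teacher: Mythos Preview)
Your proof is correct and follows essentially the same strategy as the paper's: both derive the commutator identity $\ad_{H_F}U_F(t)=\ii\,U_F'(t)-(S_0+V(t))U_F(t)$ (the paper writes $\ii\,U_F'(t)$ as $U_F(t)S_F(t)$), extend it to the $k$-th derivative, and then run a bootstrap that at each step trades one derivative for one commutator. The only cosmetic differences are that the paper differentiates the identity via Lemma~\ref{lem:ad_AX_limit} (closedness of $\ad_{H_F}$) rather than closedness of $H_F$, organizes the induction as a descent in $\ell=p-m$, and tracks the full family $S_F^{(k)}(t)$ instead of the single constant $S_0$.
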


\begin{proof}
  For the proof we shall need the relation
  \begin{equation}
    \label{eq:comm_H_FU_F}
    \ad_{H_F}U_F(t) = U_F(t)S_F(t)-\big(S_0+V(t)\big)U_F(t).
  \end{equation}
  Here $U_F(t)$ preserves the domain $\Dom(H_F)=\Dom(H_0)$. Equality
  (\ref{eq:comm_H_FU_F}) follows from (\ref{eq:H_eq_S_plus_HAsendw})
  and the substitution
  \begin{displaymath}
    H(t) = H_0+V(t) = S_0+H_F+V(t).
  \end{displaymath}
  From the differentiability of $U_F(t)$ it follows that $S_F(t)$
  belongs to $C^{p-1}(\R)$ in the strong sense. Thus all derivatives
  of $S_F(t)$ up to the order $p-1$ are uniformly bounded (due to the
  periodicity). With the aid of Lemma~\ref{lem:ad_AX_limit} we derive
  from (\ref{eq:comm_H_FU_F}) that
  \begin{equation}
    \label{eq:comm_H_FdiffU_F}
    \ad_{H_F}U_F^{(k)}(t)
    = \frac{\dd^k}{\dd t^k}
    \left(U_F(t)S_F(t)-\big(S_0+V(t)\big)U_F(t)\right)\in\sB(\sH)
    \textrm{~~for~}k=0,1,\ldots,p-1,
  \end{equation}
  (with all derivatives taken in the strong sense). Moreover,
  $\ad_{H_F}U_F^{(k)}(t)$ is uniformly bounded for
  $0\leq{}k\leq{}p-1$. Note also that (\ref{eq:V_in_A_uni}) can be
  rewritten in the form
  \begin{equation}
    \label{eq:V_in_A_kl}
    V^{(k)}(t)\textrm{~is in~}\AA_{p-1-\ell}\textrm{~uniformly if~}
    0 \leq k \leq \ell \leq p-1
  \end{equation}
  since the algebras $\AA_r$ are nested,
  $\AA_0\supset\AA_1\supset\AA_2\supset\ldots$, (see
  (\ref{eq:Cn_nested})).

  We shall verify that, for $\ell=0,1,\ldots,p$,
  \begin{equation}
    \label{eq:UF_in_A_kl}
    U_F^{(k)}(t)\textrm{~is in~}\AA_{p-\ell}\textrm{~uniformly if~}
    0\leq k\leq\ell\leq p.
  \end{equation}
  Since
  \begin{displaymath}
    S_F^{(k)}(t) = \frac{\dd^k}{\dd t^k}
    \left(\ii\,U_F(t)^{\ast}U_F'(t)\right)
  \end{displaymath}
  and $\AA_n$ is a $\ast$-algebra relation (\ref{eq:UF_in_A_kl})
  implies that, for $\ell=1,\ldots,p$,
  \begin{equation}
    \label{eq:S_in_A_kl}
    S_F^{(k)}(t)\textrm{~is in~}\AA_{p-\ell}\textrm{~uniformly if~}
    0\leq k\leq\ell-1\leq p-1.
  \end{equation}
  To show (\ref{eq:UF_in_A_kl}) we shall proceed by a finite
  descending induction in $\ell$. According to the assumptions of the
  lemma, $U_F^{(k)}(t)$ is uniformly bounded for $0\leq{}k\leq{}p$ and
  the case $\ell=p$ follows. Suppose now that (\ref{eq:UF_in_A_kl}) is
  valid for some $\ell$, $1\leq\ell\leq{}p$. Then for the same $\ell$,
  (\ref{eq:S_in_A_kl}) is valid as well. Moreover, replacing $\ell$ by
  $\ell-1$ in (\ref{eq:V_in_A_kl}) one knows that $V^{(k)}(t)$ is in
  $\AA_{p-\ell}$ uniformly for $0\leq{}k\leq\ell-1$. Thus if
  $0\leq{}k\leq\ell-1$ then from the fact that $\AA_{p-\ell}$ is an
  algebra and from the induction hypothesis one deduces that the RHS
  of (\ref{eq:comm_H_FdiffU_F}) is in $\AA_{p-\ell}$ uniformly. This
  implies, in virtue of Proposition~\ref{prop:algsCn}
  ad~(\ref{item:Cn_2}), that $U_F^{(k)}(t)$ is in $\AA_{p-\ell+1}$
  uniformly. This completes the induction step and relation
  (\ref{eq:UF_in_A_kl}) is verified.

  Setting $k=\ell$ in (\ref{eq:UF_in_A_kl}) one obtains
  (\ref{eq:UF_in_A_uni}). Setting $k=0$ and $\ell=1$ in
  (\ref{eq:S_in_A_kl}) one finds that $S_F(t)$ is $\AA_{p-1}$
  uniformly. In particular, $S_0\equiv{}S_F(0)$ belongs to
  $\AA_{p-1}=\CC_{p-1}(H_F)$. Since $H_0=H_F+S_0$ from
  Lemma~\ref{lem:CnAplusB_const} we know that
  \begin{displaymath}
    \AA_k = \CC_k(H_F) = \CC_k(H_0) = \AA_k^0
  \end{displaymath}
  for $k=0,1,\ldots,p$.
\end{proof}

\begin{cor}
  \label{cor:Floquet_diffp_VA0_UA}
  Lemma~\ref{lem:Floquet_diffp_VA_UA} remains still true if
  $\AA_{p-1-k}$ is replaced by $\AA^0_{p-1-k}$ in the condition
  (\ref{eq:V_in_A_uni}).
\end{cor}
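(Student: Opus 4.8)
The plan is to argue by induction on $p\in\N$, using Lemma~\ref{lem:Floquet_diffp_VA_UA} itself as a black box. The only real issue is that a priori one does not know that the two scales of algebras $\AA_k=\CC_k(H_F)$ and $\AA^0_k=\CC_k(H_0)$ coincide — this coincidence is part of the conclusion — so the hypothesis phrased with $\AA^0_{p-1-k}$ is not literally the hypothesis~(\ref{eq:V_in_A_uni}) of the lemma. The induction will let us bootstrap the identity $\AA_k=\AA^0_k$ for exactly those indices $k\le p-1$ that occur in the hypothesis, after which the corollary reduces to the lemma. This bookkeeping of indices is the one thing that needs care; everything else is an immediate appeal to Lemma~\ref{lem:Floquet_diffp_VA_UA}.

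For the base case $p=1$ there is nothing to do: $\AA_0=\CC_0(H_F)=\sB(\sH)=\CC_0(H_0)=\AA^0_0$, so for $p=1$ the condition that $V(t)$ be in $\AA^0_0$ uniformly is simply that $V(t)$ be uniformly bounded, which coincides with~(\ref{eq:V_in_A_uni}), and Lemma~\ref{lem:Floquet_diffp_VA_UA} applies directly.

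For the induction step I would assume the corollary already proved with $p$ replaced by $p-1$, and suppose the hypotheses of the corollary hold for $p$, that is, $V(t)\in C^{p-1}(\R)$ in the strong sense, the Floquet decomposition $p$ times continuously differentiable in the strong sense, and $V^{(k)}(t)\in\AA^0_{p-1-k}$ uniformly for $k=0,\ldots,p-1$. First I would observe that these hypotheses imply the hypotheses of the corollary at level $p-1$: the regularity assumptions on $V(t)$ and on the Floquet decomposition are inherited trivially, and since the algebras $\AA^0_r$ are nested one has $V^{(k)}(t)\in\AA^0_{p-1-k}\subset\AA^0_{(p-1)-1-k}$ uniformly for $k=0,\ldots,p-2$. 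Hence the corollary at level $p-1$ — and with it the conclusion of Lemma~\ref{lem:Floquet_diffp_VA_UA} at level $p-1$ — applies, yielding in particular
\[
\AA_k=\AA^0_k\qquad\text{for }k=0,1,\ldots,p-1.
\]

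Now the largest index appearing in the hypothesis ``$V^{(k)}(t)\in\AA^0_{p-1-k}$ uniformly, $k=0,\ldots,p-1$'' is $p-1$, so by the displayed equalities this hypothesis is equivalent to ``$V^{(k)}(t)$ is in $\AA_{p-1-k}$ uniformly for $k=0,\ldots,p-1$'', i.e.\ exactly condition~(\ref{eq:V_in_A_uni}). Applying Lemma~\ref{lem:Floquet_diffp_VA_UA} at level $p$ then delivers all the conclusions of that lemma, closing the induction. The point to be careful about is precisely that the hypothesis never refers to an algebra $\AA^0_r$ with $r>p-1$, so that the identification of the two scales at those lower indices — already supplied by the lower-order case — is enough to pass from the corollary's hypothesis to the lemma's.
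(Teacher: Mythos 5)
Your proof is correct and follows essentially the same inductive argument as the paper: in the induction step you first weaken the hypotheses to the previous level, invoke the induction hypothesis to obtain $\AA_k=\AA_k^0$ for $k=0,\ldots,p-1$, and then translate the $\AA^0$-hypothesis into the $\AA$-hypothesis of Lemma~\ref{lem:Floquet_diffp_VA_UA}. The only difference is the purely cosmetic choice of inducting from $p-1$ to $p$ rather than from $p$ to $p+1$.
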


\begin{proof}
  We shall proceed by induction in $p$. For $p=1$ we have
  $\AA_0^0=\AA_0=\sB(\sH)$ and thus replacing $\AA_{p-1-k}$ by
  $\AA^0_{p-1-k}$ in (\ref{eq:V_in_A_uni}) does not mean any change.
  Let us now suppose that the claim is true for some $p\in\N$. And we
  assume that $V^{(k)}(t)$ is in $\AA^0_{p-k}$ uniformly for
  $k=0,1,\ldots,p$. Of course, the other assumptions of
  Lemma~\ref{lem:Floquet_diffp_VA_UA}, except of the condition
  (\ref{eq:V_in_A_uni}), are satisfied as well, namely
  $V(t)\in{}C^{p}(\R)$ in the strong sense, and the propagator
  $U(t,s)$ admits a Floquet decomposition (\ref{eq:Floquet-decomp})
  which is $p+1$ times continuously differentiable in the strong
  sense. Since $\AA^0_{p-k}\subset\AA^0_{p-1-k}$, $V^{(k)}(t)$ is in
  $\AA^0_{p-1-k}$ uniformly for $k=0,1,\ldots,p-1$. By the induction
  hypothesis, Lemma~\ref{lem:Floquet_diffp_VA_UA} is applicable for
  the value $p$ and therefore, in particular, $\AA_k=\AA_k^0$ for
  $k=0,1,\ldots,p$. Hence $V^{(k)}(t)$ is in $\AA_{p-k}$ uniformly for
  $k=0,1,\ldots,p$, which is nothing but condition
  (\ref{eq:V_in_A_uni}) with $p$ being replaced by $p+1$. It follows
  that the conclusions of Lemma~\ref{lem:Floquet_diffp_VA_UA} hold
  true for the value $p+1$ as well.
\end{proof}

\begin{prop}
  \label{prop:Floquet_diffp_HpU}
  Let us assume that $p\in\N$ and $V(t)\in{}C^{p-1}(\R)$ in the strong
  sense, and that the propagator $U(t,s)$ admits a Floquet
  decomposition (\ref{eq:Floquet-decomp}) which is $p$ times
  continuously differentiable in the strong sense. If
  \begin{equation}
    \label{eq:Vk_in_A0_uni}
    V^{(k)}(t)\textrm{~is in~}\AA_{p-1-k}^0\textrm{~uniformly for~}
    k=0,1,\dots,p-1,
  \end{equation}
  then $U(t,0)$, $t\in\R$, preserves the domain $\Dom(H_0^{\,p})$ and
  \begin{displaymath}
    \forall\psi\in\Dom(H_0^{\,p}),\textrm{~}
    \sup_{t\in\R}\|H(t)^pU(t,0)\psi\| < \infty.
  \end{displaymath}
\end{prop}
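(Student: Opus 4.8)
The plan is to pass to the Floquet picture, where $H(t)$ becomes a bounded perturbation of the fixed self-adjoint operator $H_F$ conjugated by the unitary $U_F(t)$, and then to feed the regularity of the decomposition into the commutator machinery of this section, which was set up precisely for this purpose. The only genuinely analytic input is already contained in Lemma~\ref{lem:Floquet_diffp_VA_UA} and Corollary~\ref{cor:Floquet_diffp_VA0_UA}; what is left is an algebraic expansion and a norm estimate.

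First I would note that (\ref{eq:Vk_in_A0_uni}) is exactly the hypothesis of Corollary~\ref{cor:Floquet_diffp_VA0_UA}, so all conclusions of Lemma~\ref{lem:Floquet_diffp_VA_UA} are available: $U_F^{(k)}(t)$ is in $\AA_{p-k}$ uniformly for $k=0,\ldots,p$ (in particular $U_F(t)\in\AA_p=\CC_p(H_F)$ uniformly), the algebras coincide, $\AA_k=\AA_k^0$ for $k=0,\ldots,p$, and $S_F(t)$ is in $\AA_{p-1}=\CC_{p-1}(H_F)$ uniformly; in particular $S_0=S_F(0)\in\CC_{p-1}(H_F)$ and $\sup_t\|S_F(t)\|<\infty$. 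Since $H_0=H_F+S_0$ with $S_0\in\CC_{p-1}(H_F)$, the remark following Proposition~\ref{prop:algsCn} gives $\Dom(H_0^{\,k})=\Dom(H_F^{\,k})$ for $k=0,1,\ldots,p$. Now $e^{-\ii tH_F}$ preserves each $\Dom(H_F^{\,k})$ by the functional calculus, and $U_F(t)\in\CC_p(H_F)$ preserves $\Dom(H_F^{\,p})$ by Proposition~\ref{prop:algsCn} ad~(\ref{item:Cn_1}); hence $U(t,0)=U_F(t)e^{-\ii tH_F}$ maps $\Dom(H_0^{\,p})$ into itself, and applying the same to $U(0,t)=e^{\ii tH_F}U_F(t)^{-1}$ (using that $U_F(t)^{-1}=U_F(t)^\ast\in\CC_p(H_F)$) gives the reverse inclusion, so the domain is preserved.

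Since $S_F(t)$ is bounded, Lemma~\ref{lem:H_eq_S_plus_HAsendw} applies with relative bound zero and $H(t)=U_F(t)(H_F+S_F(t))U_F(t)^{-1}$. A straightforward induction on $p$, using at each step the domain identity $\Dom((H_F+S_F(t))^k)=\Dom(H_F^{\,k})$ for $k=0,\ldots,p$ (again the remark after Proposition~\ref{prop:algsCn}, since $S_F(t)\in\CC_{p-1}(H_F)$), shows $H(t)^pU_F(t)=U_F(t)(H_F+S_F(t))^p$ on $\Dom(H_F^{\,p})$, whence for $\psi\in\Dom(H_0^{\,p})=\Dom(H_F^{\,p})$
\[
H(t)^pU(t,0)\psi = U_F(t)\big(H_F+S_F(t)\big)^p e^{-\ii tH_F}\psi .
\]
Next I would apply Lemma~\ref{lem:HpU_commute} ad~(\ref{item:AplusBp}) with $A=H_F$, $B=S_F(t)$ — legitimate because the polynomials $F_{p,k}$ only involve the commutators $\ad_{H_F}^{\,j}S_F(t)$ with $j\le p-1$, so $S_F(t)\in\CC_{p-1}(H_F)$ suffices — to write, on $\Dom(H_F^{\,p})$,
\[
\big(H_F+S_F(t)\big)^p = H_F^{\,p}+\sum_{k=0}^{p-1}G_{p,k}(t)\,H_F^{\,k},\qquad
G_{p,k}(t):=F_{p,k}\big(S_F(t),\ad_{H_F}S_F(t),\ldots,\ad_{H_F}^{\,p-k-1}S_F(t)\big).
\]
Because $S_F(t)\in\CC_{p-1}(H_F)$ uniformly, each $\ad_{H_F}^{\,j}S_F(t)$ with $0\le j\le p-1$ is a uniformly bounded operator-valued function, and since $F_{p,k}$ has non-negative integer coefficients, $M_k:=\sup_t\|G_{p,k}(t)\|<\infty$. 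Using the unitarity of $U_F(t)$ and of $e^{-\ii tH_F}$ and the fact that $e^{-\ii tH_F}$ commutes with each $H_F^{\,k}$, one concludes for $\psi\in\Dom(H_0^{\,p})$
\[
\|H(t)^pU(t,0)\psi\| = \big\|(H_F+S_F(t))^p e^{-\ii tH_F}\psi\big\|
\le \|H_F^{\,p}\psi\|+\sum_{k=0}^{p-1}M_k\,\|H_F^{\,k}\psi\|,
\]
which is finite and independent of $t$; this, together with the domain preservation established above, is the assertion.

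The heavy lifting — that the strong differentiability of the Floquet decomposition together with the commutator hypotheses on $V(t)$ force $U_F^{(k)}(t)\in\AA_{p-k}$ uniformly and, crucially, $S_F(t)\in\AA_{p-1}$ — has already been done in Lemma~\ref{lem:Floquet_diffp_VA_UA} and Corollary~\ref{cor:Floquet_diffp_VA0_UA}, so the remaining difficulty is purely organizational. The fiddliest point I anticipate is matching the order of regularity actually available ($S_F(t)\in\CC_{p-1}(H_F)$, not $\CC_p(H_F)$) against what the expansion in Lemma~\ref{lem:HpU_commute} ad~(\ref{item:AplusBp}) consumes, together with keeping the domain identifications $\Dom(H_0^{\,k})=\Dom(H_F^{\,k})$ straight at every stage of the induction for $H(t)^pU_F(t)=U_F(t)(H_F+S_F(t))^p$.
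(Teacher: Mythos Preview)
Your proof is correct and follows essentially the same route as the paper's: invoke Corollary~\ref{cor:Floquet_diffp_VA0_UA} to obtain $U_F(t)\in\AA_p$ and $S_F(t)\in\AA_{p-1}$ uniformly, identify the domains $\Dom(H_0^{\,k})=\Dom(H_F^{\,k})$, write $H(t)^pU(t,0)\psi=U_F(t)(H_F+S_F(t))^pe^{-\ii tH_F}\psi$, and expand via Lemma~\ref{lem:HpU_commute} ad~(\ref{item:AplusBp}). You are in fact a bit more careful than the paper on two points: you spell out why $S_F(t)\in\CC_{p-1}(H_F)$ (rather than $\CC_p$) suffices for the expansion, and you justify the domain preservation directly via $U_F(t)$ and $e^{-\ii tH_F}$ separately instead of asserting $U(t,0)\in\AA_p$ as the paper does.
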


\begin{proof}
  From Corollary~\ref{cor:Floquet_diffp_VA0_UA} we know that $U_F(t)$
  is in $\AA_p$ uniformly and $S_F(t)$ is in $\AA_{p-1}$ uniformly.
  Since $S_0\in\AA_{p-1}$ and $H_0=H_F+S_0$,
  Lemma~\ref{lem:CnAplusB_const} tells us that $\AA_k=\AA_k^0$ for
  $0\leq{}k\leq{}p$. From the relations $V(t)\in\AA_{p-1}^0$ and
  $S_0\in\AA_{p-1}$ it also follows that
  \begin{equation}
    \label{eq:Dom_H0k_Htk_HFk}
    \Dom(H_0^{\,k}) = \Dom(H(t)^k) = \Dom(H_F^{\,k})
    \textrm{~~for~}k=0,1,\ldots,p,
  \end{equation}
  see Proposition~\ref{prop:algsCn} ad~(\ref{item:Cn_1}).
  Furthermore, from the Floquet decomposition
  (\ref{eq:Floquet-decomp}) and the above observation on $U_F(t)$ one
  deduces that $U(t,0)$ is in $\AA_p=\AA_p^0$ uniformly and therefore
  $U(t,0)(\Dom(H_0^{\,p}))\subset\Dom(H_0^{\,p})$.

  Suppose that $\psi\in\Dom(H_0^{\,p})$. From
  (\ref{eq:Floquet-decomp}) and (\ref{eq:H_eq_S_plus_HAsendw}) one
  finds that
  \begin{displaymath}
    H(t)^{p}U(t,0)\psi
    = U_F(t)\big(H_F+S_F(t)\big)^pe^{-\ii tH_F}\psi.
  \end{displaymath}
  With the aid of equality~(\ref{eq:AplusBp}) of
  Lemma~\ref{lem:HpU_commute} ad~(\ref{item:AplusBp}) one derives the
  estimate
  \begin{displaymath}
    \|H(t)^{p}U(t,0)\psi\| \leq \sum_{k=0}^{p}
    F_{p,k}(\gS_0,\gS_1,\ldots,\gS_{p-k-1})\,\|H_F^{\,k}\psi\|
  \end{displaymath}
  where
  \begin{displaymath}
    \gS_k := \sup_{t\in\R}\|\ad_{H_F}^{\,k}S_F(t)\|,\textrm{~}
    k=0,1,\ldots,p-1.
  \end{displaymath}
  The proposition follows.
\end{proof}

\begin{lem}
  \label{lem:Xn_in_Ap}
  Under the same assumptions as in
  Proposition~\ref{prop:Floquet_diffp_HpU}, the operators
  \begin{equation}
    \label{eq:Xn_def}
    X_n(t,s) = \sum_{k=0}^n\binom{n}{k}(-1)^k
    H(t)^{n-k}U(t,s)H(s)^k,\textrm{~~}n=0,1,\ldots,p,
  \end{equation}
  are well defined on $\Dom(H_0^{\,n})$. Moreover, $X_n(t,s)$ extends
  in a unique way to a bounded operator on $\sH$ which is in
  $\AA_{p-n}$ uniformly with respect to the variables $(t,s)\in\R^2$.
\end{lem}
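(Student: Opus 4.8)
The plan is to prove both assertions simultaneously by induction on $n$, the engine being a simple recursion expressing $X_{n+1}$ through $X_n$ by a single commutator with $H_0$ plus a product in one of the algebras $\AA_m$, whose relevant properties have all been set up above. From Corollary~\ref{cor:Floquet_diffp_VA0_UA} we already know that $U_F(t)$ is in $\AA_p$ uniformly, that $S_F(t)$ is in $\AA_{p-1}$ uniformly, and that $\AA_k=\AA_k^0$ for $k=0,1,\ldots,p$; by Proposition~\ref{prop:algsCn} ad~(\ref{item:Cn_1}) and the remark following \eqref{eq:AplusBp}'s proof this also gives $\Dom(H_0^{\,k})=\Dom(H(t)^k)=\Dom(H_F^{\,k})$ for $k\le p$. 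Throughout I would therefore write $\AA_k$ and $\AA_k^0$ interchangeably and perform all commutator computations with respect to $H_0$.

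First I would record the recursion. Substituting the Pascal identity $\binom{n+1}{k}=\binom{n}{k}+\binom{n}{k-1}$ into \eqref{eq:Xn_def} and reindexing the two resulting sums one obtains, on $\Dom(H_0^{\,n+1})$,
\begin{equation*}
  X_{n+1}(t,s) = H(t)X_n(t,s) - X_n(t,s)H(s),\qquad n=0,1,\ldots,p-1.
\end{equation*}
Since $U(t,s)=U_F(t)e^{-\ii(t-s)H_F}U_F(s)^{-1}$ is a product of operators lying in $\AA_p$ uniformly (for $e^{-\ii(t-s)H_F}$ all higher commutators with $H_F$ vanish and its norm is $1$), the $\ast$-algebra closure from the remarks after Definition~\ref{def:Cn_uniform} shows that $X_0(t,s)=U(t,s)$ is in $\AA_p$ uniformly in $(t,s)\in\R^2$: this is the base case. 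The same fact, together with Proposition~\ref{prop:algsCn} ad~(\ref{item:Cn_1}), shows that $U(t,s)$, hence each $X_n(t,s)$, preserves the domains $\Dom(H_0^{\,k})$ for $k\le p$, so that the summands $H(t)^{n-k}U(t,s)H(s)^k$ of $X_n(t,s)$, as well as the recursion above, make sense on the stated domains.

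For the inductive step, assume $0\le n<p$ and $X_n(t,s)\in\AA_{p-n}$ uniformly in $(t,s)$. Writing $H(t)=H_0+V(t)$ and $H(s)=H_0+V(s)$ in the recursion gives, on $\Dom(H_0^{\,n+1})$,
\begin{equation*}
  X_{n+1}(t,s) = \big(H_0X_n(t,s)-X_n(t,s)H_0\big)
  + \big(V(t)X_n(t,s)-X_n(t,s)V(s)\big).
\end{equation*}
Because $p-n\ge1$, the remark after Lemma~\ref{lem:comm_Xast_XY} together with Proposition~\ref{prop:algsCn} ad~(\ref{item:Cn_2}) identifies the first bracket with $\ad_{H_0}X_n(t,s)$, which exists in $\sB(\sH)$ and lies in $\CC_{p-n-1}(H_0)=\AA_{p-n-1}$, uniformly in $(t,s)$ by the remark after Definition~\ref{def:Cn_uniform}. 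In the second bracket, assumption \eqref{eq:Vk_in_A0_uni} for $k=0$ puts $V(t)$ and $V(s)$ in $\AA_{p-1}^0$, hence — by nesting — in $\AA_{p-n-1}^0=\AA_{p-n-1}$, uniformly; and $X_n(t,s)\in\AA_{p-n}\subset\AA_{p-n-1}$ uniformly. Closure of $\AA_{p-n-1}$ under products then places $V(t)X_n(t,s)-X_n(t,s)V(s)$ in $\AA_{p-n-1}$ uniformly in $(t,s)$. Adding the two brackets, the right-hand side is visibly a bounded operator, so $X_{n+1}(t,s)$ extends (uniquely, $\Dom(H_0^{\,n+1})$ being dense) to a bounded operator which is in $\AA_{p-n-1}$ uniformly in $(t,s)$. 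This closes the induction.

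I do not foresee a genuine obstacle: once the recursion $X_{n+1}=H(t)X_n-X_nH(s)$ is established, the statement is a routine consequence of the calculus of the algebras $\CC_m(\cdot)$ developed above. The one point that needs care is the domain bookkeeping — verifying that the formal identities $X_{n+1}=H(t)X_n-X_nH(s)=\ad_{H_0}X_n+\big(V(t)X_n-X_nV(s)\big)$ really hold on $\Dom(H_0^{\,n+1})$ — which relies on the domain invariance of $X_n(t,s)$ (a consequence of $X_n(t,s)\in\AA_{p-n}$) and on the coincidence $\Dom(H_0^{\,k})=\Dom(H(t)^k)=\Dom(H_F^{\,k})$ for $k\le p$.
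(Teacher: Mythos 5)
Your proof is correct, and the central recursion you isolate,
\begin{displaymath}
  X_{n+1}(t,s) = H(t)X_n(t,s) - X_n(t,s)H(s),
\end{displaymath}
obtained from Pascal's identity, is exactly the engine the paper uses — but the paper applies it after first conjugating away the $U_F$ factors. Specifically, the paper writes $X_n(t,s)=U_F(t)Z_n(t,s)U_F(s)^{-1}$ with $Z_n(t,s)=\sum_k\binom{n}{k}(-1)^k(H_F+S_F(t))^{n-k}e^{-\ii(t-s)H_F}(H_F+S_F(s))^k$, then runs your recursion on $Z_n$ and splits $H_F+S_F(\cdot)$ into $\ad_{H_F}Z_n + S_F(t)Z_n - Z_nS_F(s)$; the payoff is that the base case $Z_0=e^{-\ii(t-s)H_F}$ commutes with $H_F$ so all its multiple commutators vanish trivially, whereas your base case $X_0=U(t,s)$ needs the $\ast$-algebra closure applied to the three-factor product $U_F(t)e^{-\ii(t-s)H_F}U_F(s)^{-1}$. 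Both treatments of the base case are fine, so the difference is purely one of bookkeeping: the paper factors out the $U_F$'s once and for all and works in the $(H_F,S_F)$ picture; you stay in the $(H_0,V)$ picture and absorb the $U_F$'s via the algebra structure. Each is a legitimate route and of comparable length.

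One small inaccuracy to flag: you write that "each $X_n(t,s)$ preserves the domains $\Dom(H_0^{\,k})$ for $k\le p$." That is true for $X_0=U(t,s)\in\AA_p$, but for $n\ge1$ the induction hypothesis $X_n\in\AA_{p-n}$ only yields invariance of $\Dom(H_0^{\,k})$ for $k\le p-n$ (via Proposition~\ref{prop:algsCn} ad~(\ref{item:Cn_1})). This does not affect the argument, since all you actually use is (a) $U(t,s)\in\AA_p$ to make each summand $H(t)^{n-k}U(t,s)H(s)^k$ well defined on $\Dom(H_0^{\,n})$, and (b) $X_n\in\AA_{p-n}$ with $p-n\ge1$ so that $X_n$ preserves $\Dom(H_0)$, which is all the recursion needs on $\Dom(H_0^{\,n+1})$. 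It would be cleaner to state the weaker, correct invariance.
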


\begin{proof}
  In the same way as in the proof of
  Proposition~\ref{prop:Floquet_diffp_HpU}, we deduce from the
  assumptions that equalities (\ref{eq:Dom_H0k_Htk_HFk}) hold true as
  well as that $U_F(t)$ is in $\AA_p=\AA_p^0$ uniformly and $S_F(t)$
  is in $\AA_{p-1}$ uniformly. Moreover, Proposition~\ref{prop:algsCn}
  ad~(\ref{item:comm_1}) tells us that $U_F(t)$ preserves
  $\Dom(H_F^{\,k})$ for $k=0,1,\ldots,p$.

  From (\ref{eq:Floquet-decomp}) and (\ref{eq:H_eq_S_plus_HAsendw}) it
  follows that
  \begin{displaymath}
    X_n(t,s) = U_F(t)Z_n(t,s)U_F(s)^{\,-1}
  \end{displaymath}
  where
  \begin{displaymath}
    Z_n(t,s) = \sum_{k=0}^n\binom{n}{k}(-1)^k
    \big(H_F+S_F(t)\big)^{n-k}e^{-\ii(t-s)H_F}\big(H_F+S_F(s)\big)^k.
  \end{displaymath}
  It suffices to show that $Z_n(t,s)$ is well defined on
  $\Dom(H_F^{\,n})$ and extends to a bounded operator on $\sH$ which
  is in $\AA_{p-n}$ uniformly. To verify it we proceed by induction in
  $n$.

  For $n=0$, $Z_0(t,s)=e^{-\ii(t-s)H_F}$ fulfills
  $\ad_{H_F}^{\,k}Z_0(t,s)=0$ for all $k\ge1$ and so it is in $\AA_p$
  uniformly. To carry out the induction step observe that
  \begin{eqnarray*}
    Z_{n+1}(t,s)
    &=& \big(H_F+S_F(t)\big)Z_n(t,s)-Z_n(t,s)\big(H_F+S_F(s)\big) \\
    &=& \ad_{H_F}Z_n(t,s)+S_F(t)Z_n(t,s)-Z_n(t,s)S_F(s).
  \end{eqnarray*}
  The induction hypothesis and Proposition~\ref{prop:algsCn}
  ad~(\ref{item:Cn_2}) (see also Remarks following
  Definition~\ref{def:Cn_uniform}) imply that $\ad_{H_F}Z_n(t,s)$ is
  in $\AA_{p-n-1}$ uniformly. Recalling Proposition~\ref{prop:algsCn}
  ad~(\ref{item:Cn_3}) it also holds true that $S_F(t)Z_n(t,s)$ and
  $Z_n(t,s)S_F(s)$are in $\AA_{p-n-1}$ uniformly. This verifies the
  induction step and concludes the proof of the lemma.
\end{proof}

\begin{prop}
  \label{prop:Floquet_diffp_transprob}
  Under the same assumptions as in
  Proposition~\ref{prop:Floquet_diffp_HpU} (including condition
  (\ref{eq:Vk_in_A0_uni})), let $P(t,\cdot)$ be the projection-valued
  measure from the spectral decomposition of $H(t)$. Then there exists
  a constant $C_p\geq0$ such that for any couple of intervals
  $\Delta_1,\Delta_2\subset\R$ whose distance
  $\dist(\Delta_1,\Delta_2)$ is positive it holds true
  \begin{equation}
    \label{eq:trans_prob_higher_ord}
    \forall s,t\in\R,\textrm{~}
    \Vert P(t,\Delta_{1})U(t,s)P(s,\Delta_{2})\Vert
    \leq\frac{C_p}{{\dist(\Delta_{1},\Delta_{2})}^p}\,.
  \end{equation}
\end{prop}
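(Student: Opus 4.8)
The strategy is to promote the single Sylvester equation used in the proof of Proposition~\ref{prop:trans_probability} to an equation for the $p$-th power of the commutator superoperator, and then to feed in the uniform boundedness of $X_p(t,s)$ supplied by Lemma~\ref{lem:Xn_in_Ap}. I would first handle the case of bounded intervals, the general case being recovered by the standard limiting argument: replace $\Delta_j$ by $\Delta_j\cap[-R,R]$, note that the distance does not decrease and that $P(\cdot,\Delta_j\cap[-R,R])\to P(\cdot,\Delta_j)$ strongly, and let $R\to\infty$. So fix $s,t\in\R$, write $P_1=P(t,\Delta_1)$ and $P_2=P(s,\Delta_2)$, and let $A$ and $B$ denote the restrictions of $H(t)$ and $H(s)$ to the \emph{subspaces} $\Ran P_1$ and $\Ran P_2$: these are bounded self-adjoint operators with $\Spec A\subset\overline{\Delta_1}$ and $\Spec B\subset\overline{\Delta_2}$, hence $\dist(\Spec A,\Spec B)\geq\dist(\Delta_1,\Delta_2)>0$. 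Put $X=P_1U(t,s)P_2$, regarded as an operator from $\Ran P_2$ to $\Ran P_1$, and let $\mathcal{L}(Z)=AZ-ZB$ act on $\sB(\Ran P_2,\Ran P_1)$. Since left and right multiplication commute, the binomial theorem yields
\[
\mathcal{L}^{\,p}(X)=\sum_{k=0}^p\binom{p}{k}(-1)^k\,A^{p-k}XB^k .
\]
Because the spectral projections commute with their operators and, for bounded $\Delta_2$, $\Ran P_2\subset\Dom(H_0^{\,p})$, a short computation using the identity $(H(t)P_1)^{j}P_1=P_1H(t)^{j}$ together with the fact (from Lemma~\ref{lem:Xn_in_Ap}) that $X_p(t,s)$ is well defined on $\Dom(H_0^{\,p})$ identifies the right-hand side above with $P_1X_p(t,s)P_2$.

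By Lemma~\ref{lem:Xn_in_Ap} applied with $n=p$, the operator $X_p(t,s)$ extends to a bounded operator that is uniformly bounded in $(t,s)\in\R^2$; set $C_p:=\sup_{(t,s)}\|X_p(t,s)\|<\infty$, so that $\|\mathcal{L}^{\,p}(X)\|\leq\|X_p(t,s)\|\leq C_p$. On the other hand, Lemma~\ref{lem:AX-XB_eq_Y} — whose contour-integral proof applies verbatim to operators acting between two different Hilbert spaces — shows that $\mathcal{L}$ is invertible with $\|\mathcal{L}^{-1}\|\leq\dist(\Spec A,\Spec B)^{-1}\leq\dist(\Delta_1,\Delta_2)^{-1}$, and therefore $\|\mathcal{L}^{-p}\|\leq\|\mathcal{L}^{-1}\|^{p}\leq\dist(\Delta_1,\Delta_2)^{-p}$. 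Writing $X=\mathcal{L}^{-p}\bigl(\mathcal{L}^{\,p}(X)\bigr)$ we conclude
\[
\bigl\|P(t,\Delta_1)U(t,s)P(s,\Delta_2)\bigr\|=\|X\|
\leq\frac{\|\mathcal{L}^{\,p}(X)\|}{\dist(\Delta_1,\Delta_2)^{\,p}}
\leq\frac{C_p}{\dist(\Delta_1,\Delta_2)^{\,p}},
\]
which is exactly \eqref{eq:trans_prob_higher_ord}.

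I do not anticipate a genuine obstacle: all the analytic difficulty has already been absorbed into Lemma~\ref{lem:Xn_in_Ap}, and what is left is bookkeeping — verifying the superoperator identity $\mathcal{L}^{\,p}(X)=P_1X_p(t,s)P_2$ on the dense domain $\Ran P_2$, checking the invariance of the domains of powers of $H_0$, and observing that Lemma~\ref{lem:AX-XB_eq_Y} carries over to maps between two spaces. The only point deserving a little care is that one should \emph{not} work with $A=H(t)P(t,\Delta_1)$ and $B=H(s)P(s,\Delta_2)$ as operators on all of $\sH$: their spectra then both contain $0$, so their distance may vanish; restricting to $\Ran P_1$ and $\Ran P_2$, as above, is precisely what makes the spectral separation $\dist(\Delta_1,\Delta_2)$ available.
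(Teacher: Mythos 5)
Your proof is correct and follows the same route as the paper: the uniform bound $C_p=\sup_{(t,s)}\|X_p(t,s)\|<\infty$ from Lemma~\ref{lem:Xn_in_Ap} combined with $p$ applications of the Sylvester estimate of Lemma~\ref{lem:AX-XB_eq_Y}; the paper arranges this as the step-by-step recursion $Q_1(t)Y_n-Y_nQ_2(s)=Y_{n+1}$ with $Q_1(t)=H(t)P(t,\Delta_1)$, $Q_2(s)=H(s)P(s,\Delta_2)$ and $Y_n=P(t,\Delta_1)X_n(t,s)P(s,\Delta_2)$, which is exactly your $X=\mathcal{L}^{-p}\big(\mathcal{L}^{\,p}(X)\big)$ written iteratively. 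Your care in restricting $A,B$ to $\Ran P_1,\Ran P_2$ rather than keeping $Q_1(t),Q_2(s)$ on all of $\sH$ is well placed: as you observe, those ambient spectra both contain $0$ when the $\Delta_j$ are bounded, so the hypothesis of Lemma~\ref{lem:AX-XB_eq_Y} is, read literally, not met unless one passes to the reducing subspaces --- a point the paper leaves implicit.
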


\begin{proof}
  It suffices to verify the assertion for bounded intervals. The
  general case then follows by a limit procedure. Set
  $Y_n(t,s)=P(t,\Delta_{1})X_n(t,s)P(s,\Delta_{2})$ where $X_n(t,s)$
  is defined in (\ref{eq:Xn_def}), and $Q_1(t)=H(t)P(t,\Delta_{1})$,
  $Q_2(s)=H(s)P(s,\Delta_2)$. In particular,
  $Y_0(t,s)=P(t,\Delta_{1})U(t,s)P(s,\Delta_{2})$. From
  Lemma~\ref{lem:Xn_in_Ap} we know that the operator-valued functions
  $X_n(t,s)$ are uniformly bounded. If $0\leq{}n<p$ then it holds
  \begin{displaymath}
    Q_1(t)Y_n(t,s)-Y_n(t,s)Q_2(s) = Y_{n+1}(t,s).
  \end{displaymath}
  By Lemma~\ref{lem:AX-XB_eq_Y} we have the estimate
  \begin{displaymath}
    \|Y_n(t,s)\| \leq \frac{\|Y_{n+1}(t,s)\|}
    {\dist\big(\Spec(Q_1(t)),\Spec(Q_2(s))\big)}
    \leq \frac{\|Y_{n+1}(t,s)\|}{\dist(\Delta_1,\Delta_2)}\,.
  \end{displaymath}
  Applying this estimate consecutively for $n=0,1,\ldots,p-1$, we find
  that (\ref{eq:trans_prob_higher_ord}) holds true with
  $C_p=\sup_{(t,s)\in\R^2}\|X_p(t,s)\|$.
\end{proof}

\section{A solvable example: the time-dependent harmonic oscillator}

Let us consider the time-dependent harmonic oscillator
\begin{displaymath}
  H(t) = H_\omega+f(t)x,\textrm{~}H_{\omega }
  = -\frac{1}{2}\,\partial^{\,2}_{x}+\frac{\omega^2 x^2}{2}\,,
\end{displaymath}
in $\sH=L^2(\R,\dd{}x)$ where the function $f(t)$ is supposed to be
continuous and $T$ periodic. The Hamiltonians quadratic in $x$ and $p$
turn out to be quite attractive in various situations since they allow
for explicit computations. For example, a classical result is a
formula for the Green function computed in the framework of the
Feynman path integral \cite{feynman_hibbs65}, see also
\cite{schulman81} and comments on the literature therein. For purposes
of the present paper we need some of the results derived in
\cite{enss_veselic83:_bound_propag_states_time_dep_hamil} and
concerned with the dynamical properties of $H(t)$, see also an
additional analysis in \cite[Chp.~5]{bunimovich_etql91}. Let us also
mention that in \cite{hagedorn_etal86} it has been shown that the
Floquet operator associated to a time-dependent quadratic Hamiltonian
can only have either a pure point spectrum or a purely absolutely
transient continuous spectrum.

As pointed out in
\cite{enss_veselic83:_bound_propag_states_time_dep_hamil}, it holds
\begin{eqnarray*}
  U(t,0)^{-1}xU(t,0) &=&
  x\cos(\omega{}t)+\frac{p}{\omega}\sin(\omega{}t)
  -\frac{1}{\omega}\,\varphi_2(t,0), \\
  U(t,0)^{-1}pU(t,0) &=&
  -\,\omega x\sin(\omega{}t)+p\cos(\omega{}t)+\varphi_1(t,0),
\end{eqnarray*}
where the functions $\varphi_1(t,s)$ and $\varphi_2(t,s)$ are given in
(\ref{eq:varphi12}). Assume for a moment that $\varphi_1(t,0)$ and
$\varphi_2(t,0)$ are uniformly bounded. Under this assumption it is
obvious that if an initial condition $\psi$ belongs to the Schwartz
space $\sS$ then the quantity
$\langle{}U(t,0)\psi,P(p,x)U(t,0)\psi\rangle$ is uniformly bounded in
time for any polynomial $P(p,x)$ in the non-commuting variables
$p=-\ii\partial_x$ and $x$. In particular, for such an initial
condition, the mean value of energy is bounded uniformly. As stated in
\cite[Proposition~4.1]{enss_veselic83:_bound_propag_states_time_dep_hamil},
it follows that all trajectories $\{U(t,0)\psi;\,t\in\R\}$, for any
initial condition $\psi\in\sH$, are precompact subsets in $\sH$. This
in turn implies that the spectrum of the monodromy operator $U(T,0)$
is pure point (see Theorem~2.3 in
\cite{enss_veselic83:_bound_propag_states_time_dep_hamil}). The fact
that the mean value of energy is bounded for all initial conditions
from a total set has also the following consequence (see Lemma~3.3 in
\cite{enss_veselic83:_bound_propag_states_time_dep_hamil}):
\begin{displaymath}
  \forall\psi\in\sH,\textrm{~}
  \lim_{R\to\infty}\sup_{t\in\R}\,\|F(H_\omega>R)U(t,0)\psi\|=0
\end{displaymath}
where the symbol $F$ stands for the projection-valued measure from the
spectral decomposition of the operator indicated in the argument and
taken for a subset of the real line which is indicated in the argument
as well.

Let us note that paper
\cite{enss_veselic83:_bound_propag_states_time_dep_hamil} has finally
focused on the particular case $f(t)=\sin(2\pi{}t/T)$. In that case a
simple computation shows that the functions $\varphi_1(t,0)$ and
$\varphi_2(t,0)$ are bounded if and only if $2\pi/T\neq\omega$.

Let us now examine how Proposition~\ref{prop:energy_bounded} can be
applied to this example. We consider the non-resonant case
\begin{displaymath}
  T \notin \frac{2\pi}{\omega}\,\N.
\end{displaymath}
Let us write
\begin{displaymath}
  T = \frac{2\pi}{\omega}\,N+\Delta,\textrm{~with~}
  N\in\Z_+,\textrm{~}\Delta\in                      %%%[
  \Big]0,\frac{2\pi}{\omega}\Big[\,.                   %%%]
\end{displaymath}
As a first step one has to make a choice of a self-adjoint operator
$H_F$ so that $U(T,0)=\exp(-\ii{}TH_F)$. According to
Proposition~\ref{thm:propag_ultimite}, the monodromy operator
corresponding to $H(t)$ can be expressed in the form
\begin{equation}
  \label{eq:UT_eq_exp}
  U(T,0) = (-1)^N\exp\!\left(-\ii\Delta H_{\omega}
    +\ii\,\frac{\mu(T,0)}{\omega}\,p
    +\ii\,\nu(T,0)x+\ii\sigma(T,0)\right)
\end{equation}
where the functions $\mu(t,s)$ and $\nu(t,s)$ are given in
(\ref{eq:mu_nu}) and $\sigma(t,s)$ is given in (\ref{eq:sigma}).

We shall seek $H_F$ in the form
\begin{displaymath}
  H_F = H_{\omega}-\frac{\alpha}{\omega T}\,p
  -\frac{\beta}{T}\,x+\frac{\gamma}{T}
\end{displaymath}
for some $\alpha,\beta,\gamma\in\R$. Then it holds
\begin{eqnarray*}
  \exp(-\ii TH_F) &=& e^{-\ii\gamma}
  \exp\!\left(-\ii TH_\omega
    +\ii\,\frac{\alpha}{\omega}\,p+\ii\beta x\right) \\
  &=& \exp\!\left(-\ii\gamma
    +\ii\,\frac{\alpha^2+\beta^2}{2\omega^2T}\right)
  \exp\!\left(\ii\,\frac{\alpha}{\omega T}\,x\right)
  \exp\!\left(-\ii\,\frac{\beta}{\omega^2T}\,p\right)
  \exp\!\left(-\ii TH_\omega\right) \\
  && \times\,\exp\!\left(\ii\,\frac{\beta}{\omega^2T}\,p\right)
  \exp\!\left(-\ii\,\frac{\alpha}{\omega T}\,x\right).
\end{eqnarray*}
Here we have used that
\begin{equation}
  \label{eq:expsHexps}
  e^{\ii sx}H_{\omega } e^{-\ii sx}
  = H_{\omega}-sp+\frac{s^2}{2}\,,\textrm{~}
  e^{\ii sp}H_{\omega }e^{-\ii sp}
  = H_{\omega}+s\omega^2x+\frac{s^2\omega^2}{2}\,.
\end{equation}
By the well known spectral properties of $H_\omega$,
$\exp(-\ii{}TH_\omega)$ equals $(-1)^N\exp(-\ii\Delta{}H_\omega)$, and
so one finally arrives at the expression
\begin{displaymath}
  (-1)^N\exp\!\left(-\ii\gamma+\ii\,\frac{\alpha^2+\beta^2}{2\omega^2T}
    \left(1-\frac{\Delta}{T}\right)\right)
  \exp\!\left(-\ii\Delta H_\omega
    +\ii\,\frac{\alpha\Delta}{\omega T}\,p
    +\ii\,\frac{\beta\Delta}{T}\,x\right).
\end{displaymath}
Equating this expression to the RHS of (\ref{eq:UT_eq_exp}) one has to
set
\begin{displaymath}
  \alpha = \frac{T}{\Delta}\,\mu(t,0),\textrm{~}
  \beta = \frac{T}{\Delta}\,\nu(t,0),\textrm{~}
  -\gamma+\frac{\alpha^2+\beta^2}{2\omega^2T}
  \left(1-\frac{\Delta}{T}\right) = \sigma(T,0).
\end{displaymath}
Thus our choice of $H_F$ reads
\begin{equation}
  \label{eq:HF}
  H_F = H_{\omega}-\frac{\mu(T,0)}{\omega\Delta}\,p
  -\frac{\nu(T,0)}{\Delta}\,x-\frac{\sigma(T,0)}{T}
  +\pi N\,\frac{\mu(T,0)^2+\nu(T,0)^2}{\omega^3\Delta^2T}\,.
\end{equation}

As a next step one has to compute the $T$--periodic family of unitary
operators $U_F(t)=U(t,0)\exp(\ii{}tH_F)$. With the aid of
Lemma~\ref{thm:expH_eq_expp_expx_expH} one can express
\begin{eqnarray*}
  \exp\!\left(-\ii tH_F\right)
  &=& \exp\!\left(-\ii\phi(t)+\ii\,\frac{\sigma(T,0)t}{T}
    -\ii\pi N\,\frac{\big(\mu(T,0)^2
      +\nu(T,0)^2\big)t}{\omega^3\Delta^2T}\right)\\
  && \times\,\exp\!\left(\ii\,\frac{\xi(t)}{\omega}\,p\right)
   \exp\!\left(\ii\eta(t)x\right)\exp\!\left(-\ii tH_{\omega}\right)
\end{eqnarray*}
where
\begin{eqnarray*}
  \xi(t) &=& \frac{2}{\omega\Delta}\,
  \sin\!\left(\frac{\omega t}{2}\right)
  \left(\cos\!\left(\frac{\omega t}{2}\right)\mu(T,0)
    -\sin\!\left(\frac{\omega t}{2}\right)\nu(T,0)\right), \\
  \eta(t) &=& \frac{2}{\omega\Delta}\,
  \sin\!\left(\frac{\omega t}{2}\right)
  \left(\sin\!\left(\frac{\omega t}{2}\right)\mu(T,0)
    +\cos\!\left(\frac{\omega t}{2}\right)\nu(T,0)\right),
\end{eqnarray*}
and
\begin{eqnarray*}
  \phi(t) &=& -\frac{1}{4\omega^3\Delta^2}\,
  \big((2\omega t-4 \sin(\omega t)+\sin(2\omega t))\mu(T,0)^2 \\
  && +\,(2-4\cos(\omega t)+2\cos(2\omega t))\mu(T,0)\nu(T,0)
  +\,(2\omega t-\sin(2\omega t))\nu(T,0)^2\big).
\end{eqnarray*}
Using relations (\ref{eq:mu_nu}) for $\mu(T,0)$ and $\nu(T,0)$ this
can be rewritten as
\begin{eqnarray}
  \label{eq:xieta_eq_int}
  \xi(t) &=& \frac{\sin\!\left(\frac{\omega t}{2}\right)}{
    \sin\!\left(\frac{\omega T}{2}\right)}
  \int_0^T \sin\!\left(\omega\left(\frac{t+T}{2}-u\right)\right)
   f(u)\,\dd u, \nonumber\\
  \eta(t) &=& -\frac{\sin\!\left(\frac{\omega t}{2}\right)}{
    \sin\!\left(\frac{\omega T}{2}\right)}
  \int_0^T \cos\!\left(\omega\left(\frac{t+T}{2}-u\right)\right)
   f(u)\,\dd u,
\end{eqnarray}
and it also holds true that (compare to (\ref{eq:phi_eq_xieta}))
\begin{displaymath}
  \phi(t) = \frac{1}{2\omega}\,\xi(t)\eta(t)
  -\frac{\omega t-\sin(\omega t)}{8\omega
    \sin\!\left(\frac{\omega t}{2}\right)^2}
  \left(\xi(t)^2+\eta(t)^2\right).
\end{displaymath}

Expressing the propagator $U(t,0)$ according to formula
(\ref{eq:propag_EnssVeselic}) due to Enss and Veseli\`c one finally
arrives at the sought equality
\begin{equation}
  \label{eq:UF_ho}
  U_F(t) = e^{\ii\Phi(t)}e^{\ii F_2(t)x}e^{\ii(F_1(t)/\omega)p}
\end{equation}
where
\begin{displaymath}
  F_1(t) = \varphi_2(t,0)-\xi(t),\textrm{~}
  F_2(t) = -\varphi_1(t,0)-\eta(t),
\end{displaymath}
and
\begin{displaymath}
  \Phi(t) = -\psi(t,0)+\phi(t)-\frac{\sigma(T,0)t}{T}
  +\pi N\,\frac{\big(\mu(T,0)^2+\nu(T,0)^2\big)t}{\omega^3\Delta^2T}
  -\frac{\varphi_2(t,0)\eta(t)}{\omega}
\end{displaymath}
($\psi(t,s)$ is given in (\ref{eq:psi_def})). After some elementary
manipulations this can be rewritten as
\begin{eqnarray*}
  F_1(t) &=& \frac{1}{2\sin\!\left(\frac{\omega T}{2}\right)}
  \int_0^T \cos\!\left(\omega\left(u-\frac{T}{2}\right)\right)
  \big(f(t-u)-f(u)\big)\,\dd u, \\
  F_2(t) &=& \frac{1}{2\sin\!\left(\frac{\omega T}{2}\right)}
  \int_0^T \sin\!\left(\omega\left(u-\frac{T}{2}\right)\right)
  \big(f(t-u)+f(u)\big)\,\dd u,
\end{eqnarray*}
and
\begin{eqnarray*}
  \Phi(t) &=& -\frac{1}{2} \int_0^t
  \left(\varphi_1(v,0)^2-\varphi_2(v,0)^2\right)\,\dd v
  +\frac{t}{2 T}\,\int_0^T
  \left(\varphi_1(v,0)^2-\varphi _2(v,0)^2\right)\,\dd v \\
  && +\,\frac{1}{2\omega}\,\xi(t)\,\eta(t)
  -\frac{t}{2\omega T}\,\varphi_1(T,0)\varphi_2(T,0)
  -\frac{\varphi_2(t,0)\eta(t)}{\omega} \\
  && -\,\frac{\omega t-\sin(\omega t)}
  {8\omega\sin\!\left(\frac{\omega t}{2}\right)^{\!2}}
  \left(\xi(t)^2+\eta(t)^2\right)
  +\frac{(\omega T-\sin(\omega T))t}{8\omega
    \sin\!\left(\frac{\omega T}{2}\right)^{\!2}T}
  \left(\varphi_1(T,0)^2+\varphi _2(T,0)^2\right).
\end{eqnarray*}
In the last equality one has to substitute for $\varphi_1(t,0)$ and
$\varphi_2(t,0)$ from (\ref{eq:varphi12}), and for $\xi(t)$ and
$\eta(t)$ from (\ref{eq:xieta_eq_int}).

It is of importance to observe that the functions $F_1(t)$, $F_2(t)$
and $\Phi(t)$ entering formula (\ref{eq:UF_ho}) are continuously
differentiable. In addition, they are necessarily $T$--periodic.
Furthermore, the operators $x$ and $p$ are infinitesimally small with
respect to $H_\omega$. This is a well known fact which is also briefly
recalled in the beginning of the Appendix.  From equality
(\ref{eq:HF}) one can see that $\Dom{}H_F=\Dom{}H_\omega$. Moreover,
from the commutation relations (\ref{eq:expsHexps}) it follows that
the unitary groups $\{\exp(\ii{}sx);\,s\in\R\}$ and
$\{\exp(\ii{}sp);\,s\in\R\}$ preserve the domain $\Dom{}H_\omega$.
Hence one can differentiate $U_F(t)$ given in (\ref{eq:UF_ho}) on any
vector $\psi\in\Dom{}H_F$. Computing $S_F(t)$ according to
(\ref{eq:S_eq_diff_UF}) one finds that
\begin{displaymath}
  S_F(t) = -\frac{F_1'(t)}{\omega}\,p-F_2'(t)x
  +\frac{F_1(t)F_2'(t)}{\omega}-\Phi'(t).
\end{displaymath}
Consequently, $S_F(t)$ is infinitesimally small with respect to $H_F$
for any $t$. Thus all assumptions of
Proposition~\ref{prop:energy_bounded} are fulfilled and one concludes
that $\|H(t)U(t,0)\psi\|$ is bounded in time for any
$\psi\in\Dom(H(0))=\Dom(H_\omega)$.

From the explicit form of $H(t)$ and from the infinitesimal smallness
of $x$ with respect to $H_\omega$ it follows that the quantity
$\|H_\omega{}U(t,0)\psi\|$ is bounded in time as well. Let us recall
once more the consequences of this observation. Firstly, as stressed
in
\cite[Proposition~4]{deoliveira95:_some_rems_stabil_nonstationary_qs},
since $F(H_\omega<R)$ is a finite rank projector for any $R>0$ it is
true that all trajectories $\{U(t,0)\psi;\,t\in\R\}$ are precompact.
Secondly, in virtue of Theorem~2.3 in
\cite{enss_veselic83:_bound_propag_states_time_dep_hamil}, the
monodromy operator $U(T,0)$ has a pure point spectrum.

Finally, let us shortly discuss the resonant case $T=(2\pi/\omega)N$,
$N\in\N$. Using again formula (\ref{eq:propag_EnssVeselic}) we have
\begin{equation}
  \label{eq:UT_resonant}
  U(T,0) = (-1)^Ne^{-\ii\psi(T,0)}
  \exp\!\left(-\ii\varphi_1(T,0)x\right)
  \exp\!\left(\ii\,\frac{\varphi_2(T,0)}{\omega}\,p\right).
\end{equation}
Notice that the unitary operator $e^{\ii\alpha{}x}e^{\ii\beta{}p}$,
with $\alpha,\beta\in\R$, is either the identity if $\alpha=\beta=0$
or it has a purely absolutely continuous spectrum. For example, if
$\beta\neq0$ then we have the commutation relation
\begin{displaymath}
  e^{\ii\alpha{}x}e^{\ii\beta{}p}
  = \exp\!\left(-\ii\,\frac{\alpha}{2\beta}\,x^2\right)
  \exp\!\left(\ii\beta p-\frac{\ii}{2}\,\alpha\beta\right)
  \exp\!\left(\ii\,\frac{\alpha}{2\beta}\,x^2\right).
\end{displaymath}
Hence the spectrum of $e^{\ii\alpha{}x}e^{\ii\beta{}p}$ coincides with
that of $e^{-\ii\alpha\beta/2}e^{\ii\beta{}p}$. In the case
$\alpha\neq0$ one can argue in a similar way. Thus when applying this
observation to (\ref{eq:UT_resonant}) we have to distinguish the case
$\varphi_1(T,0)=\varphi_2(T,0)=0$. Recalling defining relations
(\ref{eq:varphi12}) we denote by
\begin{displaymath}
  f_k = \frac{1}{T}\int_0^T
  \exp\!\left(-\ii\,\frac{2\pi}{T}\,kt\right)f(t)\,\dd t,\textrm{~~}
  k\in\Z,
\end{displaymath}
the Fourier coefficients of $f(t)$. We conclude that if $f_{-N}=f_N=0$
then the monodromy operator $U(T,0)$, with $T=2\pi{}N/\omega$, is a
multiple of the identity. If $|f_{-N}|+|f_N|>0$ then $U(T,0)$ has a
purely absolutely continuous spectrum.  This in turn implies that, in
the latter case, the quantity $\|H(t)U(t,0)\psi\|$ cannot happen to be
bounded in time for all $\psi\in\Dom{}H_\omega$.

%%%%%%%%%%%%%%%%%%%%%%%%%%%%%%%%%%%%%%%%%%%%%%%%%%%
\section{An application of the quantum KAM method}%
%%%%%%%%%%%%%%%%%%%%%%%%%%%%%%%%%%%%%%%%%%%%%%%%%%%

The quantum KAM method was originally proposed by Bellissard
\cite{bellissard85:_stability_instability_qm} and it has been later
reconsidered and in some respects improved several times, see for
example \cite{combescure87:_quant_stability_periodic_perturb_harmosc,
  bleher_jauslin_lebowitz92:_floquet_spectrum_quasi_period,
  duclos_ps96:_floquet_hamiltonian_pure_point,
  bambusi_graffi01:_time_quasi_period_schroedinger_kam,
  duclos_lev_ps_vittot02:_weakly_regular_floquet_hamil}. When
discussing an application of the quantum KAM method to our problem we
shall stick to the presentation given in
\cite{duclos_lev_ps_vittot02:_weakly_regular_floquet_hamil} but the
notation will be partially modified. A particularity of the method is
that the frequency $\omega=2\pi/T$ should be considered as a
parameter. Usually the method is used to show that for a large subset
of so called non-resonant frequencies the spectrum of the Floquet
Hamiltonian is pure point. Here we would like to point out, following
some ideas from
\cite{asch_duclos_exner98:_stability_growing_gaps_wannier}, that the
method provides a more detailed information which can be used to
reveal the structure of the propagator.

Let us first recall the main theorem from
\cite{duclos_lev_ps_vittot02:_weakly_regular_floquet_hamil}. Let
$H_{0}$ be a self-adjoint operator in $\sH$ with a discrete spectrum,
$\Spec(H_{0})=\{ h_{m}\}_{m=1}^{\infty}$, and such that the
multiplicities $M_{m}=\dim\Ker(H_{0}-h_{m})$ are finite. Suppose also
that
\[
\Delta_{0}=\inf_{m\neq n}|h_{m}-h_{n}|>0.
\]
Furthermore, let $V(t)$ be a $2\pi$-periodic uniformly bounded
operator-valued function defined on $\R$ and with values in
$\sB(\sH)$. Set
\[
V_{knm}=\frac{1}{2\pi}\,\int_{0}^{2\pi}e^{-\ii kt}Q_{n}V(t)Q_{m}\dd t
\]
where $Q_{n}$ is the orthogonal projector onto $\Ker(H_{0}-h_{n})$.
As already mentioned, the frequency $\omega=2\pi/T$, $T>0$, is
regarded as a parameter. Set $\sK=L^{2}([\,0,T\,],\sH,\dd t)$ and let
$V\in\sB(\sK)$ be the operator acting via multiplication by
$V(\omega{}t)$, $(Vf)(t)=V(\omega t)f(t)$.  Let $K_{0}$ be the closure
of $-\ii\partial_{t}\otimes1+1\otimes H_{0}$.

\begin{thm}\label{prop:RevMathPhys}
  Fix $J>0$ and set $\Omega_{0}=[\,\frac{8}{9}J,\frac{9}{8}J\,]$.
  Assume that there exists $\sigma>0$ such that
  \begin{equation}
    \label{eq:gap_cond_sigma}
    \sum_{\substack{m,n\in\N\cr h_{m}-h_{n}>J/2}}
    \frac{M_{m}M_{n}}{(h_{m}-h_{n})^{\sigma}}<\infty\,.
  \end{equation}
  Then for every $r>\sigma+\frac{1}{2}$ there exist positive constants
  (depending on $\sigma$, $r$, $\Delta_{0}$ and $J$ but independent of
  $V$), $\epsilon_{\star}$ and $\delta_{\star}$, with the property:

  \noindent if
  \begin{equation}
    \label{eq:eps_V}
    \epsilon_{V}:=\sup_{n\in\N}\sum_{m\in\N}\sum_{k\in\Z}(1+|k|)^{r}
    \Vert V_{knm}\Vert<\epsilon_{\star}
  \end{equation}
  then there exists a measurable subset
  $\Omega_{\infty}\subset\Omega_{0}$ such that
  \[
  |\Omega_{\infty}|\geq|\Omega_{0}|-\delta_{\star}\,\epsilon_{V}
  \]
  (here $|\Omega_{\ast}|$ stands for the Lebesgue measure of
  $\Omega_{\ast}$) and the operator $K_{0}+V$ has a pure point
  spectrum for all $\omega\in\Omega_{\infty}$.
\end{thm}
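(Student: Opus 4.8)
The plan is to prove Theorem~\ref{prop:RevMathPhys} by a quantum KAM iteration in $\sK$, in the spirit of \cite{duclos_ps96:_floquet_hamiltonian_pure_point} and \cite{duclos_lev_ps_vittot02:_weakly_regular_floquet_hamil}. One works throughout with the weighted norm suggested by (\ref{eq:eps_V}): for an operator-valued function $W(t)$ with Fourier--block components $W_{knm}=\frac{1}{2\pi}\int_{0}^{2\pi}e^{-\ii kt}Q_{n}W(t)Q_{m}\,\dd t$ put
\[
|\!|\!|W|\!|\!|_{r}:=\sup_{n\in\N}\ \sum_{m\in\N}\sum_{k\in\Z}(1+|k|)^{r}\,\|W_{knm}\|,
\]
so that $\epsilon_{V}=|\!|\!|V|\!|\!|_{r}$; a routine convolution estimate shows that $|\!|\!|\cdot|\!|\!|_{r}$ is submultiplicative up to a constant depending only on $r$, which is what will keep the successive corrections in the \emph{same} regularity class. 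Writing $K_{0}+V=K_{0}+D_{0}+V_{0}$, with $D_{0}$ commuting with $K_{0}$ in the block sense and $V_{0}$ the off-diagonal remainder, the iteration constructs self-adjoint off-diagonal operators $W_{j}$ on $\sK$ and unitaries $e^{\ii W_{j}}$ such that the conjugates $K_{j+1}=e^{\ii W_{j}}K_{j}e^{-\ii W_{j}}=K_{0}+D_{j+1}+V_{j+1}$ have off-diagonal part $\epsilon_{j+1}=|\!|\!|V_{j+1}|\!|\!|_{r}$ decaying quadratically, while the diagonal parts $D_{j}$ converge. I would organise the scheme with a Fourier cut-off $N_{j}\to\infty$ along the iteration (Moser's device), which is needed to absorb the loss of regularity caused by the small-denominator division.

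First I would carry out the homological step. Given $K_{j}$ one looks for $W_{j}$ solving $\ad_{K_{0}+D_{j}}(\ii W_{j})=-V_{j}^{\mathrm{od},\le N_{j}}$ (the off-diagonal part of $V_{j}$ supported on $|k|\le N_{j}$); componentwise this means dividing $(V_{j})_{knm}$ by the denominators $k\omega+h_{n}-h_{m}$, perturbed by the bounded corrections coming from $D_{j}$. For $\omega\in\Omega_{0}$ the pairs with $|h_{m}-h_{n}|\leq J/2$ (and $m\ne n$, by $\Delta_{0}>0$) or with $k=0$ already have a denominator bounded below by a quantity of order $\min(\Delta_{0},J)$; the only dangerous pairs are those with $|h_{m}-h_{n}|>J/2$ and $k\ne0$, and precisely because $\omega$ is confined to $\Omega_{0}$ the relevant such pairs have $h_{m}-h_{n}\approx|k|\omega$. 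For these I would impose a Diophantine condition $|k\omega-(h_{m}-h_{n})|\geq\gamma_{j}(1+|k|)^{-\tau}$ with a fixed $\tau>\sigma$, removing the $\omega$ that violate it. The measure removed at step $j$ is controlled, after using the inequality $\sum_{h_{m}-h_{n}\in[A,2A]}M_{m}M_{n}\leq C A^{\sigma}$ (which is exactly what (\ref{eq:gap_cond_sigma}) provides), by a series of the type $\gamma_{j}\sum_{k\ne0}|k|^{\sigma-1}(1+|k|)^{-\tau}$, convergent since $\tau>\sigma$; the thresholds $\gamma_{j}$ are tuned — via a Cauchy--Schwarz splitting against the weight $(1+|k|)^{r}$, which is where the extra $\tfrac12$ in the constraint $r>\sigma+\tfrac12$ originates — so that the homological solution stays small, $|\!|\!|W_{j}|\!|\!|_{r}\leq C N_{j}^{\tau}\gamma_{j}^{-1}\epsilon_{j}$, and at the same time the removed measures sum to at most $\delta_{\star}\epsilon_{V}$.

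Next I would run the quadratic-convergence estimate: expanding $e^{\ii W_{j}}(K_{0}+D_{j}+V_{j})e^{-\ii W_{j}}$ and using that $W_{j}$ cancels the low-mode off-diagonal part of $V_{j}$ to first order, the new off-diagonal remainder $V_{j+1}$ is a sum of nested commutators quadratic in $W_{j}$ plus the truncation tail from $|k|>N_{j}$, so submultiplicativity of $|\!|\!|\cdot|\!|\!|_{r}$ yields a recursion of the form $\epsilon_{j+1}\leq C\big((N_{j}^{\tau}/\gamma_{j})^{2}\epsilon_{j}^{\,2}+N_{j}^{-(r-\sigma-1/2)}\epsilon_{j}\big)$. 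Choosing $N_{j}$ a suitable inverse power of $\epsilon_{j}$ and $\gamma_{j}$ decaying along the iteration, one fixes $\epsilon_{\star}$ so small that $\epsilon_{j}$ decays super-exponentially; simultaneously $\|D_{j+1}-D_{j}\|\leq C\epsilon_{j}$ in $\sB(\sK)$, hence $D_{j}\to D_{\infty}$ in norm and $\prod_{j}e^{\ii W_{j}}$ converges strongly to a unitary $\UU$. Then $\UU(K_{0}+V)\UU^{-1}=K_{0}+D_{\infty}$ with $D_{\infty}$ block-diagonal with respect to a decomposition of $\sK$ into $K_{0}$-invariant subspaces; by $\Delta_{0}>0$, the discreteness of $\Spec(H_{0})$ with finite multiplicities and the Diophantine property of $\omega\in\Omega_{\infty}$, each of these blocks is finite-dimensional, so $K_{0}+D_{\infty}$ — and hence $K_{0}+V$ — has pure point spectrum for every $\omega$ in $\Omega_{\infty}:=\bigcap_{j}(\textrm{surviving set at step }j)$, which satisfies $|\Omega_{0}\setminus\Omega_{\infty}|\leq\sum_{j}(\textrm{measure removed at step }j)\leq\delta_{\star}\epsilon_{V}$.

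The step I expect to be the main obstacle is the bookkeeping that makes the whole construction close at a single fixed exponent $r$: one must verify that the off-diagonal corrections generated by the commutators $[\ii W_{j},D_{j}]$ and $[\ii W_{j},V_{j}]$ and by the truncation again lie in the $(1+|k|)^{r}$-weighted space with the \emph{same} $r$, that the accumulated bounded diagonal corrections $D_{j}$ can be absorbed into the denominators without destroying the Diophantine lower bounds that define $\Omega_{\infty}$, and — most delicately — that the Diophantine thresholds $\gamma_{j}$ can be chosen so as to simultaneously keep the factors $N_{j}^{\tau}/\gamma_{j}$ beaten by the quadratic gain and keep the total removed frequency measure linear in $\epsilon_{V}$. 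It is precisely here that the condition $r>\sigma+\tfrac12$ is forced; this is the technical heart of the argument and is carried out in full detail in \cite{duclos_lev_ps_vittot02:_weakly_regular_floquet_hamil}.
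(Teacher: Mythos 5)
Note first that the paper does not prove Theorem~\ref{prop:RevMathPhys} at all: it explicitly recalls it from \cite{duclos_lev_ps_vittot02:_weakly_regular_floquet_hamil}, states the abstract Newton-step skeleton as Proposition~\ref{prop:iter_diag_proc} (with the recursions (\ref{eq:recruleGs})--(\ref{eq:recruleAs}) and the sequences $A_s,G_s,V_s$), and then merely describes, without proof, the Banach scales $\gX_s$ with norms (\ref{eq:norm_Xs}) in which the full argument lives. Your sketch follows the same KAM blueprint, so the broad strategy matches. The technical setup, however, is not the one the paper outlines: the paper works with exponential weights $e^{|k|/\EE_s}$ that relax along the iteration ($\EE_s\nearrow\infty$), together with a sequence of approximants $V_s\to V$ accounting for the fact that the data $V$ only has polynomial ($(1+|k|)^{r}$) regularity, whereas you keep a fixed polynomial weight $(1+|k|)^{r}$ and insert sharp Fourier cut-offs $N_j$. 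Both are legitimate Nash--Moser devices, but they are not the same construction.

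There is one genuine gap in your proposal that the paper flags explicitly. The norms (\ref{eq:norm_Xs}) contain the discrete derivative term $\varphi_s\,\Vert\widetilde{\partial}X_{knm}(\omega,\omega')\Vert$; this Whitney bookkeeping is not cosmetic. The denominators you must bound below at step $j$ are of the form $k\omega+h_n-h_m+(\text{diagonal corrections from }D_j)$, and the corrections depend on $\omega$ through all the earlier small-denominator divisions. To conclude that the removed resonance set at each step has measure $\lesssim\gamma_j\sum|k|^{\sigma-1-\tau}$ --- and hence that $|\Omega_0\setminus\Omega_\infty|\leq\delta_\star\epsilon_V$ --- you need a uniform Lipschitz bound in $\omega$ on these corrections, because otherwise the map $\omega\mapsto k\omega+(\text{corrections})$ may fail to have a derivative bounded below of order $|k|$. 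You note in passing that the corrections "can be absorbed into the denominators without destroying the Diophantine lower bounds," but you never say how you would control their $\omega$-regularity. Without that control the measure estimate, and hence the conclusion about $\Omega_\infty$, is left unjustified. This is precisely what the $\widetilde{\partial}$ terms in (\ref{eq:norm_Xs}) are there to provide, and it is the part of the argument one cannot simply delegate to submultiplicativity of $|\!|\!|\cdot|\!|\!|_r$.
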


The proof of Theorem~\ref{prop:RevMathPhys} is somewhat lengthy and
tedious because one has to eliminate the resonant frequencies.  The
basic idea is, however, rather simple and is based on an iterative
procedure as described in the following proposition. It is formulated
even on the level of Banach spaces but afterwards we shall again work
with Hilbert spaces. Let $\Phi(x)$ be the analytic function defined by
\[
\Phi(x) = \frac{1}{x}\left(e^{x}-\frac{e^{x}-1}{x}\right).
\]

\begin{prop}
  \label{prop:iter_diag_proc}
  Assume that $\sK$ is a Banach space, $K_{0}$ is a closed operator in
  $\sK$, $V\in\sB(\sK)$ and $D\in\sB(\sB(\sK))$. Assume further that
  $V=\lim V_{s}$ in $\sB(\sK)$ where $\{V_{s}\}_{s=0}^{\infty}$ is a
  sequence of bounded operators in $\sK$. If there exist sequences
  $\{A_{s}\}_{s=0}^{\infty}$ and $\{ G_{s}\}_{s=0}^{\infty}$,
  $A_{s},G_{s}\in\sB(\sK)$, fulfilling the following recurrence
  relations for all $s\in\Z_{+}$:
  \begin{eqnarray}
    G_{0} & = & V_{0},\nonumber \\
    G_{s+1} & = & G_{s}+\exp(\ad_{A_{s}})\ldots
    \exp(\ad_{A_{0}})(V_{s+1}-V_{s})\label{eq:recruleGs}\\
    &  & +\,\ad_{A_{s}}\Phi(\ad_{A_{s}})(1-D)(G_{s}-G_{s-1}),\nonumber
  \end{eqnarray}
  $A_{s}\Dom(K_{0})\subset\Dom(K_{0})$,
  \begin{eqnarray}
    [A_{0},K_{0}+D(G_{0})] & = & -(1-D)(G_{0}),\nonumber \\
    \mbox{}[A_{s+1},K_{0}+D(G_{s+1})] & = & -(1-D)(G_{s+1}-G_{s}),
    \label{eq:recruleAs}
  \end{eqnarray}
  and such that $\sum_{s=0}^{\infty}\Vert A_{s}\Vert<\infty$ and the
  limit $\lim G_{s}=G_{\infty}$ exists in $\sB(\sK)$ then there exists
  $W\in\sB(\sK)$ such that $W^{-1}\in\sB(\sK)$ and
  \begin{equation}
    \label{eq:W_diag_K0plusV}
    W(K_{0}+V)W^{-1}=K_{0}+D(G_{\infty}).
  \end{equation}
\end{prop}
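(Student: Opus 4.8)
The plan is to construct $W$ as the norm limit of the partial products $W_s:=e^{A_s}e^{A_{s-1}}\cdots e^{A_0}$ and to prove by induction that the first $s+1$ conjugations carry $K_0+V_{s+1}$ into $K_0+D(G_{s+1})$ plus exactly the off-diagonal remainder $(1-D)(G_{s+1}-G_s)$ that the next homological equation is built to annihilate; letting $s\to\infty$ then kills the remainder and yields (\ref{eq:W_diag_K0plusV}). First I would record some elementary facts. From the homological equations (\ref{eq:recruleAs}) and the boundedness of $D$, each $[A_s,K_0]=[A_s,K_0+D(G_s)]-[A_s,D(G_s)]$ extends to a bounded operator; together with $A_s\Dom(K_0)\subset\Dom(K_0)$ this gives that $e^{\pm A_s}$ preserves $\Dom(K_0)$, that $e^{A_s}K_0e^{-A_s}=K_0+\frac{e^{\ad_{A_s}}-1}{\ad_{A_s}}[A_s,K_0]$ differs from $K_0$ by a bounded operator, and that $e^{A_s}Me^{-A_s}=e^{\ad_{A_s}}M$ for every $M\in\sB(\sK)$. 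Since $\sum_s\|A_s\|<\infty$, the estimate $\|W_s-W_{s-1}\|=\|(e^{A_s}-\I)W_{s-1}\|\le\|A_s\|\,e^{\sum_j\|A_j\|}$ shows $W_s\to W$ in $\sB(\sK)$; the same estimate applied to $W_s^{-1}=e^{-A_0}\cdots e^{-A_s}$ shows $W_s^{-1}$ converges in $\sB(\sK)$, necessarily to $W^{-1}$ because $W_sW_s^{-1}=W_s^{-1}W_s=\I$. Hence $W\in\sB(\sK)$ and $W^{-1}\in\sB(\sK)$.

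The core of the argument is the identity, valid for all $s\in\Z_+$ with the convention $G_{-1}:=0$,
\[
  W_s\,(K_0+V_{s+1})\,W_s^{-1}=K_0+D(G_{s+1})+(1-D)(G_{s+1}-G_s),
\]
which I would prove by induction on $s$. Writing $L_s:=K_0+D(G_s)$ and using $G_0=V_0$ (for $s=0$) or the inductive identity for $s-1$ (for $s\ge1$), one decomposes $K_0+V_{s+1}=L_s+(V_{s+1}-V_s)+(1-D)(G_s-G_{s-1})$ and conjugates by $e^{A_s}$. The term $V_{s+1}-V_s$ becomes $\exp(\ad_{A_s})\cdots\exp(\ad_{A_0})(V_{s+1}-V_s)$, while, invoking the homological equation $[A_s,L_s]=-(1-D)(G_s-G_{s-1})$,
\[
  e^{A_s}\bigl(L_s+(1-D)(G_s-G_{s-1})\bigr)e^{-A_s}
  =L_s+\Bigl(e^{\ad_{A_s}}-\tfrac{e^{\ad_{A_s}}-1}{\ad_{A_s}}\Bigr)(1-D)(G_s-G_{s-1})
  =L_s+\ad_{A_s}\,\Phi(\ad_{A_s})(1-D)(G_s-G_{s-1}),
\]
where the scalar identity $e^{x}-\tfrac{e^{x}-1}{x}=x\,\Phi(x)$ is precisely what singles out the function $\Phi$. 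By the recurrence (\ref{eq:recruleGs}) the two surviving corrections add up to $G_{s+1}-G_s$, so the left side equals $L_s+(G_{s+1}-G_s)=K_0+D(G_{s+1})+(1-D)(G_{s+1}-G_s)$, as claimed; the base case $s=0$ is this same computation.

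It then remains to pass to the limit. For a fixed $\psi\in\Dom(K_0)$ the identity reads $(K_0+V_{s+1})W_s^{-1}\psi=W_s^{-1}\bigl(K_0+D(G_{s+1})+(1-D)(G_{s+1}-G_s)\bigr)\psi$. On the right, $W_s^{-1}\to W^{-1}$ in norm and $\bigl(D(G_{s+1})+(1-D)(G_{s+1}-G_s)\bigr)\psi\to D(G_\infty)\psi$, so the right side tends to $W^{-1}(K_0+D(G_\infty))\psi$; since $\|(V_{s+1}-V)W_s^{-1}\psi\|\le\|V_{s+1}-V\|\,\|W_s^{-1}\psi\|\to0$, also $(K_0+V)W_s^{-1}\psi$ converges, and as $W_s^{-1}\psi\to W^{-1}\psi$ the closedness of $K_0+V$ gives $W^{-1}\psi\in\Dom(K_0)$ and $(K_0+V)W^{-1}\psi=W^{-1}(K_0+D(G_\infty))\psi$. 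The mirror manipulation, rewriting the identity as $W_s(K_0+V_{s+1})\phi=\bigl(K_0+D(G_{s+1})+(1-D)(G_{s+1}-G_s)\bigr)W_s\phi$ for $\phi\in\Dom(K_0)$ and using the closedness of $K_0$, gives $W\phi\in\Dom(K_0)$ and $W(K_0+V)\phi=(K_0+D(G_\infty))W\phi$; combining this with $W^{-1}\Dom(K_0)\subset\Dom(K_0)$ yields $W\Dom(K_0)=\Dom(K_0)=\Dom(K_0+D(G_\infty))$ and finally $W(K_0+V)W^{-1}=K_0+D(G_\infty)$.

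I expect the main difficulty to lie not in the algebra — the role of $\Phi$ is essentially forced — but in the bookkeeping around the unbounded operator $K_0$: one must justify that the $A_s$, and hence the infinite products $W^{\pm1}$, genuinely preserve $\Dom(K_0)$, that each intermediate operator $e^{A_s}\cdots(K_0+\cdots)\cdots e^{-A_s}$ is a bona fide closed operator differing from $K_0$ by a bounded term, and that the conjugation identity is stable under the limit $s\to\infty$; all of this rests on $K_0$ being closed and on the absolute summability $\sum_s\|A_s\|<\infty$.
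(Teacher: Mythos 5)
Your argument is correct and coincides with the paper's (which is only sketched there): you set up the same conjugation identity $W_s(K_0+V_{s+1})W_s^{-1}=K_0+D(G_{s+1})+(1-D)(G_{s+1}-G_s)$ — the paper writes $W_s=e^{A_{s-1}}\cdots e^{A_0}$ so its identity is yours with $s\mapsto s-1$ — and pass to the limit, and you supply the domain and closedness bookkeeping that the paper leaves implicit. One small phrasing slip: for $s\ge1$ the displayed decomposition $K_0+V_{s+1}=L_s+(V_{s+1}-V_s)+(1-D)(G_s-G_{s-1})$ is not a literal identity but is what survives after conjugating by $W_{s-1}$ via the inductive hypothesis; the subsequent computation (including the appearance of $\exp(\ad_{A_s})\cdots\exp(\ad_{A_0})$) makes clear this is what you mean.
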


Here, as usual, $\ad$ means the adjoint action, $\ad_{A}X=[A,X]$ and
$\exp(\ad_{A})X=e^{A}Xe^{-A}$. For $s=0$ in (\ref{eq:recruleGs}) we
set $G_{-1}=0$. The proof of Proposition~\ref{prop:iter_diag_proc} is
immediate. If we set
\[
W_{s}=e^{A_{s-1}}\ldots e^{A_{0}},\textrm{~}W_{s}^{\textrm{~}-1}
= e^{-A_{0}}\ldots e^{-A_{s-1}},
\]
then the recurrence relations (\ref{eq:recruleGs}),
(\ref{eq:recruleAs}) exactly mean that
\[
\forall s\in\Z_{+},\textrm{~}W_{s}(K_{0}+V_{s})W_{s}^{\textrm{~}-1}
= K_{0}+D(G_{s})+(1-D)(G_{s}-G_{s-1}).
\]
Now it suffices to send $s$ to infinity.

In the applications of Proposition~\ref{prop:iter_diag_proc}, and this
is also the case for Theorem~\ref{prop:RevMathPhys}, $\sK$ is a
separable Hilbert space, $K_{0}=K_{0}^{\textrm{~}\ast}$, $V=V^{\ast}$,
the spectrum of $K_{0}$ is pure point and $D(X)$ is the diagonal part
of a bounded operator $X$ with respect to the spectral decomposition
of $K_{0}$. Then $G_{\infty}^{\textrm{~}\ast}=G_{\infty}$,
$D(G_{\infty})^{\ast}=D(G_{\infty})$ and $W^{\ast}=W^{-1}$. The
operator $K_{0}+D(G_{\infty})$ has obviously a pure point spectrum and
relation (\ref{eq:W_diag_K0plusV}) implies that the same is true for
$K_{0}+V$.

Let us note that technically the basic problem of the entire method is
the commutator equation (\ref{eq:recruleAs}) whose solution is
complicated by the fact that, generically, the eigenvalues of $K_{0}$
are dense in $\R$. This leads to the famous problem of small
denominators in this context.

There is another feature concerning the application of the recursive
procedure (\ref{eq:recruleGs}) and (\ref{eq:recruleAs}) in the proof
of Theorem~\ref{prop:RevMathPhys}. Let $M\in\sB(\sK)$ be the
multiplication operator defined by the relation
\begin{equation}
  \label{eq:M_def}
  \forall f\in\sK,\textrm{~}(Mf)(t)=e^{\ii\omega t}f(t).
\end{equation}
Since $V\in\sB(\sK)$ is a multiplication operator it commutes with
$M$. Also the sequence $\{ V_{s}\}$ is chosen in such a way that $M$
commutes with all $V_{s}$. Furthermore, the eigenvalues of $K_{0}$ are
$k\omega+h_{m}$, $k\in\Z$ and $m\in\N$, and so they are linear in $k$.
Using these facts it is readily seen from the recursive relations that
$M$ commutes with both $A_{s}$ and $G_{s}$ for all $s$. Then
necessarily $M$ commutes with $G_{\infty}$ and $W$ as well. This
implies that there exists a bounded Hermitian operator $G$ on $\sH$
such that $G\Dom(H_{0})\subset\Dom(H_{0})$,
\begin{displaymath}
  [H_{0},G] = 0\textrm{~and~}\big(D(G_{\infty})f\big)(t) = Gf(t),\quad
  \forall f\in\sK,\textrm{~}a.a.\textrm{~}t\in\R,
\end{displaymath}
and there exists a $T$-periodic operator-valued function $t\mapsto
W(t)$ with values in unitary operators on $\sH$ such that equality
(\ref{eq:W_diag_K0plusV}) is satisfied with
\[
(Wf)(t)=W(t)f(t),\quad\forall f\in\sK,\textrm{~}a.a.\textrm{~}t\in\R.
\]

Moreover, an information about the regularity of $W$ is also available.
More precisely, one knows that
\begin{equation}
  \label{eq:regW}
  \sup_{n\in\N}\sum_{m\in\N}\sum_{k\in\Z}\Vert W_{knm}\Vert<\infty
\end{equation}
where again
\begin{displaymath}
  W_{knm}=\frac{1}{T}\,\int_{0}^{T}
  e^{-\ii k\omega t}Q_{n}W(t)Q_{m}\dd{}t\,.
\end{displaymath}
Particularly, the operator-valued function $W(t)$ is continuous even
in the operator norm. Equality (\ref{eq:W_diag_K0plusV}) can be
rewritten in terms of propagators. It exactly means that
\begin{equation}
  \label{eq:Uts_from_KAM}
  \forall t,s\in\R,\textrm{~}U(t,s)
  =W(t)^{\ast}e^{-\ii(t-s)(H_{0}+G)}W(s).
\end{equation}

By a closer look at the proof of Theorem~\ref{prop:RevMathPhys} one
finds that the result can be partially improved. In the course of the
proof one constructs a directed sequence of Banach spaces
$\{\gX_{s}\}$,
\[
\gX_{s} \subset L^{\infty}\!\left(\Omega_{s}\times\Z\times\N\times\N,\,
  \sum_{n\in\N}\sideset{}{^{\oplus}}
  \sum_{m\in\N}\sB(\HH_{m},\HH_{n})\right),
\]
with the norms
\begin{equation}
  \label{eq:norm_Xs}
  \Vert X\Vert_{s}=\sup_{\substack{\omega,\omega'\in\Omega_{s}\cr
      \omega\neq\omega'}}\,
  \sup_{n\in\N}\,\sum_{m\in\N}\sum_{k\in\Z}
  \left(\Vert X_{knm}(\omega)\Vert+\varphi_{s}\,
    \Vert\widetilde{\partial}X_{knm}(\omega,\omega')
    \Vert\right)e^{|k|/\EE_{s}}
\end{equation}
where $X=\{ X_{knm}(\omega)\}\in\gX_{s}$, i.e.,
$X_{knm}(\omega)\in\sB(\HH_{m},\HH_{n})$ for all $\omega\in\Omega_{s}$
and $(k,n,m)\in\Z\times\N\times\N$. Here
$\HH_{m}:=\Ker(H_{0}-h_{m})=\Ran Q_{m}$, $\{\Omega_{s}\}$ is a
decreasing sequence of subsets of the interval $\Omega_{0}$,
$\{\varphi_{s}\}$ and $\{\EE_{s}\}$ are respectively decreasing and
strictly increasing sequences of positive numbers such that
$\lim\varphi_{s}=0$, $1\leq\EE_{s}$ and $\lim\EE_{s}=+\infty$. The
symbol $\widetilde{\partial}$ designates the discrete derivative in
$\omega$,
\[
\widetilde{\partial}X(\omega,\omega')
=\frac{X(\omega)-X(\omega')}{\omega-\omega'}\,.
\]
For $\omega\in\Omega_{\infty}=\bigcap\Omega_{s}$ fixed one applies the
limit procedure $s\to\infty$ and arrives at equality
(\ref{eq:W_diag_K0plusV}) with the objects $G_{\infty}$ and $W$
belonging to the Banach space
\[
\gX_{\infty}\subset L^{\infty}\!\left(\Z\times\N\times\N,\,
  \sum_{n\in\N}\sideset{}{^{\oplus}}
  \sum_{m\in\N}\sB(\HH_{m},\HH_{n})\right)
\]
where the norm is defined by
\[
\Vert X\Vert_{\infty}=\sup_{n\in\N}\,
\sum_{m\in\N}\sum_{k\in\Z}\Vert X_{knm}(\omega)\Vert\,.
\]
This is also how one obtains the information about the regularity of
$W$ expressed in (\ref{eq:regW}).

The announced improvement consists in modifying the norms
(\ref{eq:norm_Xs}) by an additional weight $(1+|k|)^{\nu}$ where $\nu$
should be chosen in the range
\begin{equation}
  \label{eq:nu_leq_r-sgm-half}
  0\leq\nu<r-\sigma-\frac{1}{2}\,.
\end{equation}
Recall that $r$ determines the regularity of $V$ in (\ref{eq:eps_V}),
$\sigma$ comes from the {}``gap condition'' (\ref{eq:gap_cond_sigma})
and one requires that $r>\sigma+\frac{1}{2}$. The modified norm reads
\begin{equation}
  \Vert X\Vert_{s}=\sup_{\substack{\omega,\omega'\in\Omega_{s}\cr
      \omega\neq\omega'}}\,\sup_{n\in\N}\,\sum_{m\in\N}
  \sum_{k\in\Z}(1+|k|)^{\nu}
  \left(\Vert X_{knm}(\omega)\Vert+\varphi_{s}\,
    \Vert\widetilde{\partial}X_{knm}(\omega,\omega')
    \Vert\right)e^{|k|/\EE_{s}}\,,
  \label{eq:norm_modif}
\end{equation}
and the limit procedure results in a norm in $\gX_{\infty}$,
\begin{equation}
  \Vert X\Vert_{\infty}=\sup_{n\in\N}\,\sum_{m\in\N}
  \sum_{k\in\Z}(1+|k|)^{\nu}\Vert X_{knm}(\omega)\Vert\,.
  \label{eq:norm_infty_modified}
\end{equation}
Let us note that restriction (\ref{eq:nu_leq_r-sgm-half}) comes from
the lower estimate of Lebesgue measure of the set $\Omega_{\infty}$
(see relation (77) in
\cite{duclos_lev_ps_vittot02:_weakly_regular_floquet_hamil} and the
derivation preceding it where one has to replace $r$ by $r-\nu$ if
using the modified norm (\ref{eq:norm_modif})). After this
modification, Theorem~\ref{prop:RevMathPhys} is valid exactly in the
same formulation as before, its proof requires no additional changes,
only the constants $\epsilon_{\star}$ and $\delta_{\star}$ should be
modified correspondingly.

The interest of the modification is that we get a better information
about the regularity of $W$. Namely, for $\omega\in\Omega_{\infty}$
(the set of non-resonant frequencies) $W$ is regular in the sense that
$\Vert W\Vert_{\infty}<\infty$ with the norm given by
(\ref{eq:norm_infty_modified}). In particular, if
$r>\sigma+\frac{3}{2}$ then one can choose $\nu\geq1$.  In that case
the property $\Vert W\Vert_{\infty}<\infty$ implies that $W(t)$
belongs to the class $C^{1}$ in the operator norm and
$\sup_{t}\Vert\partial_{t}W(t)\Vert<\infty$.

This discussion shows that Theorem~\ref{prop:RevMathPhys} can be
reformulated in the following way.

\begin{thm}\label{prop:RMP_reform}
  Under the same assumptions as in Theorem~\ref{prop:RevMathPhys}
  suppose that $r>\sigma+\frac{3}{2}$. Then there exist positive
  constants (independent of $V$), $\epsilon_{\star}$ and
  $\delta_{\star}$, with the property:

  if $\epsilon_{V}<\epsilon_{\star}$ then there exists a measurable
  subset $\Omega_{\infty}\subset\Omega_{0}$ such that
%%LINEBREAK
  $|\Omega_{\infty}|\geq\linebreak
  |\Omega_{0}|-\delta_{\star}\,\epsilon_{V}$,
  and for every $\omega\in\Omega_{\infty}$ there exist a bounded
  Hermitian operator $G$ commuting with $H_{0}$ and a $T$-periodic
  function $W(t)$ with values in unitary operators and belonging to
  the class $C^{1}$ in the operator norm such that the propagator
  obeys equality (\ref{eq:Uts_from_KAM}).
\end{thm}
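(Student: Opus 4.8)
The plan is to view Theorem~\ref{prop:RMP_reform} as a repackaging of Theorem~\ref{prop:RevMathPhys} together with the regularity improvement discussed above, and to assemble the statement from three ingredients.

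\emph{Step 1: running the quantum KAM iteration with the weighted norm.} Since $r>\sigma+\tfrac{3}{2}$, I would fix an exponent $\nu$ with $1\le\nu<r-\sigma-\tfrac{1}{2}$, so that the admissibility condition (\ref{eq:nu_leq_r-sgm-half}) holds. I would then carry out the construction underlying Theorem~\ref{prop:RevMathPhys} --- the iterative diagonalization of Proposition~\ref{prop:iter_diag_proc} realized in the directed sequence of Banach spaces $\{\gX_s\}$ --- but with the norms (\ref{eq:norm_Xs}) replaced by the weighted norms (\ref{eq:norm_modif}) and with the limit norm (\ref{eq:norm_infty_modified}). As recalled in the discussion preceding the theorem, the whole machinery (solvability of the commutator equations (\ref{eq:recruleAs}) with control of the small denominators, summability $\sum_s\|A_s\|<\infty$, existence of $\lim G_s=G_\infty$) goes through with the same proofs; the only arithmetic change is the replacement of $r$ by $r-\nu$ in the lower estimate for the Lebesgue measure of $\Omega_\infty$, which still leaves $r-\nu>\sigma+\tfrac{1}{2}$, so that $|\Omega_\infty|\ge|\Omega_0|-\delta_\star\,\epsilon_V$ survives with $\epsilon_\star,\delta_\star$ adjusted. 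The outcome, for each $\omega\in\Omega_\infty$, is a bounded Hermitian $G_\infty$ and a unitary $W$ with $W^\ast=W^{-1}$ satisfying (\ref{eq:W_diag_K0plusV}), now with the stronger bound $\|W\|_\infty<\infty$ in the norm (\ref{eq:norm_infty_modified}) carrying the weight $(1+|k|)^\nu$.

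\emph{Step 2: passing to fibrewise operators.} Here I would reuse the argument around (\ref{eq:M_def})--(\ref{eq:Uts_from_KAM}) verbatim: the multiplication operator $M$ commutes with $K_0$, with $V$, and with every $A_s$ and $G_s$, hence with $G_\infty$ and with $W$; consequently $G_\infty$ is represented by a bounded Hermitian operator $G$ on $\sH$ with $G\Dom(H_0)\subset\Dom(H_0)$ and $[H_0,G]=0$, and $W$ by a $T$-periodic family $t\mapsto W(t)$ of unitaries on $\sH$. Rewriting (\ref{eq:W_diag_K0plusV}) fibrewise and in terms of propagators then yields exactly the identity (\ref{eq:Uts_from_KAM}).

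\emph{Step 3: $C^1$-regularity of $W(t)$.} Fixing $\omega\in\Omega_\infty$ and writing the Fourier expansion $W(t)=\sum_{k\in\Z}e^{\ii k\omega t}\widehat W_k$, with $\widehat W_k=\tfrac{1}{T}\int_0^Te^{-\ii k\omega t}W(t)\,\dd t$ and $W_{knm}=Q_n\widehat W_kQ_m$, the finiteness of $\|W\|_\infty$ in the weighted norm (together with the fact that $W^\ast=W^{-1}$ lies in the same class, which supplies the complementary estimate needed for a Schur-type bound) gives a summability estimate of the form $\sum_{k\in\Z}(1+|k|)^\nu\|\widehat W_k\|_{\sB(\sH)}<\infty$. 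Since $\nu\ge1$, both the series for $W(t)$ and its formal term-by-term derivative $\sum_k\ii k\omega\,e^{\ii k\omega t}\widehat W_k$ converge in $\sB(\sH)$ uniformly in $t$; hence $t\mapsto W(t)$ is of class $C^1$ in the operator norm and $\sup_t\|\partial_tW(t)\|\le\omega\sum_k|k|\,\|\widehat W_k\|<\infty$. Combining the three steps proves the theorem.

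\emph{Main obstacle.} The only genuinely non-routine point is Step~1: one has to re-examine every estimate in the iterative scheme of \cite{duclos_lev_ps_vittot02:_weakly_regular_floquet_hamil} after inserting the extra polynomial weight $(1+|k|)^\nu$ into the norms, and in particular redo the Lebesgue-measure bound for $\Omega_\infty$ with $r$ replaced by $r-\nu$. By contrast, Step~2 has already been carried out in the discussion above, and Step~3 is an elementary Fourier-series argument.
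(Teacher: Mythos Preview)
Your three-step outline is exactly the paper's argument: Theorem~\ref{prop:RMP_reform} is stated there as a summary of the discussion immediately preceding it (the weighted-norm KAM iteration, the fibrewise reduction via commutation with $M$, and the passage from the weighted norm to $C^{1}$ regularity of $W(t)$), so in approach there is nothing to add.

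There is, however, a technical slip in Step~3. The intermediate estimate $\sum_{k}(1+|k|)^{\nu}\|\widehat W_{k}\|_{\sB(\sH)}<\infty$ does \emph{not} follow from $\|W\|_{\infty}<\infty$ and $\|W^{-1}\|_{\infty}<\infty$ via a Schur bound: Schur gives $\|\widehat W_{k}\|\le\big(\sup_{n}\sum_{m}\|W_{knm}\|\big)^{1/2}\big(\sup_{m}\sum_{n}\|W_{knm}\|\big)^{1/2}$, but since the suprema may be realized at indices depending on $k$, summing these bounds over $k$ can diverge even when the weighted norm (\ref{eq:norm_infty_modified}) is finite. The correct route is to apply Schur directly to the block matrix of $\partial_{t}W(t)$: its $(n,m)$-block is dominated in norm by $d_{nm}:=\omega\sum_{k}|k|\,\|W_{knm}\|$, and one has $\sup_{n}\sum_{m}d_{nm}\le\omega\|W\|_{\infty}$ and $\sup_{m}\sum_{n}d_{nm}\le\omega\|W^{-1}\|_{\infty}$ (the latter from $(W^{\ast})_{knm}=(W_{-k,m,n})^{\ast}$), whence $\sup_{t}\|\partial_{t}W(t)\|<\infty$. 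For \emph{continuity} of $\partial_{t}W(t)$ in the operator norm you should in fact choose $\nu>1$ strictly (which $r>\sigma+\tfrac{3}{2}$ permits), so that the tail $\sum_{|k|>K}|k|\,\|W_{knm}\|\le K^{-(\nu-1)}\sum_{k}(1+|k|)^{\nu}\|W_{knm}\|$ can be made small uniformly in $n$; with $\nu=1$ this uniformity is not guaranteed. The paper itself asserts the $C^{1}$ conclusion without spelling out this step.
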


From relation (\ref{eq:Uts_from_KAM}) it follows that the propagator
admits a Floquet decomposition (\ref{eq:Floquet-decomp}) with
\[
H_{F} = W(0)^{\ast}(H_{0}+G)W(0),\textrm{~}U_{F}(t)=W(t)^{\ast}W(0).
\]
Moreover, formula (\ref{eq:S_eq_diff_UF}) implies that
\[
S_F(t) = -\ii\, W(0)^{\ast}
\left(\partial_{t}W(t)\right)W(t)^{\ast}W(0).
\]
In particular, if $W(t)$ is known to be $C^{1}$ in the operator norm
then the Floquet decomposition is continuously differentiable in the
strong sense and, consequently, the assumptions both of
Proposition~\ref{prop:energy_bounded} and
Proposition~\ref{prop:trans_probability} are satisfied. These
arguments prove the following theorem.

\begin{thm}
  Under the same assumptions as in Theorem~\ref{prop:RevMathPhys}
  suppose that $r>\sigma+\frac{3}{2}$. Then there exist positive
  constants (independent of $V$), $\epsilon_{\star}$ and
  $\delta_{\star}$, with the property:

  if $\epsilon_{V}<\epsilon_{\star}$ (with $\epsilon_{V}$ defined in
  (\ref{eq:eps_V})) then there exists a measurable subset
  $\Omega_{\infty}\subset\Omega_{0}$ such that
  $|\Omega_{\infty}|\geq|\Omega_{0}|-\delta_{\star}\,\epsilon_{V}$ and
  for every $\omega\in\Omega_{\infty}$ the energy of the quantum
  system described by the time-dependent Hamiltonian
  $H_{0}+V(\omega{}t)$ is bounded uniformly in time. More precisely,
  \[
  \forall\psi\in\Dom(H(0)),\textrm{~}
  \sup_{t\in\R}\Vert(H_{0}+V(\omega t))U(t,0)\psi\Vert < \infty.
  \]
  Moreover, there exists a constant $c\geq0$ such that for any couple
  of intervals $\Delta_{1},\Delta_{2}\subset\R$ fulfilling
  $\dist(\Delta_{1},\Delta_{2})>0$ it holds true that
  \[
  \sup_{s,t\in\R}\Vert P(t,\Delta_{1})U(t,s)P(s,\Delta_{2})\Vert
  \leq\frac{c}{\dist(\Delta_{1},\Delta_{2})}
  \]
  where $P(t,\cdot)$ is the spectral measure of $H(t)$. In particular,
  if $E_{n}(t)$ and $E_{m}(s)$ are two distinct eigenvalues of $H(t)$
  and $H(s)$, respectively, and $P_{n}(t)$ and $P_{m}(s)$ are the
  corresponding orthogonal projectors then
  \[
  \Vert
  P_{n}(t)U(t,s)P_{m}(s)\Vert\leq\frac{c}{|E_{n}(t)-E_{m}(s)|}\,.
  \]
\end{thm}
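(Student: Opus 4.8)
The plan is to derive the theorem by chaining together three results already at our disposal. The reformulated quantum KAM theorem (Theorem~\ref{prop:RMP_reform}) provides, for a large-measure set of frequencies, a Floquet decomposition of the propagator that is continuously differentiable in the strong sense; Proposition~\ref{prop:energy_bounded} then converts such a decomposition into the uniform energy bound, and Proposition~\ref{prop:trans_probability} converts it into the transition-probability estimate.

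First I would fix the setting. Since $V(t)$ is uniformly bounded, the propagator $U(t,s)$ associated to $H(t)=H_0+V(\omega t)$ exists in the usual sense and $\Dom(H(t))=\Dom(H_0)$ for all $t$, so the standing assumptions of Section~\ref{sec:higher_order} (and of Lemma~\ref{lem:H_eq_S_plus_HAsendw}) are available. Assuming $r>\sigma+\tfrac32$ and $\epsilon_V<\epsilon_\star$, I would apply Theorem~\ref{prop:RMP_reform} to obtain the measurable set $\Omega_\infty\subset\Omega_0$ with $|\Omega_\infty|\ge|\Omega_0|-\delta_\star\,\epsilon_V$ together with, for each $\omega\in\Omega_\infty$, a bounded Hermitian operator $G$ commuting with $H_0$ and a $T$-periodic, unitary-valued function $W(t)$ of class $C^1$ in the operator norm such that equality~(\ref{eq:Uts_from_KAM}) holds.

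Next I would read off the Floquet decomposition from (\ref{eq:Uts_from_KAM}): setting $s=0$ gives (\ref{eq:Floquet-decomp}) with $H_F=W(0)^\ast(H_0+G)W(0)$ and $U_F(t)=W(t)^\ast W(0)$, and since $G\in\sB(\sH)$ this yields $\Dom(H_F)=\Dom(H_0)=\Dom(H(0))$. Because $W(t)$ is $C^1$ in the operator norm, $U_F(t)$ is continuously differentiable in the strong sense, so the decomposition is continuously differentiable in the strong sense, and by (\ref{eq:S_eq_diff_UF}) we have $S_F(t)=-\ii\,W(0)^\ast\bigl(\partial_tW(t)\bigr)W(t)^\ast W(0)\in\sB(\sH)$, uniformly bounded in $t$ by periodicity. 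In particular the relative bound of $S_F(t)$ with respect to $H_F$ is zero, so the hypotheses of Lemma~\ref{lem:H_eq_S_plus_HAsendw}, Proposition~\ref{prop:energy_bounded} and Proposition~\ref{prop:trans_probability} are all met. Proposition~\ref{prop:energy_bounded} then delivers $\sup_{t\in\R}\|(H_0+V(\omega t))U(t,0)\psi\|<\infty$ for every $\psi\in\Dom(H(0))$, and Proposition~\ref{prop:trans_probability} delivers the transition-probability bound with $c=2\|S_F\|$ and, as its special case, the estimate on $\|P_n(t)U(t,s)P_m(s)\|$ for distinct eigenvalues $E_n(t)\neq E_m(s)$.

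I do not expect a real obstacle at this final assembly stage: the genuinely difficult work --- the iterative diagonalization, the elimination of resonant frequencies, the small-denominator estimates and the lower bound on $|\Omega_\infty|$ --- is entirely encapsulated in Theorem~\ref{prop:RevMathPhys}/\ref{prop:RMP_reform}. The one point deserving a moment's care is to confirm that the $C^1$-in-operator-norm regularity of $W(t)$ furnished by the modified KAM norms (\ref{eq:norm_modif})--(\ref{eq:norm_infty_modified}) with $\nu\ge1$ is precisely what makes $S_F(t)$ a bounded, hence $H_F$-infinitesimally-small, operator, so that the stronger ``continuously differentiable in the strong sense'' hypotheses of both propositions are satisfied rather than merely the ``relatively continuously differentiable'' one.
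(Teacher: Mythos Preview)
Your proposal is correct and follows essentially the same route as the paper: invoke Theorem~\ref{prop:RMP_reform} to obtain the $C^1$ family $W(t)$ and the set $\Omega_\infty$, read off the Floquet decomposition with $H_F=W(0)^\ast(H_0+G)W(0)$, $U_F(t)=W(t)^\ast W(0)$ and $S_F(t)=-\ii\,W(0)^\ast(\partial_tW(t))W(t)^\ast W(0)$, and then apply Propositions~\ref{prop:energy_bounded} and~\ref{prop:trans_probability}. Your additional remarks (that $S_F(t)$ is bounded hence has $H_F$-relative bound zero, and that one may take $c=2\|S_F\|$) are correct refinements of points the paper leaves implicit.
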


\section*{Acknowledgments}
P.~\v{S}. wishes to acknowledge gratefully the support from the grant
No. 201/05/0857 of Grant Agency of the Czech Republic.

\newpage
\setcounter{section}{1}
\renewcommand{\thesection}{\Alph{section}}
\setcounter{equation}{0}
\renewcommand{\theequation}{\Alph{section}.\arabic{equation}}

\section*{Appendix. The propagator for the time-dependent harmonic
  oscillator}

Let $H_\omega=(1/2)(p^{\,2}+\omega^2x^2)$, with $p=-\ii\partial_x$, be
the Hamiltonian of the harmonic oscillator in $\sH=L^2(\R,\dd{}x)$.
This is a known fact that the operators $x$ and $p$ are relatively
bounded with respect to $H_\omega$ with the relative bound zero. One
can see it also directly from the following inequality which is easy
to verify on the Schwarz space $\sS$:
\begin{displaymath}
  2\omega^2 x^2 \leq \varepsilon^{-2}+\varepsilon^2\omega^4x^4
  +\varepsilon^2\left(p^4+2\omega^2p\,x^2p\right)
  = \varepsilon^{-2}+2\varepsilon^2\omega^2
  +4\varepsilon^2H_{\omega}^{\,2}.
\end{displaymath}
Hence it is true that
\begin{displaymath}
  \|x\psi\|^2
  \leq \left(\varepsilon^2+\frac{1}{2\omega^2\varepsilon^2}\right)
  \|\psi\|^2+\frac{2\varepsilon^2}{\omega^2}
  \left\|H_{\omega}\psi\right\|^2
\end{displaymath}
for all $\psi\in\sS$ and consequently for all $\psi\in\Dom{}H_\omega$.
A bound for the operator $p$ can be derived analogously. It follows
that for any $\alpha,\beta\in\R$, the domain
$\Dom(H_\omega+\alpha{}x+\beta{}p)$ coincides with $\Dom{}H_\omega$.

For $\varphi,\psi\in\Dom(H_\omega)$ set
\begin{displaymath}
  \bx(t)
  = \langle e^{-\ii tH_\omega}\varphi,xe^{-\ii tH_\omega}\psi\rangle,
  \textrm{~}\bp(t)
  = \langle e^{-\ii tH_\omega}\varphi,pe^{-\ii tH_\omega}\psi\rangle.
\end{displaymath}
As a standard exercise one derives, by differentiating and using the
canonical commutation relation, that the quantities $\bx(t)$ and
$\bp(t)$ obey the classical evolution equations, i.e.,
$\dot{\bx}(t)=\bp(t)$, $\dot{\bp}(t)=-\omega^2\,\bx(t)$. It follows
that for all $\psi\in\Dom(H_\omega)$,
\begin{eqnarray}
  \label{eq:xt_pt_classic}
  e^{\ii tH_\omega}xe^{-\ii tH_\omega}\psi
  &=& \cos(\omega t)x\psi+\frac{1}{\omega}\,\sin(\omega t)\,p\psi,
  \nonumber\\
  e^{\ii tH_\omega}pe^{-\ii tH_\omega}\psi
  &=& -\omega\sin(\omega t)x\psi+\cos(\omega t)\,p\psi.
\end{eqnarray}

\begin{alem}
  \label{thm:expH_eq_expp_expx_expH}
  For $\mu,\nu,t\in\R$ it holds
  \begin{equation}
    \label{eq:expH_eq_expp_expx_expH}
    \exp\!
    \left(-\ii tH_\omega+\ii\,\frac{\mu}{\omega}\,p+\ii\nu x\right)
    = e^{-\ii\phi}\exp\!\left(\ii\,\frac{\xi}{\omega}\,p\right)
    \exp(\ii\eta x)\exp\!\left(-\ii tH_\omega\right)
  \end{equation}
  where
  \begin{eqnarray}
    \label{eq:xi_eta}
    \xi &=& \frac{2\sin\!\left(\frac{\omega t}{2}\right)}{\omega t}
    \left(\cos\!\left(\frac{\omega t}{2}\right)\mu
      -\sin\!\left(\frac{\omega t}{2}\right)\nu\right),
    \nonumber\\
    \eta &=& \frac{2\sin\!\left(\frac{\omega t}{2}\right)}{\omega t}
    \left(\sin\!\left(\frac{\omega t}{2}\right)\mu
      +\cos\!\left(\frac{\omega t}{2}\right)\nu\right),
  \end{eqnarray}
  and
  \begin{eqnarray}
    \label{eq:phi}
    \phi &=& -\frac{1}{4\omega^3t^2}
    \left((2\omega t-4 \sin(\omega t)+\sin(2\omega t))\mu^2
    +(2-4\cos(\omega t)+2\cos(2\omega t))\mu\nu\right. \nonumber\\
    && \left.+\,(2\omega t-\sin(2\omega t))\nu^2\right).
  \end{eqnarray}
\end{alem}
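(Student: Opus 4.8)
The plan is to disentangle the exponential on the left-hand side of (\ref{eq:expH_eq_expp_expx_expH}) by introducing an auxiliary parameter $s$ and reducing the operator identity to a system of scalar differential equations. Fix $t,\mu,\nu\in\R$ and set $\Xi:=tH_\omega-\frac{\mu}{\omega}\,p-\nu x$. Since $x$ and $p$ are $H_\omega$-bounded with relative bound zero (recalled at the beginning of the Appendix), $\Xi$ is essentially self-adjoint on the Schwartz space $\sS$ by the Kato--Rellich theorem; hence $\{\exp(-\ii s\Xi)\}_{s\in\R}$ is a strongly continuous unitary group and the left-hand side of (\ref{eq:expH_eq_expp_expx_expH}) is $\exp(-\ii\Xi)$. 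One checks that $\exp(-\ii s\Xi)$, $e^{\ii\alpha x}$, $e^{\ii\beta p}$ and $e^{-\ii sH_\omega}$ all leave $\sS$ invariant and are strongly continuously differentiable on $\sS$.

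First I would introduce the ansatz
\begin{displaymath}
  R(s) = e^{-\ii c(s)}\,e^{\ii a(s)p/\omega}\,e^{\ii b(s)x}\,e^{-\ii stH_\omega},
\end{displaymath}
with $a,b,c$ smooth real functions satisfying $a(0)=b(0)=c(0)=0$, so that $R(0)=\I$ and $R(s)\sS\subset\sS$. Differentiating on $\sS$ and computing $\ii\big(\partial_sR(s)\big)R(s)^{-1}$ by repeated use of the canonical commutation relation $[x,p]=\ii$ and of the conjugation formulas (\ref{eq:expsHexps}), one finds that it equals an operator of the form $tH_\omega+(ta\omega-b')\,x-\big(\frac{a'}{\omega}+tb\big)p+\big(c'-\frac{ab'}{\omega}+\frac{t}{2}(a^2+b^2)\big)$. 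Therefore $R(s)\psi$ solves the Schr\"odinger equation $\ii\,\partial_sR(s)\psi=\Xi R(s)\psi$, $\psi\in\sS$, precisely when
\begin{displaymath}
  a' = \mu - t\omega b,\qquad b' = \nu + t\omega a,\qquad
  c' = \tfrac{t}{2}\,(a^2-b^2) + \tfrac{\nu}{\omega}\,a .
\end{displaymath}
This linear system (with the above initial conditions) has a unique global solution; fixing $a,b,c$ to be that solution, $R(s)\psi$ and $e^{-\ii s\Xi}\psi$ solve the same initial value problem, and by uniqueness for the self-adjoint operator $\Xi$ (seen by differentiating $\|R(s)\psi-e^{-\ii s\Xi}\psi\|^2$) they coincide. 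Evaluating at $s=1$ gives (\ref{eq:expH_eq_expp_expx_expH}) with $\xi=a(1)$, $\eta=b(1)$, $\phi=c(1)$.

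It then remains to integrate the system. Setting $w=a+\ii b$ collapses the first two equations to $w'=(\mu+\ii\nu)+\ii t\omega\,w$, $w(0)=0$, so $w(s)=(\mu+\ii\nu)(e^{\ii t\omega s}-1)/(\ii t\omega)$; using $(e^{\ii\omega t}-1)/(\ii\omega t)=\tfrac{2}{\omega t}\sin(\tfrac{\omega t}{2})e^{\ii\omega t/2}$ and expanding $(\mu+\ii\nu)e^{\ii\omega t/2}$ into real and imaginary parts, the values $\xi=\operatorname{Re}w(1)$ and $\eta=\operatorname{Im}w(1)$ reproduce exactly (\ref{eq:xi_eta}). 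Finally $a(s),b(s)$ are real trigonometric combinations of $1,\cos(t\omega s),\sin(t\omega s)$, hence $a^2-b^2$ and $a$ involve only $1,\cos(t\omega s),\sin(t\omega s),\cos(2t\omega s),\sin(2t\omega s)$, and $\phi=\int_0^1c'(s)\,ds$ is an elementary integral which, collected over the common factor $1/(4\omega^3t^2)$, yields (\ref{eq:phi}). I expect the one genuinely delicate point to be the operator computation of $\ii(\partial_sR)R^{-1}$, where the signs and the quadratic correction terms produced by (\ref{eq:expsHexps}) must all be tracked correctly; the rest is routine bookkeeping. (The formulas (\ref{eq:xi_eta})--(\ref{eq:phi}) have only removable singularities at $t=0$ and at $\omega t\in2\pi\Z$, and the identity for such $t$ also follows from the generic case by strong continuity in $t$ of both sides of (\ref{eq:expH_eq_expp_expx_expH}).)
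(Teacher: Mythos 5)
Your proposal is correct, and I verified the key computations: the generator $\ii(\partial_sR)R^{-1}$ does collapse to $tH_\omega+(ta\omega-b')x-(\tfrac{a'}{\omega}+tb)p+(c'-\tfrac{ab'}{\omega}+\tfrac t2(a^2+b^2))$, which yields exactly the ODE system you state; the complexification $w=a+\ii b$ integrates to the claimed $w(1)$, matching~(\ref{eq:xi_eta}); and substituting the explicit $a(s),b(s)$ into $c'=\tfrac{t}{2}(a^2-b^2)+\tfrac{\nu}{\omega}a$ and integrating over $[0,1]$ indeed reproduces~(\ref{eq:phi}) term by term.

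Your route is genuinely different from the one in the paper. The paper avoids auxiliary parameters and ODEs entirely: it sets $W_1=e^{\ii\xi p/\omega}e^{\ii\eta x}e^{-\ii tH_\omega}$ and $W_2=\exp(-\ii tH_\omega+\ii\mu p/\omega+\ii\nu x)$, uses the conjugation formulas~(\ref{eq:expsHexps}) together with a shift to tilded variables $\widetilde{x},\widetilde{p}$ to compute $W_j^{-1}xW_j$ and $W_j^{-1}pW_j$ for $j=1,2$, and fixes $\xi,\eta$ by requiring these conjugates to agree. Since $W_2W_1^{-1}$ then commutes with $x$ and $p$ it must be a scalar $e^{-\ii\phi}$, and $\phi$ is extracted by taking the inner product of both sides against the unnormalized ground state $\psi_0(x)=e^{-\omega x^2/2}$ and evaluating a Gaussian integral. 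The two proofs trade one kind of bookkeeping for another: your parameter-differentiation argument produces $\xi,\eta,\phi$ simultaneously from a clean linear ODE system but requires some care about domain invariance and strong differentiability on $\mathscr{S}$ (which you address); the paper's argument isolates the determination of the phase $\phi$ into a single explicit Gaussian computation and sidesteps the ODE analysis, at the cost of carrying through the slightly fiddly conjugation identities in the tilded variables. Both are complete and correct.
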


\begin{proof}
  Set
  \begin{displaymath}
    W_1 = \exp\!\left(\ii\,\frac{\xi}{\omega}\,p\right)\exp(\ii\eta x)
    \exp\!\left(-\ii tH_\omega\right),\textrm{~}
    W_2 = \exp\!
    \left(-\ii tH_\omega+\ii\frac{\mu}{\omega}\,p+\ii\nu x\right).
  \end{displaymath}
  Clearly, both $W_1$ and $W_2$ leave the domain of $H_\omega$
  invariant. Using (\ref{eq:xt_pt_classic}) one finds that for all
  $\psi\in\Dom(H_\omega)$ it holds
  \begin{eqnarray}
    \label{eq:W1xpW1}
    W_1^{\,-1}xW_1\psi &=& \cos(\omega t)x\psi
    +\frac{1}{\omega}\,\sin(\omega t)\,p\psi-\frac{\xi}{\omega}\,\psi,
    \nonumber\\
    W_1^{\,-1}pW_1\psi &=& -\omega\sin(\omega t)x\psi
    +\cos(\omega t)\,p\psi+\eta\psi.
  \end{eqnarray}
  On the other hand,
  \begin{displaymath}
    -\ii tH_\omega+\ii\frac{\mu}{\omega}\,p+\ii\nu x
    = -\ii\,\frac{t}{2}
    \left(\widetilde{p}^{\,2}+\omega^2\widetilde{x}^2\right)
    +\ii\,\frac{\mu^2+\nu^2}{2\,\omega^2t}
  \end{displaymath}
  where
  \begin{displaymath}
    \widetilde{x} = x-\frac{\nu}{\omega^2t}\,,\textrm{~}
    \widetilde{p} = p-\frac{\mu}{\omega t}\,.
  \end{displaymath}
  Since $\widetilde{p}$ and $\widetilde{x}$ also obey the canonical
  commutation relation we have, analogously to
  (\ref{eq:xt_pt_classic}),
  \begin{eqnarray*}
    W_2^{\,-1}\widetilde{x}\,W_2\psi
    &=& \cos(\omega t)\widetilde{x}\,\psi
    +\frac{1}{\omega}\,\sin(\omega t)\,\widetilde{p}\,\psi, \\
    W_2^{\,-1}\widetilde{p}\,W_2\psi
    &=& -\omega\sin(\omega t)\widetilde{x}\,\psi
    +\cos(\omega t)\,\widetilde{p}\,\psi.
  \end{eqnarray*}
  for all $\psi\in\Dom(H_\omega)$. This can be rewritten as
  \begin{eqnarray}
    \label{eq:W2xpW2}
    W_2^{\,-1}xW_2\psi &=& \cos(\omega t)x\psi
    +\frac{1}{\omega}\,\sin(\omega t)p\psi
    +\frac{1}{\omega^2t}
    \left(\nu-\nu\cos(\omega t)-\mu\sin(\omega t)\right)\psi,
    \nonumber\\
    W_2^{\,-1}pW_2\psi
    &=& -\omega\sin(\omega t)x\psi
    +\cos(\omega t)p\psi+\frac{1}{\omega t}
    \left(\mu+\nu\sin(\omega t)-\mu\cos(\omega t)\right)\psi.
    \nonumber\\
  \end{eqnarray}
  Comparing (\ref{eq:W1xpW1}) to (\ref{eq:W2xpW2}) one finds that for
  all $\psi\in\Dom(H_\omega)$ it holds
  $W_1^{\,-1}xW_1\psi=W_2^{\,-1}xW_2\psi$ and
  $W_1^{\,-1}pW_1\psi=W_2^{\,-1}pW_2\psi$ provided
  \begin{displaymath}
    \xi = -\frac{\nu}{\omega t}\big(1-\cos(\omega t)\big)
    +\frac{\mu}{\omega t}\,\sin(\omega t),\textrm{~}
    \eta = \frac{\nu}{\omega t}\,\sin(\omega t)
    +\frac{\mu}{\omega t}\big(1-\cos(\omega t)\big)
  \end{displaymath}
  (which is nothing but (\ref{eq:xi_eta})). Hence $W=W_2W_1^{\,-1}$
  fulfills $W^{-1}xW\psi=x\psi$ and $W^{-1}pW\psi=p\psi$. Since
  $\Dom(H_\omega)$ is a core both for $x$ and $p$ this implies that
  $W^{-1}xW=x$ and $W^{-1}pW=p$. If follows that $W$ is a multiple of
  the unity, i.e., $W_2=e^{-\ii\phi}W_1$ for some $\phi\in\R$.

  It remains to determine $\phi$. To this end it suffices to take the
  mean value of the corresponding operators at the ground state of
  $H_\omega$ which is $\psi_0(x)=\exp(-\omega{}x^2/2)$ (unnormalized).
  Writing
  \begin{displaymath}
    W_2 = \exp\!\left(\ii\,\frac{\mu^2+\nu^2}{2\,\omega^2t}\right)
    \exp\!\left(-\ii\,\frac{\nu}{\omega^2t}\,p\right)
    \exp\!\left(\ii\,\frac{\mu}{\omega t}\,x\right)
    e^{-\ii t H_\omega}
    \exp\!\left(-\ii\,\frac{\mu}{\omega t}\,x\right)
    \exp\!\left(\ii\,\frac{\nu}{\omega^2t}\,p\right)
  \end{displaymath}
  the equality $W_2=e^{-\ii\phi}W_1$ becomes
  \begin{eqnarray*}
    && \exp\!\left(\ii\,\frac{\mu^2+\nu^2}{2\,\omega^2t}\right)
    e^{-\ii t H_\omega}
    \exp\!\left(-\ii\,\frac{\mu}{\omega t}\,x\right)
    \exp\!\left(\ii\,\frac{\nu}{\omega^2t}\,p\right) \\
    && =\, \exp\!\left(-\ii\phi+\ii\left(\frac{\nu}{\omega^2t}
        +\frac{\xi}{\omega}\right)\eta\right)
    \exp\!\left(\ii\left(\eta-\frac{\mu}{\omega t}\right)x\right)
    \exp\!\left(\ii\left(\frac{\xi}{\omega}
        +\frac{\nu}{\omega^2t}\right)p\right)
    e^{-\ii t H_\omega}.
  \end{eqnarray*}
  The mean value at $\psi_0$ of the both sides of the last equality
  then yields
  \begin{eqnarray*}
    && \exp\!\left(\ii\left(\phi+\frac{\mu^2+\nu^2}{2\,\omega^2t}
        -\left(\frac{\nu}{\omega^2t}
          +\frac{\xi}{\omega}\right)\eta\right)\right)
    \left\langle\psi_0(x),
      \exp\!\left(-\ii\,\frac{\mu}{\omega t}\,x\right)
      \psi_0\!\left(x+\frac{\nu}{\omega^2t}\right)\right\rangle \\
    && =\, \left\langle\psi_0(x),
      \exp\!\left(\ii\left(\eta-\frac{\mu}{\omega t}\right)x\right)
      \psi_0\!\left(x+\frac{\xi}{\omega}
        +\frac{\nu}{\omega^2t}\right)\right\rangle.
  \end{eqnarray*}
  A straightforward computation then leads to the value
  (\ref{eq:phi}).
\end{proof}

Conversely, one can express $\mu$ and $\nu$ in terms of $\xi$, $\eta$
and $t$ provided $t$ is sufficiently small. In other words, one can
read equality (\ref{eq:expH_eq_expp_expx_expH}) from the right to the
left. The restriction on smallness of $t$ should not be considered as
surprising since the reversed equality is in fact an application of
the Baker-Campbell-Haussdorf formula which is known to be guaranteed
only locally.

\begin{alem}
  \label{thm:expp_expx_expH_eq_exp}
  For $\xi,\eta,t\in\R$, $|t|<2\pi/\omega$, it holds
  \begin{displaymath}
    \exp\!\left(\ii\,\frac{\xi}{\omega}\,p\right)
    \exp(\ii\eta x)\exp\!\left(-\ii tH_\omega\right)
    = e^{\ii\phi}\exp\!
    \left(-\ii tH_\omega+\ii\,\frac{\mu}{\omega}\,p+\ii\nu x\right)
  \end{displaymath}
  where
  \begin{displaymath}
    \mu = \frac{\omega t}{2}\left(\cot\!\left(
        \frac{\omega t}{2}\right)\xi+\eta\right),\textrm{~}
    \nu = \frac{\omega t}{2}\left(-\xi+\cot\!\left(
        \frac{\omega t}{2}\right)\eta\right),
  \end{displaymath}
  and
  \begin{equation}
    \label{eq:phi_eq_xieta}
    \phi = \frac{1}{2\omega}\,\xi\eta
    -\frac{\omega t-\sin(\omega t)}{8\omega
      \sin\!\left(\frac{\omega t}{2}\right)^2}
    \left(\xi^2+\eta^2\right).
  \end{equation}
\end{alem}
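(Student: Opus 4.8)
The plan is to deduce this lemma directly from Lemma~\ref{thm:expH_eq_expp_expx_expH} by inverting the relations occurring there. For fixed $t$, formula~(\ref{eq:xi_eta}) expresses $(\xi,\eta)$ linearly in $(\mu,\nu)$; in matrix form
\[
\begin{pmatrix}\xi \\ \eta\end{pmatrix}
= \frac{2\sin(\omega t/2)}{\omega t}
\begin{pmatrix}\cos(\omega t/2) & -\sin(\omega t/2) \\
  \sin(\omega t/2) & \cos(\omega t/2)\end{pmatrix}
\begin{pmatrix}\mu \\ \nu\end{pmatrix},
\]
that is, $(\xi,\eta)$ is obtained from $(\mu,\nu)$ by the rotation of angle $\omega t/2$ rescaled by the factor $2\sin(\omega t/2)/(\omega t)$. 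For $0<|t|<2\pi/\omega$ one has $|\omega t/2|<\pi$ and hence $\sin(\omega t/2)\neq0$, so this $2\times2$ matrix is invertible; its inverse is the rotation of angle $-\omega t/2$ rescaled by $\omega t/(2\sin(\omega t/2))$, and applying it to $(\xi,\eta)$ produces exactly the values $\mu=\tfrac{\omega t}{2}\big(\cot(\omega t/2)\xi+\eta\big)$ and $\nu=\tfrac{\omega t}{2}\big(-\xi+\cot(\omega t/2)\eta\big)$ of the statement. The degenerate case $t=0$ is handled either as the limit $t\to0$ of these (removably singular) formulas or directly from the elementary identity $\exp(\ii\xi p/\omega)\exp(\ii\eta x)=\exp\!\big(\ii\xi p/\omega+\ii\eta x+\ii\xi\eta/(2\omega)\big)$.

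Granting this, I would apply Lemma~\ref{thm:expH_eq_expp_expx_expH} with the values $\mu,\nu$ of the present lemma. By the previous paragraph, the quantities that~(\ref{eq:xi_eta}) associates to these $\mu,\nu$ are precisely our prescribed $\xi,\eta$, so that lemma yields
\[
\exp\!\left(-\ii tH_\omega+\ii\,\frac{\mu}{\omega}\,p+\ii\nu x\right)
= e^{-\ii\phi_{0}}\exp\!\left(\ii\,\frac{\xi}{\omega}\,p\right)
\exp(\ii\eta x)\exp\!\left(-\ii tH_\omega\right),
\]
where $\phi_{0}$ denotes the value of~(\ref{eq:phi}) at these $\mu,\nu$. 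Multiplying through by $e^{\ii\phi_{0}}$ and exchanging the two sides, this is exactly the asserted factorization with $\phi_{0}$ in place of $\phi$. Hence everything reduces to the identity $\phi_{0}=\phi$, i.e.\ to rewriting~(\ref{eq:phi}), after the substitution for $\mu,\nu$ in terms of $\xi,\eta$, in the form~(\ref{eq:phi_eq_xieta}).

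This last point is the one genuinely computational step, and the main obstacle is simply keeping the trigonometric bookkeeping under control. I would write the quadratic form in~(\ref{eq:phi}) as $-\frac{1}{4\omega^{3}t^{2}}\big(A\mu^{2}+B\mu\nu+C\nu^{2}\big)$ with $A=2\omega t-4\sin\omega t+\sin2\omega t$, $B=2-4\cos\omega t+2\cos2\omega t$, $C=2\omega t-\sin2\omega t$, and regroup it as $\tfrac{A+C}{2}(\mu^{2}+\nu^{2})+\tfrac{A-C}{2}(\mu^{2}-\nu^{2})+B\mu\nu$, using the reductions $A+C=4(\omega t-\sin\omega t)$, $A-C=-4\sin\omega t\,(1-\cos\omega t)$ and $B=-4\cos\omega t\,(1-\cos\omega t)$. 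Since $(\xi,\eta)\mapsto(\mu,\nu)$ is a scaled rotation one has $\mu^{2}+\nu^{2}=\frac{(\omega t)^{2}}{4\sin^{2}(\omega t/2)}(\xi^{2}+\eta^{2})$, while $\mu^{2}-\nu^{2}$ and $\mu\nu$ turn into combinations of $\xi^{2}-\eta^{2}$ and $\xi\eta$; substituting these, the $(\xi^{2}-\eta^{2})$ contributions cancel identically, and what survives collapses — using $1-\cos\omega t=2\sin^{2}(\omega t/2)$ and the cancellation of the $t^{2}$ in the denominator — to $\frac{1}{2\omega}\xi\eta-\frac{\omega t-\sin\omega t}{8\omega\sin^{2}(\omega t/2)}(\xi^{2}+\eta^{2})$, which is~(\ref{eq:phi_eq_xieta}). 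An alternative that avoids this substitution would recompute $\phi$ from scratch, mimicking the end of the proof of Lemma~\ref{thm:expH_eq_expp_expx_expH} by taking the mean value of both sides at the ground state $\psi_{0}(x)=\exp(-\omega x^{2}/2)$; but this involves essentially the same amount of computation.
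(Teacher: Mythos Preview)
Your proposal is correct and follows exactly the approach the paper indicates: the paper does not give a separate proof of this lemma but simply remarks, just before stating it, that ``one can express $\mu$ and $\nu$ in terms of $\xi$, $\eta$ and $t$'' and ``read equality~(\ref{eq:expH_eq_expp_expx_expH}) from the right to the left.'' You have carried out precisely this inversion, including the trigonometric verification that the phase~(\ref{eq:phi}) becomes~(\ref{eq:phi_eq_xieta}) under the substitution, which the paper leaves implicit.
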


Set
\begin{displaymath}
  H(t) = H_\omega+f(t)x
\end{displaymath}
where the function $f(t)$ is supposed to be continuous and $T$
periodic. In \cite{enss_veselic83:_bound_propag_states_time_dep_hamil}
a formula has been derived for the propagator corresponding to the
Hamiltonian $H(t)$:
\begin{equation}
  \label{eq:propag_EnssVeselic}
  U(t,s) = \exp\!\big(-\ii\,\varphi_1(t,s)x\big)
  \exp\!\left(\ii\,\frac{\varphi_2(t,s)}{\omega}\,p\right)
  \exp\!\big(-\ii(t-s)H_{\omega}-\ii\psi(t,s)\big)
\end{equation}
where
\begin{eqnarray}
  \label{eq:varphi12}
  \varphi_1(t,s) &=& \int_s^t \cos(\omega(t-u))f(u)\,\dd u,
  \nonumber\\
  \varphi_2(t,s) &=& \int_s^t \sin(\omega(t-u)) f(u)\,\dd u,
\end{eqnarray}
and
\begin{equation}
  \label{eq:psi_def}
  \psi(t,s) = \frac{1}{2} \int_s^t
  \left(\varphi_1(v,s)^2-\varphi_2(v,s)^2\right)\,\dd v
\end{equation}
(be aware of a sign error in the definition of $\psi(t,s)$ in
\cite{enss_veselic83:_bound_propag_states_time_dep_hamil}). Our goal
is to transform formula (\ref{eq:propag_EnssVeselic}) due to Enss and
Veseli\`c into another one expressing the propagator as a single
exponential function of an operator.
\begin{aprop}
  \label{thm:propag_ultimite}
  For $t,s\in\R$, $(t-s)\notin(2\pi/\omega)\Z$, set
  \begin{displaymath}
    N = \left[\frac{\omega(t-s)}{2\pi}\right]\in\Z,\textrm{~}
    \Delta = \frac{2\pi}{\omega}
    \left\{\frac{\omega(t-s)}{2\pi}\right\}\in   %%%[
    \Big]0,\frac{2\pi}{\omega}\Big[              %%%]
  \end{displaymath}
  (where $[x]$ and $\{x\}$ are respectively the integer part and the
  fractional part of $x$). Then it holds
  \begin{equation}
    \label{eq:U_eq_exp}
    U(t,s) = (-1)^N\exp\!\left(-\ii\,\Delta H_{\omega}
      +\ii\,\frac{\mu(t,s)}{\omega}\,p
      +\ii\,\nu(t,s)x+\ii\sigma(t,s)\right)
  \end{equation}
  where
  \begin{eqnarray}
    \label{eq:mu_nu}
    \mu(t,s) &=& \frac{\omega\Delta}{2\sin\!
      \left(\frac{\omega(t-s)}{2}\right)}
    \int_s^t \sin\!\left(\omega\!
      \left(\frac{t+s}{2}-u\right)\right)f(u)\,\dd u,
    \nonumber\\
    \nu(t,s) &=& -\frac{\omega\Delta}{2\sin\!
      \left(\frac{\omega(t-s)}{2}\right)}
    \int_s^t \cos\!\left(\omega\!
      \left(\frac{t+s}{2}-u\right)\right)f(u)\,\dd u,
  \end{eqnarray}
  and
  \begin{eqnarray}
    \label{eq:sigma}
    \sigma(t,s) &=& -\frac{1}{2}
    \int_s^t \left(\varphi_1(v,s)^2-\varphi_2(v,s)^2\right)\,\dd v
    +\frac{1}{2\omega}\,\varphi_1(t,s)\varphi_2(t,s)
    \nonumber\\
    && -\,\frac{\omega\Delta-\sin(\omega(t-s))}{8\omega
      \sin\!\left(\frac{\omega(t-s)}{2}\right)^{\!2}}
    \left(\varphi_1(t,s)^2+\varphi_2(t,s)^2\right).
  \end{eqnarray}
\end{aprop}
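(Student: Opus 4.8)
\section*{Proof proposal for Proposition~\ref{thm:propag_ultimite}}

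The plan is to transform the Enss--Veseli\`c formula (\ref{eq:propag_EnssVeselic}) into the single-exponential form (\ref{eq:U_eq_exp}) by first rearranging the product of exponentials so that Lemma~\ref{thm:expp_expx_expH_eq_exp} becomes directly applicable, and then collecting all the scalar phases that accumulate along the way. First I would set up the reduction modulo the period: writing $t-s=(2\pi/\omega)N+\Delta$ with $N$ and $\Delta$ as in the statement, the spectral decomposition of $H_\omega$ (eigenvalues $\omega(n+\tfrac12)$, $n\in\Z_+$) gives $e^{-\ii(2\pi/\omega)H_\omega}=-\I$, hence $e^{-\ii(t-s)H_\omega}=(-1)^N e^{-\ii\Delta H_\omega}$. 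I would also record the elementary identities $\sin(\omega(t-s)/2)=(-1)^N\sin(\omega\Delta/2)$, $\cos(\omega(t-s)/2)=(-1)^N\cos(\omega\Delta/2)$ and $\sin(\omega(t-s))=\sin(\omega\Delta)$, which at the end let one rewrite the output of Lemma~\ref{thm:expp_expx_expH_eq_exp} (naturally expressed in terms of $\Delta$) in terms of $t-s$ as in (\ref{eq:mu_nu}) and (\ref{eq:sigma}). Note that the hypothesis $(t-s)\notin(2\pi/\omega)\Z$ is precisely what guarantees $\Delta\in\,]0,2\pi/\omega[$, so that $\sin(\omega\Delta/2)\neq0$ (the cotangent appearing in Lemma~\ref{thm:expp_expx_expH_eq_exp} is defined) and $|\Delta|<2\pi/\omega$ (so Lemma~\ref{thm:expp_expx_expH_eq_exp} applies with its variable $t$ replaced by $\Delta$).

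Next I would swap the first two exponentials in (\ref{eq:propag_EnssVeselic}). From $e^{\ii\beta p}x e^{-\ii\beta p}=x+\beta$ one gets the Weyl-type relation $e^{\ii\alpha x}e^{\ii\beta p}=e^{-\ii\alpha\beta}e^{\ii\beta p}e^{\ii\alpha x}$, so with $\alpha=-\varphi_1(t,s)$, $\beta=\varphi_2(t,s)/\omega$, together with the splitting $e^{-\ii(t-s)H_\omega-\ii\psi(t,s)}=(-1)^N e^{-\ii\Delta H_\omega}e^{-\ii\psi(t,s)}$ from the previous paragraph, one obtains
\[
U(t,s) = (-1)^N e^{\ii\varphi_1(t,s)\varphi_2(t,s)/\omega}
\exp\!\Big(\ii\,\frac{\varphi_2(t,s)}{\omega}\,p\Big)
\exp\!\big(-\ii\,\varphi_1(t,s)\,x\big)\,
e^{-\ii\Delta H_\omega}\,e^{-\ii\psi(t,s)}.
\]
Now Lemma~\ref{thm:expp_expx_expH_eq_exp}, applied with $\xi=\varphi_2(t,s)$, $\eta=-\varphi_1(t,s)$ and $\Delta$ in place of $t$, turns the three factors between the scalars into $e^{\ii\phi}\exp\big(-\ii\Delta H_\omega+\ii(\mu/\omega)p+\ii\nu x\big)$ with $\mu=\tfrac{\omega\Delta}{2}\big(\cot(\omega\Delta/2)\varphi_2(t,s)-\varphi_1(t,s)\big)$, $\nu=-\tfrac{\omega\Delta}{2}\big(\varphi_2(t,s)+\cot(\omega\Delta/2)\varphi_1(t,s)\big)$, and $\phi$ given by (\ref{eq:phi_eq_xieta}).

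Finally I would do the bookkeeping. Since $\omega(t-s)/2=\pi N+\omega\Delta/2$, the common factor $(-1)^N$ in the sine and cosine cancels in the ratio, so $\cot(\omega\Delta/2)=\cot(\omega(t-s)/2)$; then the angle-addition formula turns $\cot(\omega(t-s)/2)\varphi_2(t,s)-\varphi_1(t,s)$ into $\big(\sin(\omega(t-s)/2)\big)^{-1}\int_s^t\sin\!\big(\omega((t+s)/2-u)\big)f(u)\,\dd u$, and likewise for $\nu$, yielding exactly (\ref{eq:mu_nu}). For the phase, combining the three scalars $e^{\ii\varphi_1\varphi_2/\omega}$, $e^{\ii\phi}$ and $e^{-\ii\psi(t,s)}$, inserting $\psi(t,s)$ from (\ref{eq:psi_def}) and $\phi$ from (\ref{eq:phi_eq_xieta}), the two $\varphi_1\varphi_2/\omega$ contributions add to $\tfrac{1}{2\omega}\varphi_1(t,s)\varphi_2(t,s)$, and replacing $\sin^2(\omega\Delta/2)$ and $\sin(\omega\Delta)$ by $\sin^2(\omega(t-s)/2)$ and $\sin(\omega(t-s))$ gives precisely $\sigma(t,s)$ as in (\ref{eq:sigma}). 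I expect the main obstacle to be nothing conceptual but the careful tracking of the scalar phases and of the signs $(-1)^N$, making sure they cancel so that only a single overall factor $(-1)^N$ survives; one should also keep in mind throughout that all the rearrangements involve only bounded unitary operators, so no domain issues arise beyond those already settled in Lemma~\ref{thm:expp_expx_expH_eq_exp}.
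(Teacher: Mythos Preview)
Your proposal is correct and follows essentially the same route as the paper: reduce $e^{-\ii(t-s)H_\omega}$ to $(-1)^N e^{-\ii\Delta H_\omega}$ via the spectrum of $H_\omega$, apply Lemma~\ref{thm:expp_expx_expH_eq_exp} with $\xi=\varphi_2$, $\eta=-\varphi_1$ and $t$ replaced by $\Delta$, and then simplify $\mu$, $\nu$ and the accumulated phase into the forms (\ref{eq:mu_nu}) and (\ref{eq:sigma}). The only difference is that you make the Weyl commutation $e^{-\ii\varphi_1 x}e^{\ii(\varphi_2/\omega)p}=e^{\ii\varphi_1\varphi_2/\omega}e^{\ii(\varphi_2/\omega)p}e^{-\ii\varphi_1 x}$ explicit before invoking the lemma, whereas the paper absorbs this step silently into its formula for $\phi(t,s)$; your bookkeeping of the two $\varphi_1\varphi_2/\omega$ contributions and of the $(-1)^N$ factors in passing from $\Delta$ to $t-s$ is exactly what is needed to close the argument.
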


\begin{proof}
  We have $t-s=(2\pi/\omega)N+\Delta$. Since the spectrum of
  $H_\omega$ equals $(\omega/2)+\omega\Z_+$ it holds
  \begin{displaymath}
    \exp\!\left(-\ii\,\frac{2\pi}{\omega}\,NH_\omega\right)
    = (-1)^N.
  \end{displaymath}
  Combining~(\ref{eq:propag_EnssVeselic}) with
  Lemma~\ref{thm:expp_expx_expH_eq_exp} we get
  \begin{eqnarray*}
    U(t,s) &=& (-1)^N\exp\!\big(-\ii\,\varphi_1(t,s)x\big)
    \exp\!\left(\ii\,\frac{\varphi_2(t,s)}{\omega}\,p\right)
    \exp\!\big(-\ii\Delta H_{\omega}-\ii\psi(t,s)\big) \\
    &=& (-1)^Ne^{\ii\phi(t,s)}
    \exp\!\left(-\ii\Delta H_\omega+\ii\,\frac{\mu(t,s)}{\omega}\,p
      +\ii\nu(t,s)x\right)
  \end{eqnarray*}
  where
  \begin{eqnarray*}
    \mu(t,s) &=& \frac{\omega\Delta}{2}\left(
      \cot\!\left(\frac{\omega\Delta}{2}\right)\varphi_2(t,s)
      -\varphi_1(t,s)\right), \\
    \nu(t,s) &=& \frac{\omega\Delta}{2}\left(\varphi_2(t,s)
      +\cot\!\left(\frac{\omega\Delta}{2}\right)\varphi_1(t,s)\right),
  \end{eqnarray*}
  and
  \begin{displaymath}
    \phi(t,s) = \frac{1}{2\omega}\,\varphi_1(t,s)\varphi_2(t,s)
    -\frac{\omega\Delta-\sin(\omega\Delta)}{8\omega
      \sin\!\left(\frac{\omega\Delta}{2}\right)^2}
    \left(\varphi_1(t,s)^2+\varphi_2(t,s)^2\right)-\psi(t,s).
  \end{displaymath}
  After some simple manipulations one arrives at the desired formula
  (\ref{eq:U_eq_exp}).
\end{proof}

\newpage

\end{document}